\newcommand{\qedhere}{\qed}
\newcommand{\fullversion}[1]{{#1}}
\newcommand{\shortversion}[1]{}
\newcommand{\bk}[1]{{\color{Purple}{#1}}}
\newcommand{\todo}[1]{{\color{Red}{TODO: #1}}}
\newcommand{\pw}[1]{{#1}}
\newcommand{\hhl}[1]{{#1}}
\newcommand{\fhl}[1]{{#1}}
\newcommand{\hhldel}[1]{}
\newcommand{\tableemph}[1]{{\color{red}{$\mathbf{#1}$}}}
\newcommand{\runtitle}[1]{\textbf{#1}}
\newcommand{\hide}[1]{}
\newcommand{\hideproof}[1]{}
\newcommand{\cost}{cost}
\newcommand{\optimal}{\text{OPT}}
\newcommand{\AEalgo}{(\alpha,\beta)\text{-SORT}} 
\newcommand{\ouralgo}{(\alpha, \beta)\text{-PCP}}
\newcommand{\randalgo}{\beta\text{-RPCP}}
\newcommand{\paralgo}{(\alpha,\beta)\text{-PCPM}}
\newcommand{\randparalgo}{\beta\text{-RPCPM}}
\newcommand{\opt}[1]{{#1^*}}
\newcommand{\alg}[1]{{#1^A}}
\newcommand{\pr}{\textsc{SUP}}
\newcommand{\prob}{\mathds{P}}
\newcommand{\expect}{\mathds{E}}
\newcommand{\factorOne}{R}
\newcommand{\factorTwo}{r}
\newcommand{\commenttex}[1]{}
\begin{document}
\title{The Power of Amortization\\on \pw{Minimizing Total Completion Time} with Explorable Uncertainty\thanks{\pw{Preliminary versions of this paper appeared in WAOA 2023~\cite{DBLP:conf/waoa/LiuLWZ23} and MAPSP 2024~\cite{MAPSP2024}.}}}
\titlerunning{\pw{The Power of Amortization on Explorable Uncertainty}}
%
\author{Bob Krekelberg\inst{1}\orcidID{0009-0000-5517-6095} \and
Alison Hsiang-Hsuan Liu\inst{1}\orcidID{0000-0002-0194-9360} \and
Fu-Hong Liu\orcidID{0000-0001-6073-8179} \and
Prudence W.H. Wong\inst{2}\orcidID{0000-0001-7935-7245}\thanks{The work is partially supported by University of Liverpool Covid Recovery Fund and Royal Society International Exchanges.} \and
Xiao-Ou Zhang\inst{1}}
%
\authorrunning{B. Krekelberg et al.}
%
\institute{Utrecht University, the Netherlands \and
University of Liverpool, United Kingdom}
%
\maketitle              

\begin{abstract}

We study online scheduling to minimize total completion time with explorable uncertainty on single and multiple machines.
Each job comes with an upper limit of its processing time, 
which could be potentially reduced by testing the job, which also takes time.
The objective is to schedule all jobs with minimum total completion time.
The challenge lies in deciding which jobs to test, the order of testing/processing jobs, and in multiple machine case which machine a job is allocated to.
In multiple machine case, testing and processing of a job are allowed to be scheduled on different machines.

\hide{
The online problem was first introduced on a single machine with unit testing time~\cite{DBLP:conf/innovations/DurrEMM18,DBLP:journals/algorithmica/DurrEMM20}
and later generalized to variable testing times~\cite{DBLP:conf/waoa/AlbersE20}.
For this general setting, the upper bounds of the competitive ratio are shown to be
$4$ and $3.3794$ for deterministic and randomized online algorithms~\cite{DBLP:conf/waoa/AlbersE20}, respectively;
while the lower bounds for unit testing time stands~\cite{DBLP:conf/innovations/DurrEMM18,DBLP:journals/algorithmica/DurrEMM20}, 
which are $1.8546$ (deterministic) and $1.6257$ (randomized).
Recent work has extended to multiple machines~\cite{DBLP:journals/algorithmica/GongCH23}.
}

Different settings have been studied before.
In this work, we first consider the variable testing times setting.
We enhance the analysis framework in Albers and Eckl (2020)
and improve the analysis of the competitive ratio of their deterministic single machine algorithm 
from $4$ to $1+\sqrt{2} \approx 2.4143$.
Using the new analysis framework, we propose a new deterministic algorithm that further improves the competitive ratio to $2.316513$.
The new framework also enables us to develop a randomized algorithm improving the expected competitive ratio from $3.3794$ to $2.152271$.
We further show that with $m$~machines, by extending the framework of Gong et al. (2024),
there exists a deterministic $2.77629-(0.45977/m)$-competitive algorithm and a randomized $2.51098-(0.3587/m)$-competitive algorithm. 
The performance of the algorithms on multiple machines when $m = 1$ matches the current best algorithms on a single machine for variable testing times shown in this paper.

\hide{
In this work, we study a scheduling problem with explorable uncertainty. 
In this model, jobs are available initially with upper limits of their (real) processing times. 
The processing time of a job can be reduced potentially by testing the job, which takes a testing time that may vary according to the job.
The job can be otherwise executed untested, which takes the upper limit of its processing time. 
The algorithm does not know the real processing time before testing the job. 
On the other hand, an optimal offline algorithm knows how much the jobs' processing time will be reduced before testing.
However, the optimal schedule also needs to test a job in order to execute it in its real processing time.
Therefore, the challenge to the algorithm is to decide which jobs to be tested and how to order the tasks.
In this paper, we focus on the objective of minimizing the total completion times on a single machine.

The problem was first introduced by D{\"u}rr et al.~\cite{DBLP:conf/innovations/DurrEMM18,DBLP:journals/algorithmica/DurrEMM20}. 
In the work, the authors considered the case where all jobs have unit testing times. 
They proposed a $2$-competitive deterministic online algorithm and an expected $1.7453$-competitive randomized algorithm.
They also showed that there is no deterministic online algorithm better than $1.8546$-competitive or randomized online algorithm better than expected $1.6257$-competitive.
Later, Albers and Eckl studied a more general setting where the jobs have variable testing times~\cite{DBLP:conf/waoa/AlbersE20}. 
The authors proposed a $4$-competitive deterministic algorithm and an expected $3.3794$-competitive randomized algorithm. 
Furthermore, they provided a $2\phi \approx 3.2361$-competitive deterministic preemptive algorithm.
In this more general setting, the lower bound of the competitive ratios for the deterministic and randomized are still $1.8546$ and $1.6257$, respectively~\cite{DBLP:conf/innovations/DurrEMM18,DBLP:journals/algorithmica/DurrEMM20}.

In this work, we study the variable testing time version. 
We first improve the analysis framework and improve the competitive ratio of the algorithm proposed in the work of Albers and Eckl~\cite{DBLP:conf/waoa/AlbersE20} from $4$ to $1+\sqrt{2} \approx 2.4143$.
Built upon the new analysis framework, we propose a new algorithm with a further improved competitive ratio of $2.316513$.
The new framework also helps improve the expected competitive ratios of the randomized algorithms from $3.3794$ to $2.152271$. 
\fhl{(HL: Move these paragraphs to Introduction?)}
}

\hide{
In this work, we study a scheduling problem with explorable uncertainty. Each job comes with an upper limit of its processing time, which could be potentially reduced by testing the job, which also takes time. The objective is to schedule all jobs on a single machine with a minimum total completion time. The challenge lies in deciding which jobs to test and the order of testing/processing jobs.

The online problem was first introduced with unit testing time by D{\"}rr et al. and later generalized to variable testing times by Albers and Eckl. For this general setting, the upper bounds of the competitive ratio are shown to be 4 and 3.3794 for deterministic and randomized online algorithms, while the lower bounds for unit testing time stand, which are 1.8546 (deterministic) and 1.6257 (randomized).

We continue the study on variable testing times setting. We first enhance the analysis framework in the work of Albers and Eckl and improve the competitive ratio of the deterministic algorithm in the work from 4 to 1+sqrt(2), approximately 2.4143. Using the new analysis framework, we propose a new deterministic algorithm that further improves the competitive ratio to 2.316513. The new framework also enables us to develop a randomized algorithm improving the expected competitive ratio from 3.3794 to 2.152271.
}

\end{abstract}

\section{Introduction}



\subsection{Background and motivation}\label{sec:background}

In this work, we study the \emph{Scheduling with Uncertain Processing time} (\pr) problem with the minimized total completion time objective.
We are given $n$ jobs, where each job has a \emph{testing time} $t_j$ and an \emph{upper limit} $u_j$ of its \emph{real processing time} $p_j \in [0,u_j]$.
A job $j$ can be executed {(without testing)}, taking $u_j$ time units. 
A job $j$ can also be tested using $t_j$ time units, and after it is tested, it takes $p_j$ time to execute.
Note that any algorithm needs to test a job $j$ beforehand to run it in time $p_j$.
The online algorithm does not know the exact value of $p_j$ unless it tests the job. 
On the other hand, the optimal offline algorithm knows in advance each $p_j$ even before testing. 
Therefore, the optimal strategy is to test job $j$ if and only if $t_j+p_j \leq u_j$ and execute the shortest job first, where the processing time of a job $j$ is $\min\{t_j+p_j, u_j\}$~\cite{DBLP:conf/waoa/AlbersE20,DBLP:conf/innovations/DurrEMM18,DBLP:journals/algorithmica/DurrEMM20}.
However, since the online algorithm only learns about $p_j$ after testing $j$, the challenge to the online algorithm is to decide which jobs to test and the order of tasks that could be testing, execution, or execution-untested.

\medskip

It is typical to study uncertainty in scheduling problems, for example,
in the worst case scenario for online or stochastic optimization.
Kahan~\cite{DBLP:conf/stoc/Kahan91} has introduced a novel notion of explorable uncertainty
where queries can be used to obtain additional information with a cost.
The model of scheduling with explorable uncertainty studied in this paper was introduced by D{\"u}rr et al. recently~\cite{DBLP:conf/innovations/DurrEMM18,DBLP:journals/algorithmica/DurrEMM20}.
In this model, job processing times are uncertain in the sense that only an upper limit of the processing time is known, and can be reduced potentially by testing the job, which takes a testing time that may vary according to the job.
An online algorithm does not know the real processing time before testing the job, whereas an optimal offline algorithm has the full knowledge of the uncertain data.

One of the motivations to study scheduling with uncertain processing time is clinic scheduling~\cite{DBLP:conf/aiia/CarusoGMMP21,DBLP:conf/data/LopesVVFSS23}. Without a pre-diagnosis, it is safer to assign each treatment the maximum time it may need.
With pre-diagnosis, the precise time a patient needs can be identified, which can improve the performance of the scheduling. 
Other applications are, 
as mentioned in~\cite{DBLP:conf/innovations/DurrEMM18,DBLP:journals/algorithmica/DurrEMM20},
code optimization~\cite{book/CardosoDC17}, compression for file transmission over network~\cite{DBLP:journals/sigops/WisemanSW05},
fault diagnosis in maintenance environments~\cite{inbook/NicolaiD08}.
Application in distributed databases with centralized master server~\cite{DBLP:conf/vldb/OlstonW00} is also discussed in~\cite{DBLP:conf/waoa/AlbersE20}.

\medskip

In addition to its practical motivations, the model of explorable uncertainty also blurs the line between offline and online problems by allowing a restricted uncertain input. It enables us to investigate how uncertainty influences online decision quality in a more quantitative way.
The concept of exploring uncertainty has raised a lot of attention and has been studied on different problems, such as sorting~\cite{DBLP:journals/tcs/HalldorssonL21}, finding the median~\cite{DBLP:conf/stoc/FederMPOW00}, identifying a set with the minimum-weight among a given collection of feasible sets~\cite{DBLP:journals/tcs/Erlebach0K16}, finding shortest paths~\cite{DBLP:journals/jal/FederMOOP07}, computing minimum spanning trees~\cite{DBLP:conf/stacs/HoffmannEKMR08}, etc. 
More recent work and a survey can be found in~\cite{DBLP:journals/jal/FederMOOP07,DBLP:journals/mst/GuptaSS16,DBLP:journals/eatcs/Erlebach015}.
Note that in many of the works, the aim of the algorithm is to find the optimal solution with the minimum number of testings for the uncertain input, comparing against the optimal number of testings. 

Another closely related model is Pandora's box problem~\cite{dc603527-709e-3110-a352-6b21b157c70c,DBLP:conf/aaai/EsfandiariHLM19,DBLP:conf/soda/DingFHTX23}, which was based on the secretary problem, that was first proposed by Weitzman~\cite{dc603527-709e-3110-a352-6b21b157c70c}. In this problem, each candidate (that is, the box) has an independent probability distribution for the reward value. To know the exact reward a candidate can provide, one can open the box and learn its realized reward. 
More specifically, at any time, an algorithm can either open a box, or select a candidate and terminate the game. 
However, opening a box costs a price. The goal of the algorithm is to maximize the reward from the selected candidate minus the total cost of opening boxes.
The Pandora's box problem is a foundational framework for studying how the cost of revealing uncertainty affects the decision quality. 
More importantly, it suggests what information to acquire next after gaining some pieces of information.


\subsection{Previous works}\label{sec:previous_work}

\runtitle{Single machine.}
For the $\pr$ problem, D{\"u}rr et al. studied the case where all jobs have the same testing time~\cite{DBLP:conf/innovations/DurrEMM18,DBLP:journals/algorithmica/DurrEMM20}. 
In the paper, the authors proposed a \textsc{Threshold} algorithm for the special instances. 
For the competitive analysis, the authors proposed a delicate \emph{instance-reduction} framework. 
Using this framework, the authors showed that the worst case instance of \textsc{Threshold} has a special format. An upper bound of the competitive ratio of $2$ of \textsc{Threshold} is obtained by the ratio of the special format instance.
Using the instance-reduction framework, the authors also showed that when all jobs have the same testing time and the same upper limit, there exists a $1.9338$-competitive \textsc{Beat} algorithm.
The authors provided a lower bound of $1.8546$ for any deterministic online algorithm. 
For randomized algorithms, the authors showed that the expected competitive ratio is between $1.6257$ and $1.7453$.


Later, Albers and Eckl~\cite{DBLP:conf/waoa/AlbersE20} studied a more general case where jobs have variable testing time. 
The authors proposed a classic and elegant framework where the completion time of an algorithm is divided into contribution segments by the jobs executed prior to it. 
For the jobs with ``correct'' execution order as they are in the optimal solution, their total contribution to the total completion time is charged to twice the optimal cost by the fact that the algorithm does not pay too much for wrong decisions of testing a job or not.
For the jobs with ``wrong'' execution order, their total contribution to the total completion time is charged to another twice the optimal cost using a \emph{comparison tree} method, which is bound with the proposed $\AEalgo$ algorithm.
The authors showed that for $\alpha=1$ and $\beta=1$, the algorithm is at most $4$-competitive.
The authors also provided a preemptive $3.2361$-competitive algorithm and an expected $3.3794$-competitive randomized algorithm.
Furthermore, the authors showed that the deterministic lower bound of 1.8546 in~\cite{DBLP:conf/innovations/DurrEMM18,DBLP:journals/algorithmica/DurrEMM20} also holds in the preemptive case.

\runtitle{Multiple machines.}
Gong et al.~\cite{DBLP:journals/algorithmica/GongCH23} extended the framework introduced by Albers and Eckl~\cite{DBLP:conf/waoa/AlbersE20} to the setting of $m$ multiple machines.
Their approach is applicable to both the multiple and single-machine cases.
For the case in which testing times are unit,
they proposed an algorithm whose competitive ratio for $m\geq 5$ is bounded  by  $\phi + \frac{\phi + 1}{2} \cdot (1 - \frac{1}{m})$, where $\phi = \frac{1+\sqrt{5}}{2}$ is the golden ratio.
This ratio converges to 2.9271 as $m$ approaches infinity.
For $m < 5$, the algorithm has a competitive ratio of $\phi+1<2.6181$.
When testing times are variable, they introduced an algorithm with a competitive ratio of $2\phi < 3.2361$.
Moreover, they developed a preemptive algorithm that achieves a competitive ratio of $3$ on one or two machines, and a ratio of $3.5 - \frac{3}{2m}$ for $m\geq 3$.
It is noteworthy that this latter result is outperformed by the non-preemptive $2\phi$-competitive algorithm when $m\geq 6$.
For randomized algorithms, Gong et al. proposed an algorithm with a competitive ratio of $2.8307 - \frac{1.3962}{m}$ when $m\geq 37$, and a ratio of 2.7925 otherwise.
A summary of previous results for single and multiple machines can be found in Table~\ref{tb:results}.




\runtitle{Other settings.}
The $\pr$ problem has also been studied with other objective, namely minimizing maximum completion time, in single machine~\cite{DBLP:conf/waoa/AlbersE20,DBLP:conf/innovations/DurrEMM18,DBLP:journals/algorithmica/DurrEMM20}, multiple machines~\cite{DBLP:conf/wads/AlbersE21,DBLP:conf/faw/GongL21,DBLP:journals/jco/GongGLM22}. Recently, obligatory testing has also been studied~\cite{DBLP:conf/esa/DogeasEL24}.

\hide{
In the works~\cite{DBLP:conf/waoa/AlbersE20,DBLP:conf/innovations/DurrEMM18,DBLP:journals/algorithmica/DurrEMM20}, the objective of minimizing the maximum completion time on a single machine was also studied. 
For the uniform-testing-time setting, D{\"u}rr et al.~\cite{DBLP:conf/innovations/DurrEMM18,DBLP:journals/algorithmica/DurrEMM20} proposed a $\phi$-competitive deterministic algorithm and a $\frac{4}{3}$-competitive randomized algorithm, where both algorithms are optimal. 
For a more general setting, Albers and Eckl~\cite{DBLP:conf/waoa/AlbersE20} showed that variable testing time does not increase the competitive ratios of online algorithms.

\bk{

Later, Albers and Eckl~\cite{DBLP:conf/wads/AlbersE21} extended their results to the setting of multiple machines.
They proposed a preemptive algorithm that is $2$-competitive for both uniform and variable testing times, along with a lower bound of $\max\{ \phi,2-\frac{1}{m} \}$.
Additionally, they proposed a non-preemptive algorithm that, when $m$ approaches infinity, is $3$-competitive in the case of uniform testing times and $3.1016$-competitive for variable testing times.
For this setting, they also provided a lower bound of $\max\{ \phi,2-\frac{1}{m} + \frac{1}{m^2} \}$.

The non-preemptive algorithm was further improved by Gong et al.~\cite{DBLP:conf/faw/GongL21,DBLP:journals/jco/GongGLM22}.
Their algorithm achieves a competitive ratio, when $m$ approaches infinity, of $2.8081$ for uniform testing times, and $2.9513$ for variable testing times.

}

\bk{Recently, Dogeas, Erlebach and Liang~\cite{DBLP:conf/esa/DogeasEL24} studied the special case in which testing of all jobs is obligatory, where the objective is to minimize the total completion time.
In this setting, they considered a natural adaptation of the $\AEalgo$ originally proposed by Albers and Eckl.
They showed that the competitive ratio of this $1$-SORT algorithm is at most $1.861$.
Furthermore, for uniform testing times, they proposed a threshold-based algorithm that achieves a competitive ratio of at most $1.585$.
Finally, they proved that no deterministic algorithm can achieve a competitive ratio below $\sqrt{2}$.}
}



\subsection{Our contributions}\label{sec:contributions}
Our main contribution is to enhance the analysis framework of algorithms for the $\pr$ problem using amortization.
The enhanced framework, applicable in both single and multiple machine settings, allows us to improve the analysis of existing algorithm and develop new algorithms with better performance.

We first analyze the $\AEalgo$ algorithm proposed in the work~\cite{DBLP:conf/waoa/AlbersE20} in a more amortized sense.
Instead of charging the jobs in the correct order and in the wrong order to the optimal cost separately, we manage to partition the tasks into groups and charge the total cost in each of the groups to the optimal cost regarding the group.
The introduction of amortization to the analysis creates room for improving the competitive ratio by adjusting the values of $\alpha$ and $\beta$. 
The possibility of picking $\alpha > 1$ helps balance the penalty incurred by making a wrong guess on testing a job or not.
On the other hand, the room for different $\beta$ values allows one to differently prioritize the tasks that provide extra information and the tasks that immediately decide a completion time for a job. 
By this new analysis and the room of choosing different values of $\alpha$ and $\beta$, we improve the upper bound of the competitive ratio of $\AEalgo$ from $4$ to $1+\sqrt{2}$. 

With the power of amortization, we improve the algorithm by further prioritizing different tasks using different parameters. 
We show that our new algorithm, called $\ouralgo$, is $2.316513$-competitive. 
This algorithm is extended to a randomized version, called $\randalgo$ (prefix R for randomized), with an expected competitive ratio of $2.152271$. 
We further show that under the current problem setting, preempting the execution of jobs does not help in gaining a better algorithm.

We further extend our analysis framework to multiple machines and combine with the technique developed in~\cite{DBLP:journals/algorithmica/GongCH23} to obtain a deterministic algorithm, called $\paralgo$ (suffix M for multiple machines), with competitive ratio $2.77629-(0.45977/m)$;
as well as a randomized algorithm, called $\randparalgo$, with expected competitive ratio $2.51098-(0.3587/m)$.
These results on variable testing times when $m=1$ matches the current best algorithms on a single machine.
A summary of our results and comparison with existing results can be found in Table~\ref{tb:results}.

\begin{table}[htbp]
\centering
\begin{tabular}{lllll}
\toprule
              & Testing      & Upper       & Upper       & Lower       \\
              & time         & limit       & bound       & bound       \\
\midrule
Det.
              & $1$         & Uniform     & $1.9338$~\cite{DBLP:conf/innovations/DurrEMM18,DBLP:journals/algorithmica/DurrEMM20} & $1.8546$~\cite{DBLP:conf/innovations/DurrEMM18,DBLP:journals/algorithmica/DurrEMM20} \\
\cmidrule(lr){3-4}
($m=1$)       &             & Variable    & $2$~\cite{DBLP:conf/innovations/DurrEMM18,DBLP:journals/algorithmica/DurrEMM20} & \\
\cmidrule(lr){2-4}
              & Variable    & Variable    & $4$~\cite{DBLP:conf/waoa/AlbersE20} \tableemph{\rightarrow 2.414} (Thm.~\ref{Thm:amortized}, $\AEalgo$) & \\
              &             &             & $3.2361$~\cite{DBLP:journals/algorithmica/GongCH23} & \\
              &             &             & \tableemph{2.316513} (Thm.~\ref{Thm:improved}, $\ouralgo$) & \\
\cmidrule(lr){4-4}
              &             & (Prmp.)     & $3$~\cite{DBLP:journals/algorithmica/GongCH23} & \cite{DBLP:conf/waoa/AlbersE20}\footnotemark \\
              &             &             & \tableemph{2.316513} (Thm.~\ref{Thm:improved}, $\ouralgo$) & \\
\cmidrule(lr){2-4}
($m \geq 2$)   & $1$         & Variable    & $(m\geq 5)$ $2.9271 - (1.3090/m)$~\cite{DBLP:journals/algorithmica/GongCH23} & \\
              &             &             & \tableemph{2.73606 - (0.5/m)} (Thm.~\ref{Thm:multi-p_uniform}) & \\
\cmidrule(lr){2-4}
              & Variable    & Variable    & $3.2361$~\cite{DBLP:journals/algorithmica/GongCH23} & \\
              &             &             & \tableemph{2.77629 - (0.45977/m)} (Thm.~\ref{Thm:multi-p_variable}, $\paralgo$) & \\
\cmidrule(lr){4-4}
              &             & (Prmp.)     & $(m\geq 6)$ 3.2361~\cite{DBLP:journals/algorithmica/GongCH23} & \\
\midrule
Rand.
              & $1$         & Variable    & $1.7453$~\cite{DBLP:conf/innovations/DurrEMM18,DBLP:journals/algorithmica/DurrEMM20} & $1.6257$~\cite{DBLP:conf/innovations/DurrEMM18,DBLP:journals/algorithmica/DurrEMM20} \\
\cmidrule(lr){2-4}
($m=1$)       & Variable    & Variable    & $2.7925$~\cite{DBLP:journals/algorithmica/GongCH23} & \\
              &             &             & \tableemph{2.152271} (Thm.~\ref{Thm:randomized}, $\randalgo$) & \\
\cmidrule(lr){2-4}
($m \geq 2$)   & Variable    & Variable    & ($m\geq 37$) $2.8307 - (1.3962/m)$~\cite{DBLP:journals/algorithmica/GongCH23} & \\
              &             &             & \tableemph{2.51098 - (0.3587/m)} (Thm.~\ref{Thm:randomized_parallel}, $\randparalgo$) & \\
\toprule
\end{tabular}
\caption{Summary of deterministic (Det.) and randomized (Rand.)  results with different settings on testing time and upper limit of processing time. The results from this work are in bold and red.}
\label{tb:results}
\end{table}
\footnotetext{Albers and Eckl proved that the lower bound of 1.8546 also holds in the preemptive case.}
\medskip

\runtitle{Organization of the paper.}
In Section~\ref{sec:prelim}, we introduce the notations used in this paper. We also review the algorithm and analysis of the $\AEalgo$ algorithm proposed in the work~\cite{DBLP:conf/waoa/AlbersE20}.
In Section~\ref{sec:det}, we elaborate on how amortized analysis helps to improve the competitive analysis of $\AEalgo$ (Section~\ref{subsec:amortization}).
Upon the new framework, we propose a better algorithm, $\ouralgo$ in Section~\ref{subsec:improved} and
show the tightness of our analysis in Section~\ref{sec:algorithm_lower_bound}.
In Section~\ref{subsec:preemption}, we argue that the power of preemption is limited in the current model.
In Section~\ref{sec:rand}, we show how amortization helps to improve the performance of randomized algorithms.
The extension to multiple machines is given in Section~\ref{sec:parallel}.
Finally, we conclude in Section~\ref{sec:conclusion}.
\shortversion{
For the sake of the page limit, we leave the proofs in the full version.
}

\section{\pw{Preliminaries}}\label{sec:prelim}

Given $n$ jobs $1, 2, \cdots, n$, each job $j$ has a \emph{testing time} $t_j$ and an \emph{upper limit} $u_j$ of its \emph{real processing time} $p_j \in [0,u_j]$.
A job $j$ can be executed-untested in $u_j$ time units or be tested using $t_j$ time units and then executed in $p_j$ time units.
The \emph{tasks} regarding a job $j$ are the testing, execution, or execution-untested of~$j$ (taking $t_j$, $p_j$, or $u_j$ time, respectively).
Note that if a job is tested, it does not need to be executed immediately.
{That is, for a tested job, there can be tasks regarding other jobs between its testing and its execution.}
Furthermore, its execution task can be assigned to different machines from the testing task on multiple machine case.

We denote by~$\alg{p_j}$ the time spent by an algorithm~$A$ on job~$j$, i.e., $\alg{p_j} = t_j+p_j$ if $A$ tests $j$, and $\alg{p_j} = u_j$ otherwise.
Let $\optimal$ denote the optimal algorithm.
Similarly, we denote by~$\opt{p_j}$ the time spent by~$\optimal$.
Since $\optimal$ knows $p_j$ in advance, it can decide optimally whether to test a job, i.e., $\opt{p_j}=\min\{u_j, t_j+p_j\}$, and execute the jobs in non-decreasing order of $\opt{p_j}$ when a machine is available.
We denote by $cost(A)$ the total completion time of any algorithm~$A$.

\hide{
In the \emph{Scheduling with Uncertain Processing time} (\pr) problem, we are given $n$ jobs $1, 2, \cdots, n$, where each job $j$ has a \emph{testing time} $t_j$ and an \emph{upper limit} $u_j$ of its \emph{real processing time} $p_j \in [0,u_j]$.
A job $j$ can be executed {(without testing)}, taking $u_j$ time units. 
A job $j$ can also be tested using $t_j$ time units, and after it is tested, it takes $p_j$ time to execute.
The objective of the problem is to schedule all jobs on a single machine such that the total completion time of all jobs is minimized.
{We denote by $cost(A)$ the total completion time of any algorithm~$A$.}

An online algorithm~$A$ does not know $p_j$ unless it tests the job.
On the other hand, an optimal offline algorithm~$\optimal$ knows in advance each $p_j$ even before testing, but it must first test the job in $t_j$ time before executing the job in $p_j$ time; hence it can minimize the processing time of job~$j$ using $\min\{u_j, t_j+p_j\}$~\cite{DBLP:conf/waoa/AlbersE20,DBLP:conf/innovations/DurrEMM18,DBLP:journals/algorithmica/DurrEMM20}.
We denote by~$\alg{p_j}$ the time spent by~$A$ on job~$j$, i.e., $\alg{p_j} = t_j+p_j$ if $A$ tests $j$, and $\alg{p_j} = u_j$ otherwise.
Similarly, we denote by~$\opt{p_j}$ the time spent by~$\optimal$.
Since $\optimal$ knows $p_j$ in advance, it can decide optimally whether to test a job, i.e., $\opt{p_j}=\min\{u_j, t_j+p_j\}$, and execute the jobs in non-decreasing order of $\opt{p_j}$.
}

\hide{
Given $n$ jobs $1, 2, \cdots, n$,  
The \emph{real processing time} $p_j \in [0, u_j]$ of job $j$ is known only after it is tested.
\pw{We consider the case when all jobs are available to be executed at time~$0$. 
An online algorithm only know $u_j$ and $t_j$ but not $p_j$ unless the algorithm actually tests the job $j$.}
Without being tested, a job $j$ takes $u_j$ time to process. On the other hand, it takes $p_j$ time to process $j$ after testing it, which takes $t_j$ time, \fhl{resulting in $t_j + p_j$ total time for job $j$}.
, we aim to schedule the given $n$ jobs on a single machine such that the total completion time of the jobs is minimized. 

\smallskip

We compare our algorithm against an optimal schedule that knows every $p_j$ before testing $j$. 
Note that the real processing time of a job only applies after it is tested. Therefore, even the optimal schedule has to test a job so that it can be processed \pw{in} $p_j$ time.

\smallskip

If an algorithm $A$ tests job $j$ (and then executes it), we denote the time it spends on the job $j$ as $\alg{p_j} = t_j+p_j$ if $A$ tests $j$. Otherwise, $\alg{p_j} = u_j$ if $A$ executes $j$ untested. 
Similarly, we denote $\opt{p_j}$ as the time an optimal schedule spends on the job $j$. As stated in~\cite{DBLP:conf/waoa/AlbersE20,DBLP:conf/innovations/DurrEMM18,DBLP:journals/algorithmica/DurrEMM20}, $\opt{p_j} = \min\{u_j, t_j + p_j\}$ for all $j$. 

\smallskip
}

We follow the notation in the work of Albers and Eckl~\cite{DBLP:conf/waoa/AlbersE20} and denote by $c(k,j)$ the \emph{contribution} of job $k$ in the completion time of job $j$ in the online schedule $A$. 
That is, $c(k,j)$ is the total time of the tasks regarding job $k$ before the completion time of job $j$. 
The \emph{completion time} of job $j$ in the schedule $A$, denoted by $c_j$, is then $\sum_{k=1}^n c(k,j)$.
Similarly, we define $\opt{c}(k,j)$ as the contribution of job $k$ in the completion time of job $j$ in the optimal schedule.
\hide{\hhl{Since the optimal algorithm knows $p_j$ in advance, it can decide optimally whether to test a job and execute the jobs in the shortest $\opt{p_j}$ processing time first order.}}
{As observed, $\optimal$ schedules in non-decreasing order of $p^*$,} $\opt{c}(k,j) = 0$ if $k$ is executed after $j$ in the optimal schedule, and $\opt{c}(k,j) = \opt{p_k}$ otherwise. 


We denote by $i <_o j$ if the optimal schedule executes job $i$ before job $j$. 
We also define $i >_o j$ and $i =_o j$ similarly (in the latter case, job $i$ and job $j$ are the same job).
On a single machine, the completion time of job $j$ in the optimal schedule is denoted by $\opt{c_{j}} = \sum_{i\leq_o j} \opt{p_{i}}$.
The total completion time of the optimal schedule is then $\sum_{j=1}^n \opt{c_j}$. Note that there is an optimal strategy where $\opt{p_i} \leq \opt{p_j}$ if $i \leq_o j$.


\subsection{Review of $\AEalgo$ algorithm~\cite{DBLP:conf/waoa/AlbersE20} on a single machine}
\label{subsec:review}
For completeness, we summarize the $\AEalgo$ algorithm and its analysis proposed in {the work of Albers and Eckl}~\cite{DBLP:conf/waoa/AlbersE20}.

\smallskip

Intuitively, the algorithm tests a job $j$ if and only if $u_j\geq \alpha\cdot t_j$.
Depending on whether a job is tested or not, the job is transformed into one task ({execution-untested} task) or two tasks (testing task and execution task) with weights.
These tasks are then maintained in a priority queue for the algorithm to decide their processing order (tasks with smaller weight have higher priority).
More specifically, a testing task has a weight of $\beta\cdot t_j$, an execution task has a weight of $p_j$, and an {execution-untested} task has a weight of $u_j$. (See Algorithm~\ref{Alg:Susanne}.)
After all initial tasks (namely, all testing tasks for tested jobs and all execution-untested tasks for untested jobs are inserted into the priority queue, the algorithm executes the tasks in the queue and deletes the executed tasks, starting from the task with the shortest (weighted) time. 
If the task is a testing of a job $j$, the resulting $p_j$ is inserted as an execution task into the queue after testing. (See Algorithm~\ref{Alg:queue}.)
Intuitively, both $\alpha$ and $\beta$ are at least $1$. The precise values of $\alpha$ and $\beta$ will be decided later based on the analysis.

\begin{algorithm}[t]
    \caption{$\AEalgo$ algorithm~\cite{DBLP:conf/waoa/AlbersE20}}
    \label{Alg:Susanne}
    \begin{algorithmic}
        \State Initialize a priority queue $Q$ where tasks with smaller weight have higher priority
        \For{$j = 1, 2, 3, \cdots, n$}
            \If{$u_j\geq \alpha\cdot t_j$}
                \State Insert a testing task with weight $\beta\cdot t_j$ into $Q$
            \Else
                \State Insert an execution-untested task with weight $u_j$ into $Q$
            \EndIf
        \EndFor
    \State \textbf{Queue-Execution}$(Q)$ \Comment{See Algorithm~\ref{Alg:queue}}
    \end{algorithmic}
\end{algorithm}

\begin{algorithm}[t]
\begin{algorithmic}
    \Procedure{Queue-Execution}{$Q$}
        \While{$Q$ is not empty}
            \State $x \gets$ Extract the smallest-weight task in $Q$
            \If{$x$ is a testing task for a job $j$}
                \State {Test job $j$}
                \Comment{It takes $t_j$ time}
                \State {Insert an execution task with weight $p_j$ into $Q$}
            \ElsIf{$x$ is an execution task for a job $j$}
                \State Execute (tested) job $j$
                \Comment{It takes $p_j$ time}
            \Else \Comment{$x$ is an execution-untested task for a job $j$}
                \State Execute job $j$ untested
                \Comment{It takes $u_j$ time}
            \EndIf
        \EndWhile
    \EndProcedure
\caption{Procedure \textbf{Queue-Execution} $(Q)$}
\label{Alg:queue}
\end{algorithmic}
\end{algorithm}

\runtitle{Analysis~\cite{DBLP:conf/waoa/AlbersE20}.}
Recall that $c(k,j)$ is the contribution of job $k$ of the completion time of job $j$, and the completion time of job $j$ is $\alg{c_j} = \sum_{k = 1}^n c(k, j)$. 
The key idea of the analysis is that given job $j$, partitioning the jobs (say, $k$) that are executed before $j$ into two groups, $k\leq_o j$ or $k>_oj$.
Since the algorithm only tests a job $j$ when $u_j\geq \alpha t_j$, $\alg{p_k} \leq \max\{\alpha, 1+\frac{1}{\alpha}\}\cdot\opt{p_k}$. 
Therefore, the total cost incurred by the first group of jobs is at most $\max\{\alpha, 1+\frac{1}{\alpha}\}\cdot\cost(\optimal)$. 
Note that the ratio, in this case, reflects the penalty to the algorithm that makes a wrong guess on testing a job or not.

\smallskip

For the second group of jobs, the authors proposed a classic and elegant \emph{comparison tree} framework to charge each $c(k,j)$ with $k>_o j$ to the time that the optimal schedule spends on job $j$. 
More specifically, $c(k,j) \leq \max\{(1+\frac{1}{\beta})\alpha, 1+\frac{1}{\alpha}, 1+\beta\}\cdot \opt{p_j}$ for any $k$ and $j$. 
Hence, the total cost incurred by the second group of jobs can be charged to $\max\{(1+\frac{1}{\beta})\alpha, 1+\frac{1}{\alpha}, 1+\beta\}\cdot\cost(\optimal)$.


\smallskip

By summing up the $c(k, j)$ values for all pairs of $k$ and $j$, the total completion time of the algorithm is at most 
{\[\max\{\alpha, 1+\frac{1}{\alpha}\} + \max\{(1+\frac{1}{\beta})\cdot\alpha, 1 + \frac{1}{\alpha}, 1+\beta\}.\]}
When $\alpha = \beta = 1$ (which is the optimal selection), the competitive ratio is $4$.

\subsection{Our observations}
\label{subsec:observation}
As stated by Albers and Eckl~\cite{DBLP:conf/waoa/AlbersE20}, $\alpha=\beta=1$ is the optimal choice in their analysis framework.
Therefore, it is not possible to find a better $\alpha$ and $\beta$ to tighten the competitive ratio under their analysis framework.
However, the framework can be improved via the following observations.

\smallskip

For example, given that $\alpha=\beta=1$, consider two jobs $k$ and $j$, where $(t_k, u_k, p_k) = (1+\varepsilon, 1+3\varepsilon, 1+3\varepsilon)$ and $(t_j, u_j, p_j) = (1, 1+4\varepsilon, 1+2\varepsilon)$.
By the $\AEalgo$ algorithm, both $k$ and $j$ are tested.  The order of the tasks regarding these two jobs is $t_j$, $t_k$, $p_j$, and finally $p_k$.
On the other hand, in the optimal schedule, $\opt{p_k} = u_k = 1+3\varepsilon$ and $\opt{p_j} = u_j = 1+4\varepsilon$. 
Since $k\leq_o j$, as shown in Figure~\ref{fig:observation}, both $c(k,j)$ and $c(j,k)$ are charged to $2\opt{p_k}$, separately. 
Note that although $c(k,j) = t_k$ in this example, the worst-case nature of the analysis framework fails to capture the fact that the contribution from the tasks regarding $k$ to the completion time of $j$ is even smaller than $\opt{p_k}$.
{This observation motivates us to establish a new analysis framework.}

\hide{For example, given that $\alpha=\beta=1$, consider two jobs $k$ and $j$, where $(t_k, u_k, p_k) = (1.1, 1.3, 1.3)$ and $(t_j, u_j, p_j) = (1, 1.4, 1.2)$. 
By the $\AEalgo$ algorithm, both $k$ and $j$ are tested.  The order of the tasks regarding these two jobs is $t_j$, $t_k$, $p_j$, and finally $p_k$.
On the other hand, in the optimal schedule, $\opt{p_k} = u_k = 1.3$ and $\opt{p_j} = u_j = 1.4$. 
Since $k\leq_o j$, as shown in Figure~\ref{fig:observation}, both $c(k,j)$ and $c(j,k)$ are charged to $2\opt{p_k}$, separately. 
Note that although $c(k,j) = t_k$ in this example, the worst-case nature of the analysis framework fails to capture the fact that the contribution from the tasks regarding $k$ to the completion time of $j$ is even smaller than $\opt{p_k}$.
{This observation motivates us to establish a new analysis framework.}}

\begin{figure}
    \centering
    \includegraphics[width=5cm]{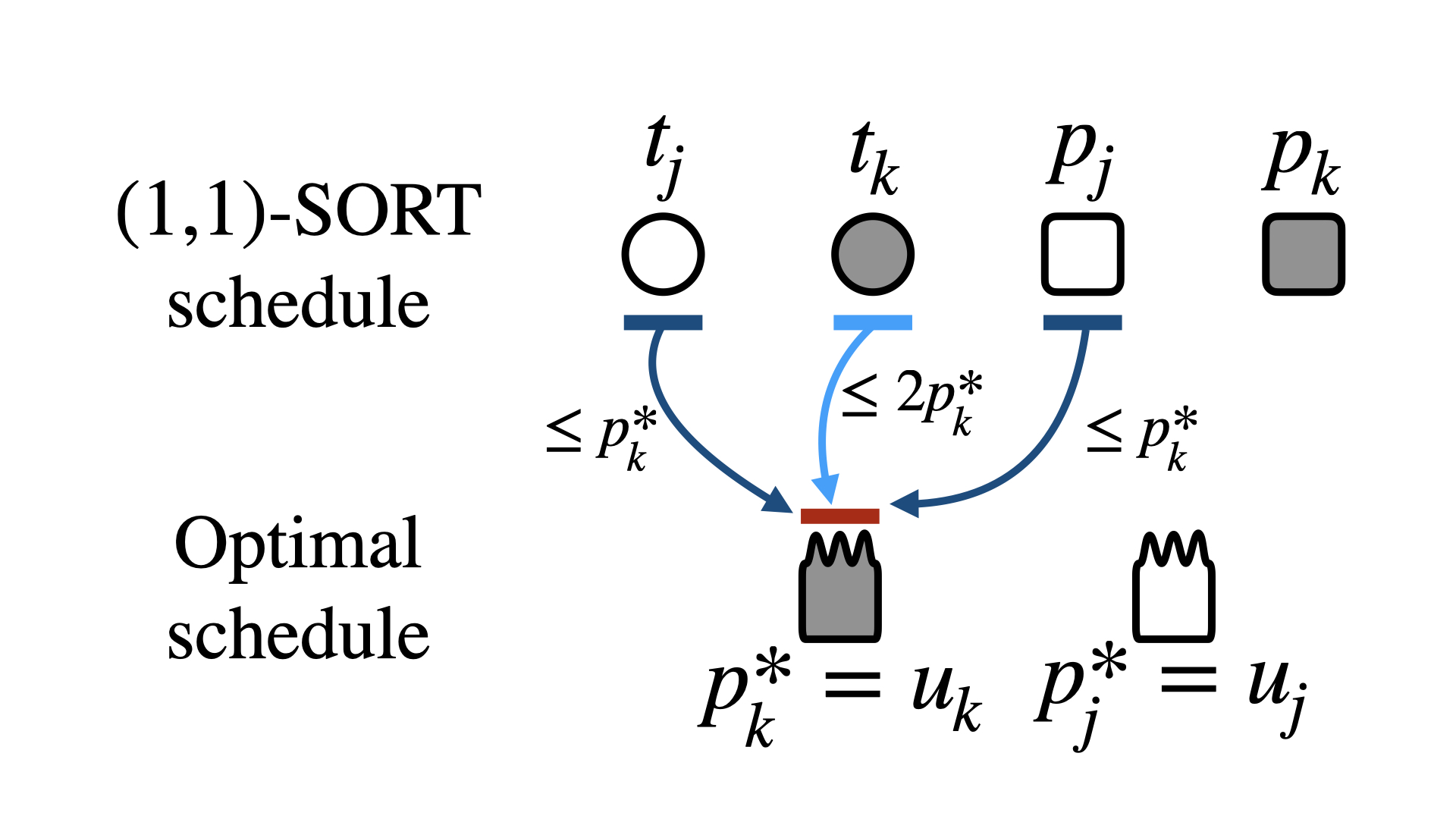}
    \caption{An example where $\opt{p_k}$ is charged four times. The light blue and dark blue segments represent $c(k,j)$ and $c(j,k)$, respectively. The red segment represents $\opt{p_k}$.}
    \label{fig:observation}
\end{figure}

\hide{
\fhl{\st{Given $\alpha=\beta=1$}}, consider jobs $k$ and $j$ where $k\leq_o j$. When $\alg{c_j}$ is estimated, $c(k, j)$ is charged to $2\cdot\opt{p_k}$. Meanwhile, when $\alg{c_k}$ is estimated, $c(j, k)$ is charged to $2\cdot \opt{p_k}$ again.
More specifically, there are four cases where $c(k,j)$ is bounded by $2\cdot \opt{p_k}$:
\begin{enumerate}
    \item When $k$ is tested\fhl{,} $j$ is not tested, $\beta\cdot t_k\leq u_j$, and $p_k \leq u_j$, 
    \item When both $k$ and $j$ are tested, $t_k \leq t_j$, and $p_k \leq \beta\cdot t_j$,
    \item When both $k$ and $j$ are tested, $t_k \leq t_j$, and $\beta\cdot t_j \leq p_k \leq p_j$, or
    \item When both $k$ and $j$ are tested, $t_j \leq t_k$, and $p_k \leq p_j$.
\end{enumerate}

It can be observed that no two jobs can be $k$ and $j$ at the same time. \todo{(Figure?)}
}

\section{Deterministic algorithms on a single machine}\label{sec:det}
\hide{
As mentioned by Albers and Eckl~\cite{DBLP:conf/waoa/AlbersE20}, the choice where $\alpha=\beta=1$ is optimal in the analysis framework. 
However, the observation in {Section}~\ref{subsec:observation} suggests that there is a slack between the estimated
{and the actual total completion time.}

\smallskip
}
In this section, we consider scheduling on a single machine.
We first enhance the framework by equipping it with amortized analysis in Section~\ref{subsec:amortization}. 
Using amortized arguments, for any two jobs $k\leq_o j$, we will show how to charge the sum of $c(k,j) + c(j,k)$ to $\opt{p_k}$.  
The new framework not only improves the competitive ratio but also creates room for adjusting $\alpha$ and $\beta$.

\smallskip

Then in Section~\ref{subsec:improved}, we propose an algorithm called $\ouralgo$ to improve the $\AEalgo$ algorithm based on our enhanced framework.

\subsection{Amortization}
\label{subsec:amortization}
\hide{
Our first attempt is to introduce amortization into the analysis of the $\AEalgo$ algorithm.
The core idea is, instead of charging $c(k,j)$ (the contribution of job $k$ to the completion time of job $j$) for each pair of jobs $k$ and $j$ separately, we charge $c(k,j) + c(j,k)$ to $\opt{c}(k,j) + \opt{c}(j,k)$.

\smallskip

\runtitle{Analysis framework.}}
We first bound $c(k,j)+c(j,k)$ for all pairs of jobs $k$ and $j$ with $k\leq_o j$ by a function $r(\alpha,\beta) \cdot \opt{c}(k,j)$. Then, we can conclude that the algorithm is $r(\alpha,\beta)$-competitive by the following argument:
\begin{align*}
    \cost(\AEalgo) &=\sum_{j=1}^n \sum_{k=1}^n c(k,j) 
    = \sum_{j=1}^n (\sum_{k<_o j} (c(k,j) + c(j,k)) + c(j,j))\\
    & \leq \sum_{j=1}^n r(\alpha,\beta) \cdot (\sum_{k<_o j}\opt{c}(k,j)+ \opt{c}(j,j))\\
    &= r(\alpha,\beta) \cdot \cost(\optimal)
\end{align*}

\smallskip

To bound $c(k,j)+c(j,k)$ by the cost of tasks 
$k$, we first observe that it is impossible that $c(k,j) = \alg{p_k}$ and $c(j,k) = \alg{p_j}$ at the same time. 
More specifically, depending on whether the jobs $k$ and $j$ are tested or not, the last task regarding these two jobs does not contribute to $c(k,j) + c(j,k)$. 
Furthermore, the order of {these jobs'} tasks in the priority queue provides a scheme to charge the cost of the tasks regarding $j$ to the cost of tasks regarding $k$.

\smallskip

Figure~\ref{fig:illustration} shows how the charging is done.
Each row in the subfigures is a permutation of how the tasks regarding job $j$ and $k$ are executed.
The gray objects are tasks regarding $k$, and the white objects are tasks regarding $j$. 
The circles, rectangles, and rectangles with the wavy top are testing tasks, execution tasks, and execution-untested tasks, respectively.
The horizontal lines present the values of $c(k,j)$ (light blue) and $c(j,k)$ (dark blue).
The red arrows indicate how the cost of a task regarding $j$ is charged to that of a task regarding $k$ according to the order of the tasks in the priority queue. 
The charging $c(k,j)+c(j,k)$ to the cost of tasks regarding $k$ results in Lemmas~\ref{lem:N} and~\ref{lem:T}.
\shortversion{For the sake of space, the proof is provided in the full paper.}

\begin{figure}[H]
\centering
\begin{subfigure}{.4\textwidth}
  \includegraphics[width=\textwidth]{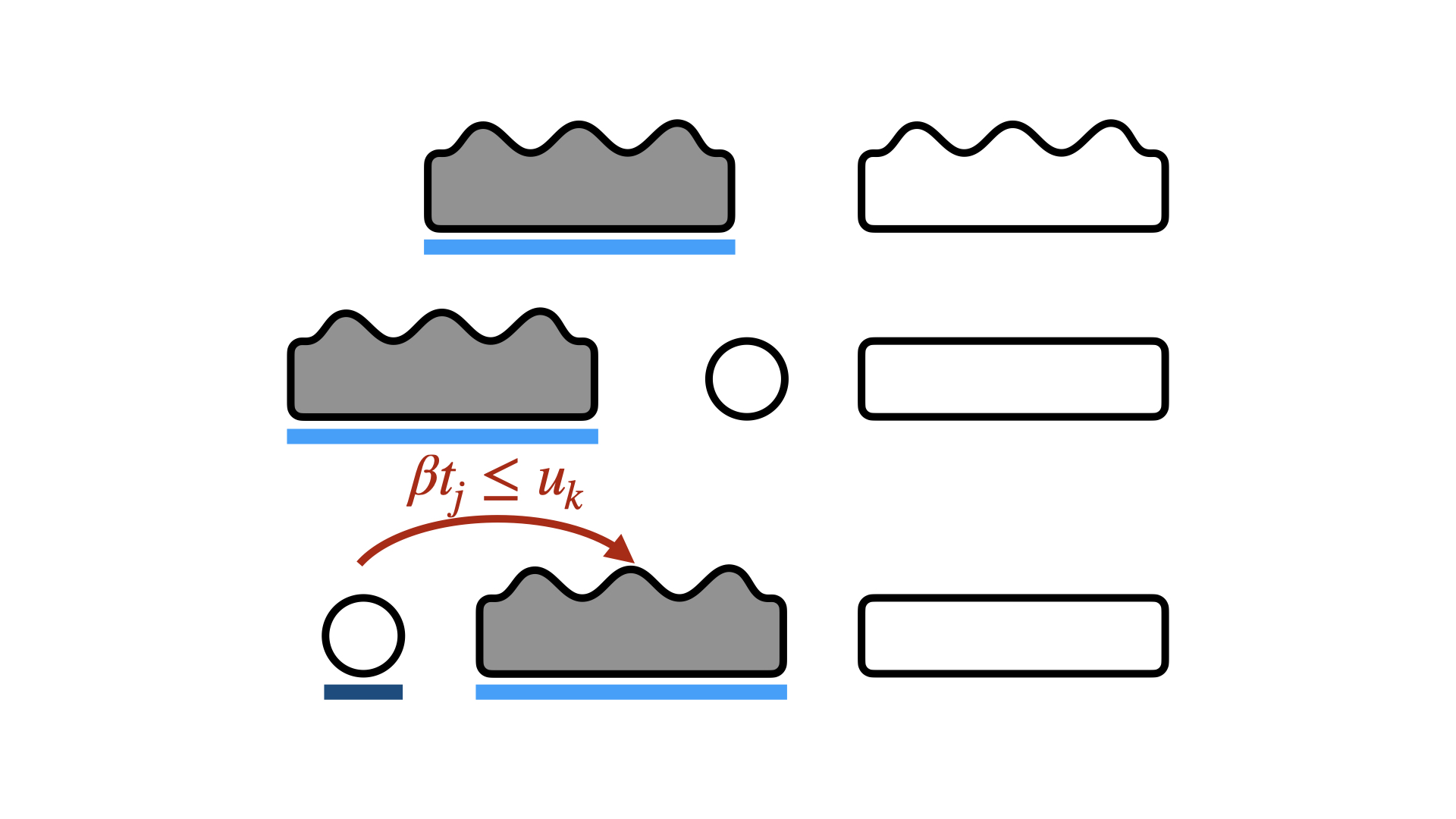}
  \caption{$k$ is not tested and $c(k,j) = u_k$~}
  \label{subfig:N1}
\end{subfigure}%
\begin{subfigure}{.4\textwidth}
  \includegraphics[width=\textwidth]{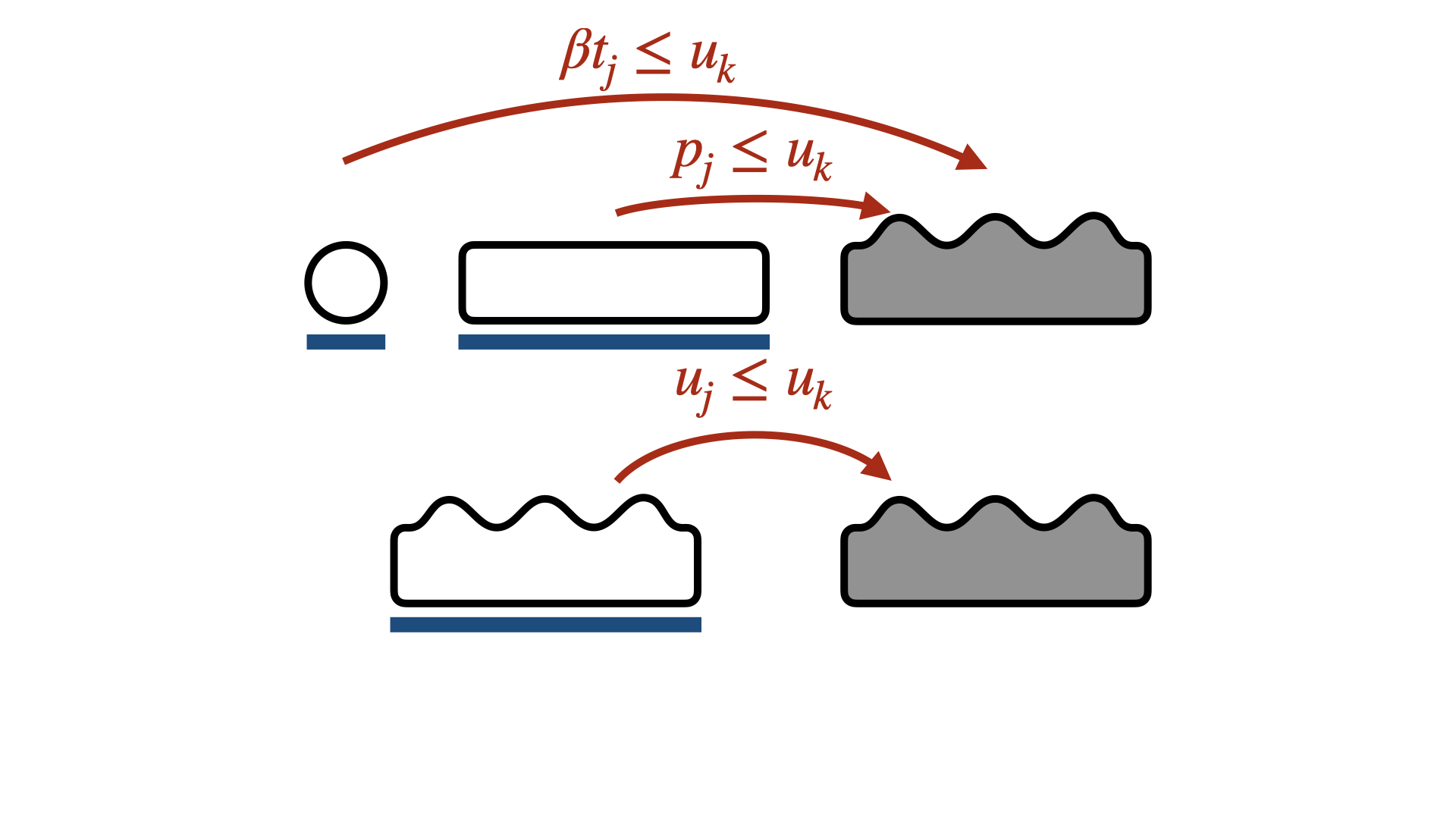}
  \caption{$k$ is not tested and $c(k,j) = 0$}
  \label{subfig:N2}
\end{subfigure}
\begin{subfigure}{.4\textwidth}
  \includegraphics[width=\textwidth]{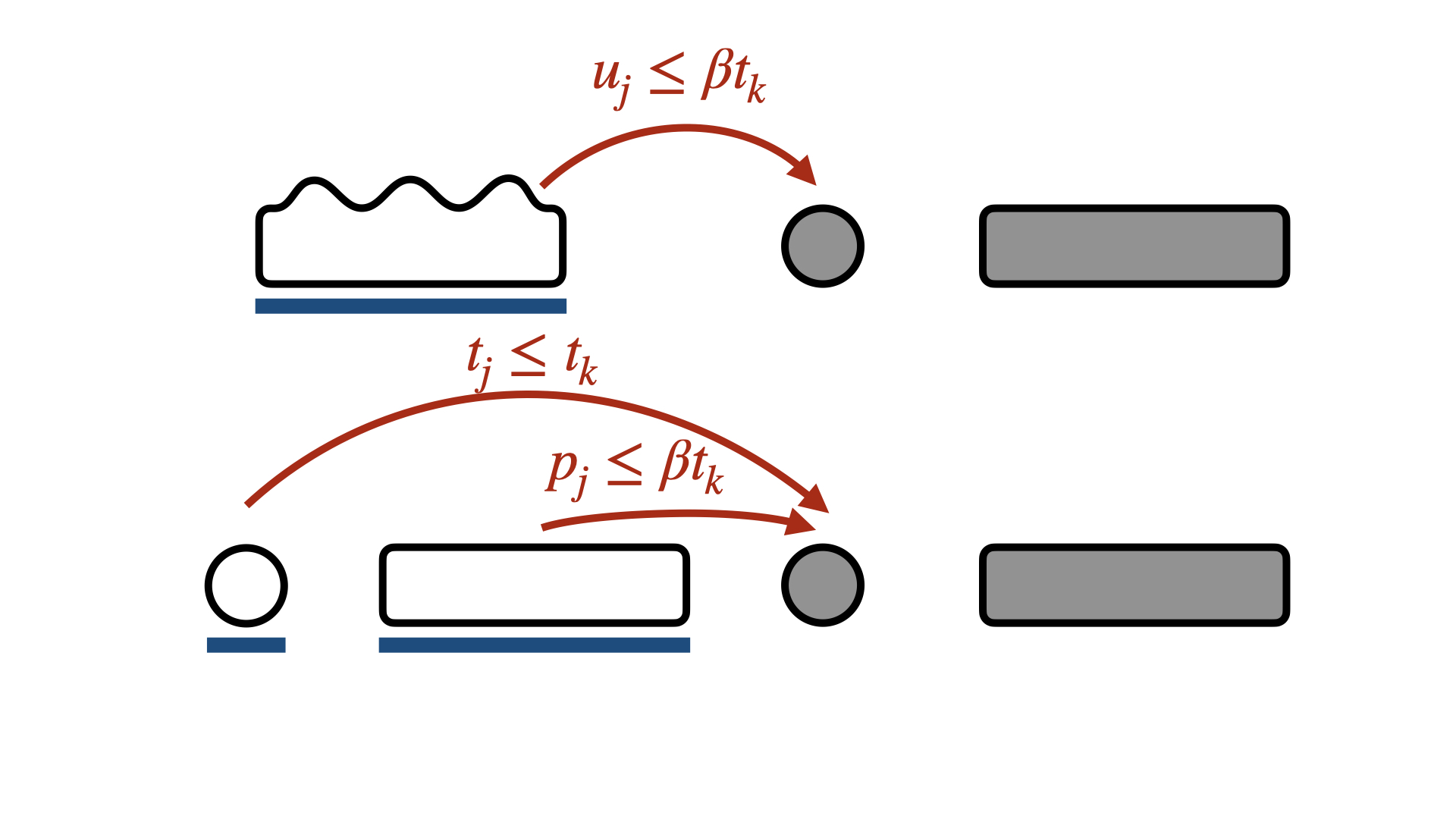}
  \caption{$k$ is tested and $c(k,j) = 0$ }
  \label{subfig:T1}
\end{subfigure}%
\begin{subfigure}{.4\linewidth}
  \includegraphics[width=\textwidth]{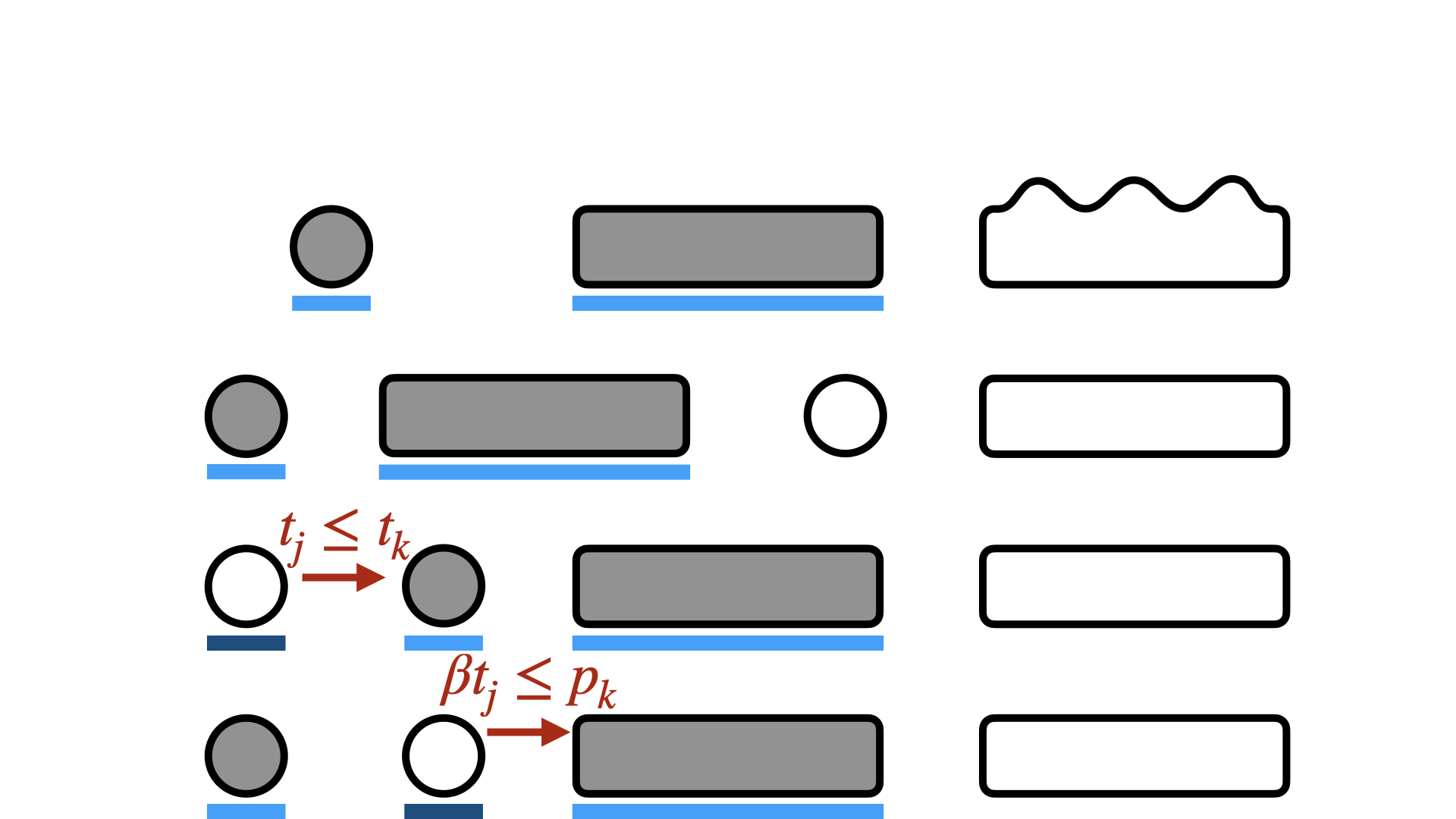}
  \caption{$k$ is tested and $c(k,j) = t_k+p_k$}
  \label{subfig:T2}
\end{subfigure}
\begin{subfigure}{.4\textwidth}
  \includegraphics[width=\textwidth]{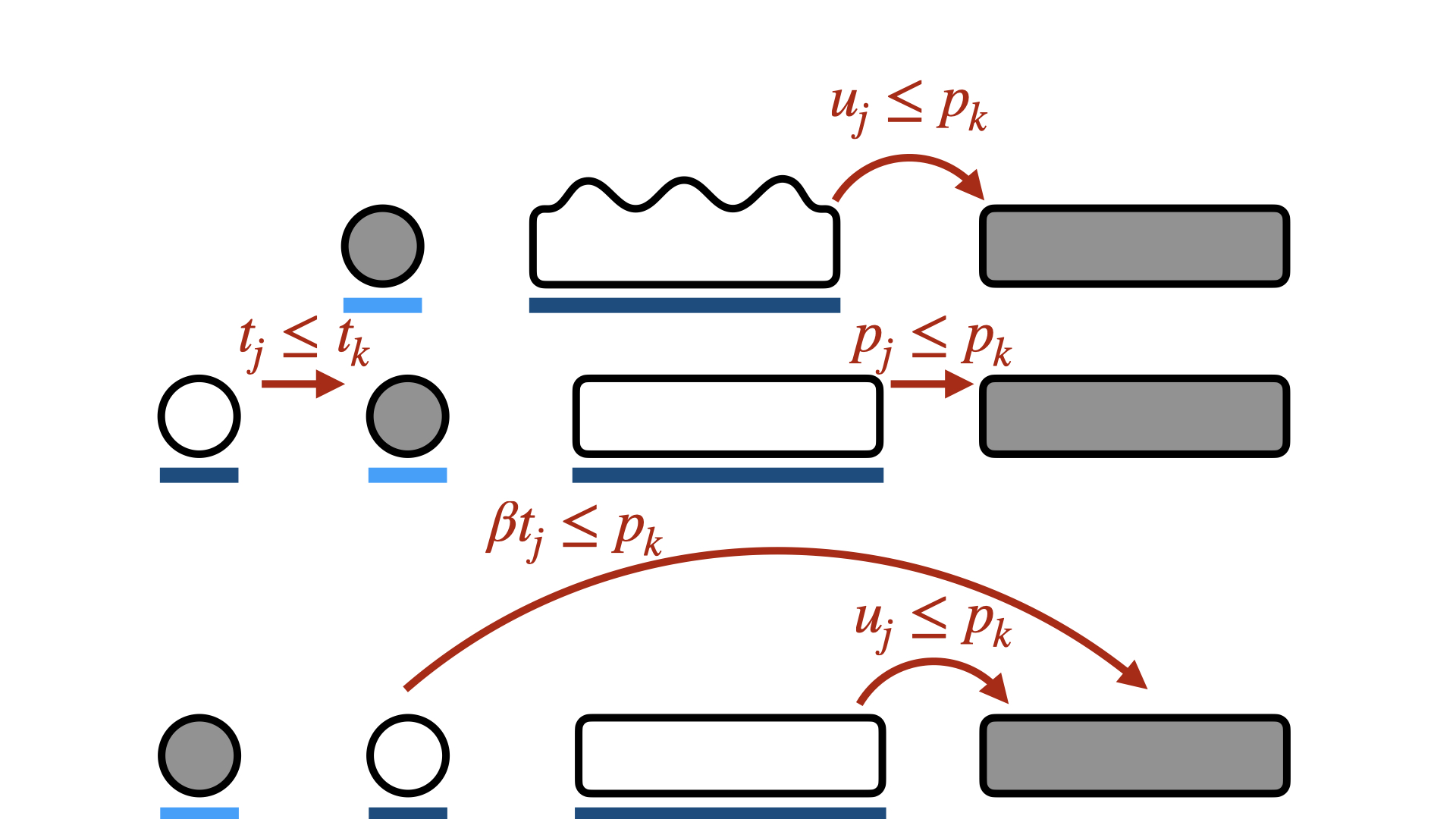}
  \caption{$k$ is tested and $c(k,j) = t_k$}
  \label{subfig:T3}
\end{subfigure}
\caption{The red arrows illustrate how to charge $c(k,j)+c(j,k)$ to the cost of tasks regarding $k$. Each row in the sub-figures is a permutation of how the tasks are executed. 
The circles and rectangles are testing tasks and execution tasks after testing, respectively. The rectangles with curly tops are execution tasks without testing. The tasks in gray are from the job $k$, and the tasks in white are from the job $j$.
The light blue and dark blue line segments under the tasks represent the contribution $c(k,j)$ and $c(j,k)$, respectively.
}
\label{fig:illustration}
\end{figure}

\hide{\begin{lemma}\cite{DBLP:conf/waoa/AlbersE20}
    If job $k$ is tested by $\AEalgo$, $\alg{p_k} \leq (1+\frac{1}{\alpha})\cdot\opt{p_k}$.
    If job $k$ is not tested by $\AEalgo$, $\alg{p_k} \leq {\alpha}\cdot\opt{p_k}$.
    For any job $k$, $\opt{p_k} \geq \min\{t_k, u_k\}$.
\end{lemma}}

\begin{lemma}
    \label{lem:N}
    If $\AEalgo$ does not test job $k$, 
    \[c(k,j)+c(j,k) \leq (1+\frac{1}{\beta})u_k.\]
\end{lemma}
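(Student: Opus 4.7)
The plan is to argue structurally by examining the position of $k$'s execution-untested task relative to $j$'s completion in the schedule produced by $\AEalgo$. Since $k$ is not tested, its unique task has both weight and actual length equal to $u_k$. Either this task finishes before $j$ completes (Figure~\ref{subfig:N1}), so $c(k,j)=u_k$, or it starts after $j$ completes (Figure~\ref{subfig:N2}), so $c(k,j)=0$. In each sub-case I would further split on whether $\AEalgo$ tests $j$, and use the priority-queue ordering (smallest weight first) to bound the weights, and hence the actual times, of $j$-tasks that precede $k$'s task.

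In the first sub-case, every $j$-task contributing to $c(j,k)$ is extracted before $k$'s task and therefore has weight at most $u_k$. If $j$ is not tested, its single task of weight $u_j$ must come after $k$'s (otherwise $j$ would complete before $k$, contradicting $c(k,j)=u_k$), giving $c(j,k)=0$. If $j$ is tested, the execution task of $j$ cannot precede $k$'s either by the same reasoning, so only $j$'s testing task precedes, contributing time $t_j$; the priority condition $\beta t_j \leq u_k$ then yields $c(j,k)\leq u_k/\beta$. Either way, $c(k,j)+c(j,k)\leq u_k + u_k/\beta = (1+\tfrac{1}{\beta})u_k$. In the second sub-case, all $j$-tasks precede $k$'s, so $c(j,k)=\alg{p_j}$. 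If $j$ is not tested, $u_j \leq u_k$ by weight ordering, giving $\alg{p_j}=u_j \leq (1+\tfrac{1}{\beta})u_k$. If $j$ is tested, $\beta t_j \leq u_k$ and $p_j \leq u_k$, so $\alg{p_j}=t_j+p_j \leq u_k/\beta + u_k = (1+\tfrac{1}{\beta})u_k$, and the bound follows since $c(k,j)=0$.

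The main subtlety I anticipate is the mismatch between queue weights and actual times: a testing task of $j$ has weight $\beta t_j$ but contributes only $t_j$ units to $c(j,k)$, while execution tasks of tested jobs are inserted into the queue only after testing completes. Verifying that this dynamic insertion does not break the clean weight-bound-implies-time-bound reasoning is the only nontrivial point; in each sub-case it reduces to the observation that a tested job's testing task is always extracted before its execution task, so ``$j$ completes before $k$'' is equivalent to ``the execution (or execution-untested) task of $j$ has weight smaller than $u_k$'', which is exactly the dichotomy the case analysis uses.
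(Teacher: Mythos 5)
Your proof is correct and follows essentially the same route as the paper's: you split on whether $c(k,j)=u_k$ or $c(k,j)=0$, then use the priority-queue weight ordering ($\beta t_j\leq u_k$, $u_j\leq u_k$, $p_j\leq u_k$) to bound $c(j,k)$ in each sub-case. The paper presents this more tersely (stating directly that $c(j,k)\leq t_j$ in the first case and $c(j,k)\leq\max\{u_j,t_j+p_j\}$ in the second), but the underlying argument and the use of the queue ordering are identical; your extra remarks on the weight-versus-time mismatch are a sound clarification of exactly the point the paper leaves implicit.
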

\fullversion{
\begin{proof}
    Since job $k$ is not tested, $c(k,j) = u_k$ or $c(k,j) = 0$.
    The case $c(k,j) = u_k$ only happens when job $k$ is executed untested before $j$ is finished. Thus, $c(j,k)\leq t_j$ (see Figure~\ref{subfig:N1}). In this case, $t_j$ is executed before $u_k$ because $\beta t_j \leq u_k$. 
    Overall, $c(k,j)+c(j,k) \leq (1+\frac{1}{\beta})u_k$
    
    If $c(k,j) = 0$, all the tasks regarding job $j$ are done by $u_k$ (see Figure~\ref{subfig:N2}). Hence, $c(j,k) \leq \max\{u_j, t_j + p_j\}$. In the first case, $j$ is executed untested before $k$ because $u_j \leq u_k$. In the second case, both the (weighted) testing and processing time of job $j$ are less than $u_k$.
    Therefore, in the worst case, $c(k,j)+c(j,k) \leq (1+\frac{1}{\beta})u_k$.
    \qedhere
\end{proof}
}

\begin{lemma}
    \label{lem:T}
    If $\AEalgo$ tests job $k$, 
    \[c(k,j)+c(j,k) \leq \max\{2t_k + p_k, (1+\beta) t_k, t_k + (1+\frac{1}{\beta})p_k\}.\]
\end{lemma}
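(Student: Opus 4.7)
The plan is to enumerate the possible orderings of the tasks regarding jobs $k$ and $j$ that can arise from Algorithm~\ref{Alg:queue}, mirroring the case analysis of Lemma~\ref{lem:N} and following the three scenarios illustrated in Figure~\ref{fig:illustration}. Since $k$ is tested, its two tasks carry queue weights $\beta t_k$ and $p_k$; job $j$ contributes either one task of weight $u_j$ (if untested) or two tasks of weights $\beta t_j$ and $p_j$ (if tested). The core lever is that Procedure \textbf{Queue-Execution} always extracts the smallest-weight task, so every observed order of two tasks translates into an inequality between their weights, which in turn bounds $c(j,k)$ in terms of the tasks of $k$.

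First I would split by the three possible values of $c(k,j)$. When $c(k,j)=0$ (Figure~\ref{subfig:T1}), all tasks of $k$ occur after $j$ completes, so the testing task of $k$ is extracted after every task of $j$. If $j$ is untested this gives $u_j\le\beta t_k$; if $j$ is tested, both $\beta t_j\le\beta t_k$ and $p_j\le\beta t_k$ must hold, so $c(j,k)=t_j+p_j\le(1+\beta)t_k$. Either way $c(k,j)+c(j,k)\le(1+\beta)t_k$. When $c(k,j)=t_k+p_k$ (Figure~\ref{subfig:T2}), both $t_k$ and $p_k$ precede $j$'s completion. If $j$ is untested, $u_j$ must follow $p_k$ and $c(j,k)=0$, giving the bound $t_k+p_k$; if $j$ is tested, the only placement of $t_j$ that contributes to $c(j,k)$ is between $t_k$ and $p_k$, which by queue priority forces $\beta t_j\le p_k$, whence $c(j,k)\le p_k/\beta$ and the total is at most $t_k+(1+\frac{1}{\beta})p_k$.

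The remaining case, $c(k,j)=t_k$ (Figure~\ref{subfig:T3}), has the richest structure. If $j$ is untested, $u_j$ sits strictly between $t_k$ and $p_k$, so $u_j\le p_k$ and the bound $t_k+p_k$ is immediate. If $j$ is tested, I split by whether $t_k$ precedes $t_j$ or vice versa. In the former, after $t_k$ is executed the queue holds $p_k$ and $t_j$ and eventually $p_j$, and extracting $t_j$, then $p_j$, before $p_k$ forces $\beta t_j\le p_k$ and $p_j\le p_k$; hence $c(j,k)=t_j+p_j\le(1+\frac{1}{\beta})p_k$, yielding $t_k+(1+\frac{1}{\beta})p_k$. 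In the latter, the queue comparisons give $t_j\le t_k$ and $p_j\le p_k$ directly, so $c(j,k)=t_j+p_j\le t_k+p_k$, giving $2t_k+p_k$. Taking the worst case across all subcases yields exactly the claimed maximum.

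The main obstacle I expect is the bookkeeping needed to track which tasks are simultaneously present in the queue at each extraction, in order to convert each observed task order into the correct weight inequality and to rule out orderings that are inconsistent with Algorithm~\ref{Alg:queue} (for example, an ordering in which $p_j$ follows $p_k$ despite both coexisting in the queue with $p_j<p_k$). Once the five admissible orderings have been enumerated and their weight inequalities extracted, the three terms inside the maximum fall out as the extremal outcomes of the three subcase families above.
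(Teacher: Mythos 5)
Your proof follows the same case split as the paper, on $c(k,j)\in\{0,\,t_k+p_k,\,t_k\}$, but there is a gap in the second case. When $c(k,j)=t_k+p_k$ and $j$ is tested, you claim that the only placement of $t_j$ that contributes to $c(j,k)$ is between $t_k$ and $p_k$, forcing $\beta t_j\le p_k$. That is false: the order $t_j,\,t_k,\,p_k,\,p_j$ is admissible whenever $\beta t_j\le\beta t_k$, $\beta t_k\le p_j$, and $p_k\le p_j$; in it $c(k,j)=t_k+p_k$, $c(j,k)=t_j$, and the only constraint on $t_j$ is $t_j\le t_k$, not $\beta t_j\le p_k$. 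If $\beta t_k>p_k$ (say $\beta=1$, $t_j=t_k=10$, $p_k=5$, $p_j=15$), the total $2t_k+p_k=25$ strictly exceeds your stated case-two bound $t_k+(1+\frac{1}{\beta})p_k=20$.

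The paper's proof avoids this by bounding $c(j,k)\le t_j\le\max\{t_k,\,\frac{p_k}{\beta}\}$ in that case, which yields the pair $\max\{2t_k+p_k,\,t_k+(1+\frac{1}{\beta})p_k\}$ rather than only the second term. Since $2t_k+p_k$ already sits inside the lemma's maximum (and you derive it independently in case three), the final inequality you state is still correct, but as written the justification of case two is incomplete: you must also admit the subcase where $t_j$ precedes $t_k$, note that the queue order then gives $t_j\le t_k$, and observe that this contributes the $2t_k+p_k$ term.
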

\fullversion{
\begin{proof}
    Since job $k$ is tested, $c(k,j) = 0$, $t_k+p_k$, or $t_k$.
    The case $c(k,j) = 0$ happens only when all tasks regarding job $j$ are done before testing $k$ (see Figure~\ref{subfig:T1}). 
    Therefore, $c(j,k) = \alg{p_j}$, which is $u_j$ or $t_j+p_j$. In the first case, $u_j\leq \beta t_k$. In the second case, $t_j\leq t_k$, and $p_j \leq \beta t_k$. Overall, $c(k,j)+c(j,k) \leq (1+\beta)t_k$ in this case.
    
    The case $c(k,j) = t_k+p_k$ happens only when $p_k$ is finished before the last task regarding job $j$ (see Figure~\ref{subfig:T2}). Therefore, $c(j,k) \leq t_j$. In this case, $t_j\leq \max\{t_k, \frac{p_k}{\beta}\}$. Overall, $c(k,j)+c(j,k) \leq \max\{2t_k+p_k, t_k+(1+\frac{1}{\beta})p_k\}$.

    If $c(k,j) = t_k$, $c(j,k) \leq \alg{p_j}$ since all tasks regarding job $j$ finish before the last task regarding job $k$ (see Figure~\ref{subfig:T3}). If $c(j,k) = u_j$, the execution of $j$ finished before $k$ because $u_j \leq p_k$. If $\alg{p_j} = t_j + p_j$, $t_j \leq \max\{t_k, \frac{p_k}{\beta}\}$, and $p_j \leq p_k$.
    Overall, $c(k,j)+c(j,k) \leq \max\{2t_k+p_k, t_k+(1+\frac{1}{\beta})p_k\}$.
    \qedhere
\end{proof}
}

Now, we can bound the competitive ratio of the $\AEalgo$ (Theorem~\ref{Thm:amortized}).
The idea is, depending on whether job $k$ is tested or not by the optimal schedule, the expressions in Lemmas~\ref{lem:N} and~\ref{lem:T} can be written as a function of $\alpha$, $\beta$, and $\opt{p_k}$. 
By selecting the values of $\alpha$ and $\beta$ carefully, we can balance the worst case ratio in the scenario where $k$ is executed-untested by the algorithm (Lemma~\ref{lem:N}) and that in the scenario where $k$ is tested by the algorithm (Lemma~\ref{lem:T}).

\begin{theorem}
    \label{Thm:amortized}
    The competitive ratio of $\AEalgo$ is at most 
    \begin{equation}
        \label{eq:amortized}
        \max\{\alpha(1+\frac{1}{\beta}), 1+\frac{1}{\alpha}+\frac{1}{\beta}, 1+ \beta, 2 , 1+\frac{2}{\alpha}\}
    \end{equation}
    \shortversion{By choosing $\alpha=\beta=\sqrt{2}$, $\AEalgo$ algorithm is $(1+\sqrt{2})$-competitive.
    The choice is optimal {for expression~(\eqref{eq:amortized})}.
    }
\end{theorem}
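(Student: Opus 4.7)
The proof starts from the amortized decomposition stated just before the lemmas,
$\cost(\AEalgo)=\sum_j c(j,j) + \sum_j\sum_{k<_o j}(c(k,j)+c(j,k))$,
and aims to bound every summand by $r(\alpha,\beta)$ times the corresponding optimal processing time. The diagonal term $c(j,j)=\alg{p_j}$ is handled as in~\cite{DBLP:conf/waoa/AlbersE20}: because the algorithm tests iff $u_j\ge\alpha t_j$ and $\opt{p_j}=\min\{u_j,t_j+p_j\}$, a short check gives $\alg{p_j}\le\max\{\alpha,1+\tfrac{1}{\alpha}\}\opt{p_j}$, both of which are dominated by terms of~(\ref{eq:amortized}) when $\alpha,\beta\ge 1$.

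The bulk of the argument converts Lemmas~\ref{lem:N} and~\ref{lem:T} into multiples of $\opt{p_k}$ via a four-way case split on whether $\AEalgo$ tests $k$ and whether $\optimal$ tests $k$. When $\AEalgo$ does not test (so $u_k<\alpha t_k$ and Lemma~\ref{lem:N} applies): if $\optimal$ tests, then $u_k<\alpha t_k\le\alpha\opt{p_k}$ produces factor $\alpha(1+\tfrac{1}{\beta})$; if $\optimal$ also does not test, then $u_k=\opt{p_k}$ produces the smaller $1+\tfrac{1}{\beta}$. When $\AEalgo$ tests (so $u_k\ge\alpha t_k$ and Lemma~\ref{lem:T} applies): if $\optimal$ tests, then $\opt{p_k}=t_k+p_k$ absorbs the three terms of Lemma~\ref{lem:T} into factors $2$, $1+\beta$, and $1+\tfrac{1}{\beta}$; if $\optimal$ does not test, then $t_k\le\opt{p_k}/\alpha$ and $p_k\le\opt{p_k}$ absorb them into $1+\tfrac{2}{\alpha}$, $\tfrac{1+\beta}{\alpha}$, and $1+\tfrac{1}{\alpha}+\tfrac{1}{\beta}$. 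Taking the worst case across the four subcases and dropping entries dominated when $\alpha,\beta\ge 1$ (namely $1+\tfrac{1}{\beta}$ is beaten by $\alpha(1+\tfrac{1}{\beta})$, and $\tfrac{1+\beta}{\alpha}$ by $1+\beta$) yields exactly expression~(\ref{eq:amortized}).

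For the parameter choice, the plan is to equalize the four ``active'' terms simultaneously. Plugging $\alpha=\beta=\sqrt{2}$ gives $\alpha(1+\tfrac{1}{\beta})=1+\tfrac{1}{\alpha}+\tfrac{1}{\beta}=1+\beta=1+\tfrac{2}{\alpha}=1+\sqrt{2}$, while the constant $2$ is strictly smaller. Optimality within~(\ref{eq:amortized}) then follows from a short equalization argument: perturbing $\alpha$ upward strictly increases $\alpha(1+\tfrac{1}{\beta})$ while perturbing it downward increases $1+\tfrac{2}{\alpha}$; perturbing $\beta$ upward increases $1+\beta$ while downward increases $1+\tfrac{1}{\alpha}+\tfrac{1}{\beta}$, so no direction of joint perturbation decreases the maximum below $1+\sqrt{2}$.

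The main obstacle is the bookkeeping in the four-way case split, in particular the subcases in which $\AEalgo$ tests and $\optimal$ does not, where the algorithm's testing threshold $u_k\ge\alpha t_k$ must be used to convert each $t_k$-coefficient of Lemma~\ref{lem:T} into a $\tfrac{1}{\alpha}\opt{p_k}$-coefficient; thereafter the algebra and the constraint optimization are routine.
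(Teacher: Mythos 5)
Your proposal is correct and matches the paper's proof in substance: both convert Lemmas~\ref{lem:N} and~\ref{lem:T} into multiples of $\opt{p_k}$ using the threshold condition $u_k \gtrless \alpha t_k$ together with the case $\opt{p_k}=t_k+p_k$ versus $\opt{p_k}=u_k$, take the maximum, and then equalize the four active terms at $\alpha=\beta=\sqrt{2}$ to get $1+\sqrt{2}$. The only cosmetic difference is that the paper folds both lemmas' bounds into one symbolic quantity $r(\alpha,\beta)$ and case-splits only on $\optimal$'s decision, leaving the algorithm's threshold implicit, whereas you spell out the full four-way split; the resulting inequalities are identical.
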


\fullversion{
\begin{proof}
    For simplicity, denote $r(\alpha, \beta)$ as $\max\{(1+\frac{1}{\beta})u_k, 2t_k + p_k, (1+\beta) t_k, t_k + (1+\frac{1}{\beta})p_k\}$.
    Recall that the $\AEalgo$ only tests job $k$ when $u_k\geq \alpha t_k$, and $p_k\leq u_k$ for all jobs $k$. 
    The argument is divided into two cases according to $\opt{p_k}$.
    
    \runtitle{Case 1: $\opt{p_k} = t_k+p_k$.}
    In this case,
    \begin{align*}
        \frac{r(\alpha,\beta)}{\opt{p_k}} &\leq \max\left\{\frac{(1+\frac{1}{\beta})u_k}{t_k+p_k}, 2, 1+\beta, 1+\frac{1}{\beta}\right\} \\
        &\leq \max\left\{(1+\frac{1}{\beta})\alpha, 2, 1+\beta, 1+\frac{1}{\beta}\right\}
    \end{align*} 

    \smallskip
    
    \runtitle{Case 2: $\opt{p_k} = u_k$.}
    In this case, 
    \begin{align*}
        \frac{r(\alpha,\beta)}{\opt{p_k}} &\leq \max\left\{ 1 + \frac{1}{\beta}, \frac{2t_k + p_k}{u_k}, \frac{(1+\beta) t_k}{u_k}, \frac{t_k + (1 + \frac{1}{\beta}) p_k}{u_k} \right\} \\
        &\leq \max \left\{ 1 + \frac{1}{\beta}, \frac{2}{\alpha} + 1, \frac{1 + \beta}{\alpha}, \frac{1}{\alpha} + 1 + \frac{1}{\beta} \right\}
    \end{align*} 
    \qedhere
\end{proof}
}

Note that by Theorem~\ref{Thm:amortized}, the $\AEalgo$ algorithm is $3$-competitive when $\alpha=\beta=1$, which matches the observation in Figure~\ref{fig:observation}. 

\smallskip

Our analysis framework provides room for adjusting the values of $\alpha$ and $\beta$. 
By selecting the values of $\alpha$ and $\beta$, we can tune the cost of tasks regarding $k$ that is charged. 
By selecting a value of $\alpha$ other than $1$, we can balance the penalty of making a wrong decision on testing a job or not. 
The capability of selecting a value of $\beta$ other than $1$ allows us to prioritize the testing tasks (which are scaled by $\beta$) and the execution tasks (which immediately decide a completion time of a job).
Finally, the performance of the algorithm is tuned by finding the best values of $\alpha$ and $\beta$. 

\smallskip

\fullversion{
\begin{corollary}
    By choosing $\alpha=\beta=\sqrt{2}$, $\AEalgo$ algorithm is $(1+\sqrt{2})$-competitive.
    This choice of $\alpha$ and $\beta$ is optimal for expression~\eqref{eq:amortized}.
\end{corollary}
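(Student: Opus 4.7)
The plan is to verify the upper bound by direct substitution and then establish optimality through a squeeze argument using the two monotone terms in expression~(\ref{eq:amortized}). First I would substitute $\alpha=\beta=\sqrt{2}$ and check each of the five terms: $\alpha(1+1/\beta)=\sqrt{2}(1+1/\sqrt{2})=1+\sqrt{2}$; $1+1/\alpha+1/\beta=1+2/\sqrt{2}=1+\sqrt{2}$; $1+\beta=1+\sqrt{2}$; the constant term is $2<1+\sqrt{2}\approx 2.414$; and $1+2/\alpha=1+\sqrt{2}$. The maximum therefore equals $1+\sqrt{2}$, so by Theorem~\ref{Thm:amortized} the algorithm is $(1+\sqrt{2})$-competitive at this parameter choice.

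For the optimality claim I would argue that any other parameter choice strictly exceeds $1+\sqrt{2}$. The third term $1+\beta$ is strictly increasing in $\beta$, so $M(\alpha,\beta)\leq 1+\sqrt{2}$ forces $\beta\leq\sqrt{2}$. Symmetrically, the fifth term $1+2/\alpha$ is strictly decreasing in $\alpha$, forcing $\alpha\geq\sqrt{2}$. Having squeezed the admissible region into $\{\alpha\geq\sqrt{2},\,\beta\leq\sqrt{2}\}$, I would then invoke the first term: $\alpha(1+1/\beta)\geq\sqrt{2}\cdot(1+1/\sqrt{2})=1+\sqrt{2}$, with equality if and only if $\alpha=\sqrt{2}$ \emph{and} $\beta=\sqrt{2}$. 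This pins the minimizer uniquely at $\alpha=\beta=\sqrt{2}$, proving optimality.

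The only step needing a little care is checking that no other term can become binding at a more favourable point. The constant $2$ is dominated throughout the tight regime ($2<1+\sqrt{2}$), and the second term $1+1/\alpha+1/\beta$ is already saturated at $(\sqrt{2},\sqrt{2})$, so it cannot do better there. I do not expect any serious technical obstacle; the proof is essentially a bookkeeping exercise over a handful of one- or two-variable monotone inequalities, with the conceptual content being the identification of the two one-variable monotone terms that pincer the feasible region from opposite sides.
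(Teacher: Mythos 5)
Your proposal is correct. The upper-bound half is identical to the paper's: plug in $\alpha=\beta=\sqrt{2}$ and observe that the four non-constant terms all evaluate to $1+\sqrt{2}$ while $2 < 1+\sqrt{2}$. For the optimality half, the paper is terse — it merely states that since the four non-constant terms are all equalized, ``the choice minimizes'' the expression — without supplying a justification; equalizing active constraints is a standard heuristic for min-max problems but is not in itself a proof. Your squeeze argument actually closes that gap cleanly: the term $1+\beta$ forces $\beta\le\sqrt{2}$, the term $1+2/\alpha$ forces $\alpha\ge\sqrt{2}$, and under those constraints the term $\alpha(1+1/\beta)$ is bounded below by $\sqrt{2}(1+1/\sqrt{2})=1+\sqrt{2}$, with equality exactly at $(\sqrt{2},\sqrt{2})$. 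That three-term pincer is all that is needed; your parenthetical remark about the second term being saturated is harmless but not required. In short, you reproduce the paper's computation and then strengthen the optimality step from a bare assertion into an actual proof, which is a genuine (if modest) improvement in rigor.
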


\begin{proof}
    By setting $\alpha=\beta=\sqrt{2}$, all $\alpha(1+\frac{1}{\beta}), 1+\frac{1}{\alpha}+\frac{1}{\beta}, 1+ \beta$, and $1+\frac{2}{\alpha}$ are equal to $1+\sqrt{2}$.
    That is, the choice minimizes $\max\{\alpha(1+\frac{1}{\beta}), 1+\frac{1}{\alpha}+\frac{1}{\beta}, 1+ \beta, 2 , 1+\frac{2}{\alpha}\}$.
    Thus the corollary follows.
    \qedhere
\end{proof}
}

However, recall that the parameter $\alpha$ encodes the penalty for making a wrong guess on testing a job or not. 
When $\alpha = \sqrt{2}$, the penalty for testing a job we should not test is more expensive than that for executing-untested a job that we should test.
It inspires us to improve the algorithm further.

\subsection{An improved algorithm $\ouralgo$}
\label{subsec:improved}

Surprisingly, the introduction of amortization even sheds light on further improvement of the algorithm $\AEalgo$. 
We propose a new algorithm, called \emph{Prioritizing-Certain-Processing-time} (abbreviated as $\ouralgo$). 
The main difference between the two algorithms 
is that in the $\ouralgo$ algorithm after a job $j$ is tested, an execution task with weight $t_j+p_j$ is inserted into the queue instead of $p_j$ (see Algorithm~\ref{Alg:new_queue}).\footnote{The idea of using $t_j+p_j$ as the weight is also used in~\cite{DBLP:journals/algorithmica/GongCH23}.}
Intuitively, we prioritize a job by its certain (total) processing time $\alg{p_j}$, which can be $t_j+p_j$ or $u_j$.
Then, we can charge the total cost of tasks regarding a wrong-ordered $j$ to $\beta t_k$ or $\alg{p_k}$ all at once.



\hide{
\begin{algorithm}
    \caption{$\ouralgo$ algorithm}
    \label{Alg:ourALG}
    \begin{algorithmic}
        \State Initialize a priority queue $Q$
        \For{$j = 1, 2, 3, \cdots, n$}
            \If{$u_j\geq \alpha\cdot t_j$}
                \State Insert $\beta\cdot t_j$ into $Q$
            \Else
                \State Insert $u_j$ into $Q$
            \EndIf
        \EndFor
    \State \textbf{Updated Queue-Execution}$(Q)$
    \end{algorithmic}
\end{algorithm}
}

\begin{algorithm}[t]
\begin{algorithmic}
    \Procedure{Updated Queue-Execution}{$Q$}
        \While{$Q$ is not empty}
            \State {$x \gets$ Extract the smallest-weight} task in $Q$
            \If{{$x$ is a testing task for a job $j$}}
                \State {Test job $j$}
                \Comment{It takes $t_j$ time}
                \State {Insert an execution task with weight $t_j+p_j$ into $Q$}
            \ElsIf{{$x$ is an execution task for a job $j$}}
                \State Execute (tested) job $j$
                \Comment{It takes $p_j$ time}
            \Else \Comment{{$x$ is an execution-untested task for a job $j$}}
                \State Execute job $j$ untested
                \Comment{It takes $u_j$ time}
            \EndIf
        \EndWhile
    \EndProcedure
\caption{Procedure \textbf{Updated Queue-Execution} $(Q)$}
\label{Alg:new_queue}
\end{algorithmic}
\end{algorithm}

The new algorithm {$\ouralgo$ (Algorithm~\ref{Alg:Susanne} combined with Algorithm~\ref{Alg:new_queue})} has an improved estimation of $c(k,j)+c(j,k)$ when $c(j,k) = t_j+p_j$. 
However, when there is only one task regarding $j$ contributing to $c(j,k)$, the estimation of $c(k,j)+c(j,k)$ may increase.
Formally, we have the following two lemmas.

\begin{lemma}
    \label{lem:newN}
    Given two jobs $k\leq_o j$, if $\ouralgo$ does not test job $k$, 
    \[c(k,j)+c(j,k) \leq (1+\frac{1}{\beta})u_k.\]
\end{lemma}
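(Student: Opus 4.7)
My plan is to adapt the case analysis of Lemma~\ref{lem:N} to the algorithm $\ouralgo$, exploiting the single structural change between the two algorithms---namely, that after testing $j$, the execution task inserted into $Q$ carries weight $t_j+p_j$ rather than $p_j$. Since $k$ is not tested, it contributes exactly one task of weight $u_k$ to $Q$, so the analysis splits on whether this task is extracted before or after $j$'s final task in the queue order; in each branch I will bound $c(j,k)$ in terms of $u_k$ and $\beta$.

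First I will handle the case $c(k,j)=u_k$, i.e., $k$'s execution-untested precedes $j$'s last task in $Q$. If $j$ is untested, its sole task of weight $u_j$ would have to be extracted after $k$'s, so $c(j,k)=0$. If $j$ is tested, at most one $j$-task can precede $k$'s: the testing task of weight $\beta t_j$; any execution task for $j$ (weight $t_j+p_j$) that was already extracted would force $c(k,j)=0$, a contradiction. Hence $c(j,k)\le t_j$, and the queue ordering $\beta t_j\le u_k$ yields $c(k,j)+c(j,k)\le u_k+u_k/\beta=(1+1/\beta)u_k$.

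Next I will handle the case $c(k,j)=0$, i.e., every $j$-task precedes $k$'s task in $Q$, so $c(j,k)=\alg{p_j}$. If $j$ is untested, the queue order forces $u_j\le u_k$, hence $\alg{p_j}\le u_k$. If $j$ is tested, the queue order forces both $\beta t_j\le u_k$ and---crucially---$t_j+p_j\le u_k$, using that $\ouralgo$ inserts the execution task with weight $t_j+p_j$ (in contrast to $\AEalgo$, where only $p_j\le u_k$ would follow). Either way, $\alg{p_j}\le u_k$, so $c(k,j)+c(j,k)\le u_k\le(1+1/\beta)u_k$.

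I do not foresee any serious obstacle: the argument is a direct case analysis along the lines of panels~(\ref{subfig:N1}) and~(\ref{subfig:N2}) of Figure~\ref{fig:illustration}, and the only subtle point is invoking the modified execution-task weight of $\ouralgo$ at the right moment in the second case. In fact, the bound obtained in the $c(k,j)=0$ branch is strictly tighter than its analogue for $\AEalgo$, though this extra slack is absorbed into the uniform $(1+1/\beta)u_k$ target and does not appear in the statement.
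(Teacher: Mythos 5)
Your proof is correct and follows essentially the same case analysis as the paper's proof: split on $c(k,j)=u_k$ versus $c(k,j)=0$, then use the queue-ordering constraints $\beta t_j\le u_k$, $u_j\le u_k$, and (the $\ouralgo$-specific) $t_j+p_j\le u_k$ to bound $c(j,k)$. The extra detail you add in the first case (ruling out that $j$'s execution task already left the queue) is implicit in the paper but does not constitute a different argument.
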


\fullversion{
\begin{proof}
    Since job $k$ is not tested, $c(k,j) = u_k$ or $c(k,j) = 0$.
    The case $c(k,j) = u_k$ only happens when job $k$ is executed untested before $j$ is finished. Thus, $c(j,k)\leq t_j$. In this case, $t_j$ is executed before $u_k$ because $\beta t_j \leq u_k$. 
    Overall, $c(k,j)+c(j,k) \leq (1+\frac{1}{\beta})u_k$
    
    If $c(k,j) = 0$, all the tasks regarding job $j$ are finished before $u_k$.
    Hence, $c(j,k) \leq \max\{u_j, t_j + p_j\}$.
    In the first case, $j$ is executed untested before $k$ because $u_j \leq u_k$.
    In the second case, $t_j+p_j\leq u_k$.
    Therefore, in the worst case, $c(k,j)+c(j,k) \leq (1+\frac{1}{\beta})u_k$.
    \qedhere
\end{proof}
}

\begin{lemma}
    \label{lem:newT}
    Given two jobs $k\leq_o j$, if $\ouralgo$ tests job $k$, 
    \[c(k,j)+c(j,k) \leq \max\{2t_k + p_k, \beta t_k, (1+\frac{1}{\beta})(t_k+p_k)\}.\]
\end{lemma}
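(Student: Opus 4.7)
The plan is to mimic the case analysis used for Lemma~\ref{lem:T}, but with the crucial change that in $\ouralgo$ the execution task of a tested job $j$ carries weight $t_j + p_j$ (rather than $p_j$). Since $k$ is tested, the tasks regarding $k$ are a testing task of weight $\beta t_k$ and an execution task of weight $t_k + p_k$. I would split into three cases according to the value of $c(k,j)$, which can only be $0$, $t_k$, or $t_k+p_k$.

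First, if $c(k,j) = 0$, then every task regarding $j$ finishes before the testing of $k$ is extracted from the queue, so each such task has weight at most $\beta t_k$. Whether $j$ is untested (weight $u_j \leq \beta t_k$) or tested (both $\beta t_j \leq \beta t_k$ and the new execution weight $t_j + p_j \leq \beta t_k$), we get $c(j,k) = \alg{p_j} \leq \beta t_k$, hence $c(k,j) + c(j,k) \leq \beta t_k$. Second, if $c(k,j) = t_k + p_k$, then the execution of $k$ (weight $t_k + p_k$) is extracted before the last task regarding $j$, so only the testing of $j$ can contribute to $c(j,k)$, giving $\beta t_j \leq t_k + p_k$ and therefore $c(j,k) \leq t_j \leq (t_k+p_k)/\beta$; summing, $c(k,j) + c(j,k) \leq (1 + \tfrac{1}{\beta})(t_k+p_k)$.

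The third and slightly trickier case is $c(k,j) = t_k$: the testing of $k$ preceded the last task of $j$, but the execution of $k$ did not. Here the bound on $c(j,k)$ comes from comparing against the execution task of $k$, which now has weight $t_k + p_k$ (this is precisely where the algorithm differs from $\AEalgo$). If $j$ is untested, $u_j \leq t_k + p_k$; if $j$ is tested, $t_j + p_j \leq t_k + p_k$. Either way $c(j,k) \leq \alg{p_j} \leq t_k + p_k$, yielding $c(k,j) + c(j,k) \leq 2 t_k + p_k$. Taking the maximum over the three cases gives the claimed bound.

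The main subtlety, and what makes the statement different from Lemma~\ref{lem:T}, is exactly this third case: because the execution task of $j$ is now weighted by $t_j + p_j$ instead of $p_j$, the queue-ordering comparison against the execution of $k$ (weight $t_k + p_k$) lets us charge the \emph{entire} $\alg{p_j}$ to $t_k + p_k$ in one shot, rather than separately bounding $t_j$ and $p_j$ against $t_k$ and $p_k$. This is what produces the cleaner $(1+\tfrac{1}{\beta})(t_k+p_k)$ term in place of a term like $t_k + (1+\tfrac{1}{\beta}) p_k$, at the cost of the worse $\beta t_k$ term when $c(k,j)=0$ (since an untested $j$ must now satisfy $u_j \leq \beta t_k$ rather than being compared against the lighter $p_k$). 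I expect no real obstacle beyond carefully justifying the queue-ordering inequalities in each case and confirming that no other subcase slips through.
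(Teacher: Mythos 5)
Your proof follows the paper's three-case decomposition on $c(k,j) \in \{0,\ t_k,\ t_k+p_k\}$, and your first and third cases match the paper's argument essentially verbatim. The second case, however, has a gap. You infer $\beta t_j \leq t_k+p_k$ from the fact that the testing of $j$ is extracted before the execution of $k$, but that queue-weight comparison is only valid if the execution task of $k$ (weight $t_k+p_k$) was already in the queue when $\beta t_j$ was popped, which requires the testing of $k$ to have been extracted earlier. If instead the testing of $j$ was extracted first, the only task of $k$ present in the queue at that moment is $\beta t_k$, so you can only conclude $\beta t_j \leq \beta t_k$, i.e.\ $t_j \leq t_k$, giving $c(k,j)+c(j,k) \leq 2t_k+p_k$ in this subcase. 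That value can strictly exceed your claimed $(1+\tfrac{1}{\beta})(t_k+p_k)$ whenever $(\beta-1)t_k > p_k$. The paper handles both subcases at once via $t_j \leq \max\{t_k, \tfrac{t_k+p_k}{\beta}\}$, yielding $\max\{2t_k+p_k,\ (1+\tfrac{1}{\beta})(t_k+p_k)\}$ for this case. The lemma's final maximum is unaffected because $2t_k+p_k$ is already supplied by your Case 3, but your Case 2 argument as written is incomplete and needs this missing subcase.

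A smaller remark on your closing intuition: it mislocates the trade-off relative to Lemma~\ref{lem:T}. The Case $c(k,j)=0$ bound is actually an \emph{improvement}, from $(1+\beta)t_k$ to $\beta t_k$, precisely because charging $t_j+p_j$ against $\beta t_k$ in one shot beats bounding $t_j\leq t_k$ and $p_j\leq \beta t_k$ separately; the case that weakens is Case 2, where $(1+\tfrac{1}{\beta})(t_k+p_k)$ replaces $t_k+(1+\tfrac{1}{\beta})p_k$. The overall gain comes from re-optimizing $\alpha$ and $\beta$ against this altered profile, not from a uniform per-case tightening.
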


\fullversion{
\begin{proof}
    Since job $k$ is tested, $c(k,j) = 0$, $t_k+p_k$, or $t_k$.
    The case $c(k,j) = 0$ happens only when all tasks regarding job $j$ are done before testing $k$. 
    Therefore, $c(j,k) = \alg{p_j}$, which is $u_j$ or $t_j+p_j$. In the first case, $u_j\leq \beta t_k$. In the second case, $t_j+p_j\leq \beta t_k$. Overall, $c(k,j)+c(j,k) \leq \beta t_k$ in this case.
    
    The case $c(k,j) = t_k+p_k$ happens only when $p_k$ is finished before the last task regarding job $j$. Therefore, $c(j,k) \leq t_j$. In this case, $t_j\leq \max\{t_k, \frac{t_k+p_k}{\beta}\}$. Overall, $c(k,j)+c(j,k) \leq \max\{2t_k+p_k, t_k+(1+\frac{1}{\beta})(t_k+p_k)\}$.

    If $c(k,j) = t_k$, $c(j,k) \leq \alg{p_j}$ since all tasks regarding job $j$ finish before the last task regarding job $k$. No matter optimal schedule tests $j$ or not, $\alg{p_j} \leq t_k+p_k$. 
    Overall, $c(k,j)+c(j,k) \leq 2t_k+p_k$. 
    \qedhere
\end{proof}
}
\hide{
\begin{proof}
    The proof is similar to the proof of Lemma~\ref{lem:T}. 
    There are three cases: 1) $c(k,j) = 0$, 2) $c(k,j) = t_k+p_k$, and 3) $c(k,j) = t_k$.
    In the first case, $c(j,k) \leq \opt{p_j}$. Unlike the arguments in Lemma~\ref{lem:T}, now we can charge $t_j+p_j$ to $\beta t_k$. 

    In the second case, the worst-case scenario happens when $c(j,k) = t_j$ and $t_j>t_k$, where $t_j$ can only be charged to $\frac{t_k+p_k}{\beta}$. Overall, $c(k,j)+c(j,k) \leq (1+\frac{1}{\beta})(t_k+p_k)$. 

    In the third case, the worst-case scenario is that $c(j,k) = \opt{p_j}$. It happens because the last task regarding $j$ starts the execution of $k$. In this case, $c(j,k)$ can be charged to $t_k+p_k$. Overall, $c(k,j)+c(j,k) \leq 2t_k+p_k$.
    \qedhere
\end{proof}
}

Similar to the proof of Theorem~\ref{Thm:amortized}, we have the following competitiveness results of the $\ouralgo$ algorithm.

\begin{theorem}
    \label{Thm:improved}
    The competitive ratio of $\ouralgo$ is at most 
    \begin{equation}
        \label{eq:pcp}
        \max\{\alpha(1+\frac{1}{\beta}), 1+\frac{1}{\alpha}+\frac{1}{\beta} + \frac{1}{\alpha\beta}, \beta, 2 , 1+\frac{2}{\alpha}\} \enspace .
    \end{equation}
    \shortversion{By choosing $\alpha = \frac{1 + \sqrt{5}}{2}$ and $\beta = \frac{1 + \sqrt{5} + \sqrt{2 (7 + 5 \sqrt{5})}}{4}$,
    the competitive ratio of $\ouralgo$ is $\frac{1 + \sqrt{5} + \sqrt{2 (7 + 5 \sqrt{5})}}{4} \leq 2.316513$.
    The choice is optimal for expression~\eqref{eq:pcp}.
    }
\end{theorem}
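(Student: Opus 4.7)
The plan is to mirror the proof of Theorem~\ref{Thm:amortized} but feed in the sharper pairwise bounds from Lemmas~\ref{lem:newN} and~\ref{lem:newT}. First I would reuse the amortized telescoping identity established in Subsection~\ref{subsec:amortization},
\[
  \cost(\ouralgo) = \sum_{j=1}^n \Bigl(\sum_{k <_o j} \bigl(c(k,j) + c(j,k)\bigr) + c(j,j)\Bigr),
\]
which reduces the theorem to proving that for every pair $k \leq_o j$ the sum $c(k,j) + c(j,k)$ is at most $r(\alpha,\beta)\cdot \opt{p_k}$, where $r(\alpha,\beta)$ is the five-term maximum in (\ref{eq:pcp}). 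Summing $\opt{p_k}$ over $k \leq_o j$ then recovers the global $r(\alpha,\beta)\cdot \cost(\optimal)$ bound exactly as in Theorem~\ref{Thm:amortized}.

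Next I would do a double case split. The outer case is on whether $\ouralgo$ tests $k$, which selects either the single bound $(1+\tfrac{1}{\beta})u_k$ of Lemma~\ref{lem:newN} or the three-way maximum $\max\{2t_k+p_k,\ \beta t_k,\ (1+\tfrac{1}{\beta})(t_k+p_k)\}$ of Lemma~\ref{lem:newT}. The inner case is on whether $\optimal$ tests $k$, fixing $\opt{p_k}$ to be either $t_k+p_k$ (with $u_k \geq t_k+p_k$) or $u_k$ (with $t_k+p_k \geq u_k$). In each branch, $\ouralgo$'s testing rule supplies the complementary inequality $u_k < \alpha t_k$ or $u_k \geq \alpha t_k$, which together with $p_k \leq u_k$ and $t_k \leq t_k+p_k$ lets me normalize every term by $\opt{p_k}$ and read off the worst-case ratio.

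Carrying out the normalization, the untested branch yields $\alpha(1+\tfrac{1}{\beta})$ when $\optimal$ tests (using $u_k/(t_k+p_k) \leq u_k/t_k < \alpha$) and $1+\tfrac{1}{\beta}$ when $\optimal$ does not, the latter dominated by the former since $\alpha \geq 1$. The tested branch yields $2$, $\beta$, and $1+\tfrac{1}{\beta}$ when $\optimal$ tests, and $1+\tfrac{2}{\alpha}$, $\beta/\alpha$, and $(1+\tfrac{1}{\beta})(1+\tfrac{1}{\alpha}) = 1 + \tfrac{1}{\alpha} + \tfrac{1}{\beta} + \tfrac{1}{\alpha\beta}$ when $\optimal$ does not, via $t_k \leq u_k/\alpha$ and $p_k \leq u_k$. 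Dropping the terms $\beta/\alpha$ and $1+\tfrac{1}{\beta}$ that are dominated for $\alpha,\beta \geq 1$ leaves exactly the five terms of (\ref{eq:pcp}).

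The main obstacle is bookkeeping rather than any single inequality: one must confirm that each of the three sub-maxima in Lemma~\ref{lem:newT} is compared against both possible values of $\opt{p_k}$ without producing any bound stronger than the five listed, and in particular that the $\beta t_k$ term never escapes the slot $\beta$ once $\optimal$ tests, nor the slot $\beta/\alpha \leq \beta$ once $\optimal$ does not. The optimal parameter choice stated in the hidden shortversion is then a Chebyshev-style equalization: imposing $\alpha(1+\tfrac{1}{\beta}) = 1 + \tfrac{1}{\alpha} + \tfrac{1}{\beta} + \tfrac{1}{\alpha\beta} = \beta$ gives $\alpha = \beta^2/(\beta+1)$ and reduces the second equation to a polynomial in $\beta$ whose positive real root is the stated radical $\beta = \tfrac{1+\sqrt{5}+\sqrt{2(7+5\sqrt{5})}}{4} \leq 2.316513$, with $\alpha = (1+\sqrt{5})/2$; one then checks that the remaining constraints $2$ and $1+\tfrac{2}{\alpha}=\sqrt{5}$ are slack at this point, so that the equalization indeed achieves the minimum of the max.
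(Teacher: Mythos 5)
Your proof is correct and takes essentially the same approach as the paper: the paper itself only states that the theorem follows ``similar to the proof of Theorem~\ref{Thm:amortized},'' and your write-up carries out exactly that argument, splitting on whether $\ouralgo$ tests $k$ (selecting Lemma~\ref{lem:newN} or~\ref{lem:newT}) and on whether $\optimal$ tests $k$ (fixing $\opt{p_k}$), then normalizing and discarding dominated terms to obtain the five-term maximum. Your parameter equalization also matches the paper's corollary proof up to the order in which the three binding terms are paired.
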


\fullversion{
\begin{corollary}
    \label{Cor:improved}
    By choosing $\alpha = \phi = \frac{1 + \sqrt{5}}{2}$ and $\beta = \frac{\phi + \sqrt{5\phi + 1}}{2}$,
    the competitive ratio of $\ouralgo$ is $\frac{\phi + \sqrt{5\phi + 1}}{2}\leq 2.316513$.
    This choice of $\alpha$ and $\beta$ is optimal for expression~\eqref{eq:pcp}.
\end{corollary}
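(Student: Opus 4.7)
The plan is to show two things: (i) the stated choice of $(\alpha,\beta)$ attains the claimed value for expression~(\ref{eq:pcp}), and (ii) no feasible pair in $[1,\infty)^2$ does better. I will first identify the \emph{active} terms at the optimum and then balance them by solving a small system.

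The key algebraic observation is that $1 + \frac{1}{\alpha} + \frac{1}{\beta} + \frac{1}{\alpha\beta}$ factors as $(1+\frac{1}{\alpha})(1+\frac{1}{\beta})$, and so it shares the factor $(1+\frac{1}{\beta})$ with the term $\alpha(1+\frac{1}{\beta})$. Hence
\[\max\{\alpha(1+\tfrac{1}{\beta}),\, (1+\tfrac{1}{\alpha})(1+\tfrac{1}{\beta})\} = (1+\tfrac{1}{\beta})\cdot\max\{\alpha,\,1+\tfrac{1}{\alpha}\},\]
and over $\alpha\ge 1$ the inner max is minimized at the unique fixed point of $\alpha\mapsto 1+\frac{1}{\alpha}$, namely $\alpha=\phi=\frac{1+\sqrt{5}}{2}$, with common value $\phi$. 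This pins $\alpha=\phi$ and reduces the task to minimizing $\max\{\phi(1+\frac{1}{\beta}),\,\beta\}$ over $\beta\ge 1$, since the first expression is strictly decreasing in $\beta$ and the second is strictly increasing.

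Balancing the two gives $\phi(1+\frac{1}{\beta})=\beta$, i.e.\ the quadratic $\beta^2-\phi\beta-\phi=0$ with positive root $\beta^\star=\frac{\phi+\sqrt{\phi^2+4\phi}}{2}$. Using $\phi^2=\phi+1$ I simplify $\phi^2+4\phi=5\phi+1=\frac{7+5\sqrt{5}}{2}$, which rewrites $\beta^\star$ as $\frac{1+\sqrt{5}+\sqrt{2(7+5\sqrt{5})}}{4}$, matching the stated value (approximately $2.316513$). To complete achievability, I must verify that the two remaining terms of the max are dominated at this $(\alpha,\beta)$: plainly $2<\beta^\star$, and $1+\frac{2}{\alpha}=1+\frac{2}{\phi}=2\phi-1=\sqrt{5}\approx 2.236<\beta^\star$.

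For optimality, the lower bound is symmetric: for any $(\alpha,\beta)\in[1,\infty)^2$, the factorization above shows the expression in~(\ref{eq:pcp}) is at least $\phi(1+\frac{1}{\beta})$, and combining with the term $\beta$ it is at least $\max\{\phi(1+\frac{1}{\beta}),\beta\}$, whose minimum over $\beta\ge 1$ is exactly $\beta^\star$ as computed. Hence $\beta^\star$ is simultaneously attained and unimprovable. I expect the main obstacle to be largely bookkeeping: verifying that the small factorization really does collapse the first two max-terms onto the common factor $(1+1/\beta)$, and that at $(\phi,\beta^\star)$ neither of the incidental terms $2$ nor $1+2/\alpha$ becomes binding. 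Once those checks are in place, the two one-dimensional minimizations in $\alpha$ and then in $\beta$ yield the corollary.
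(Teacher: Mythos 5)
Your proposal is correct and arrives at the same parameter values, but the route to optimality is cleaner and more rigorous than the paper's. The paper's proof simply equates the three active terms (setting term one equal to term two gives $\alpha=\phi$, setting term two equal to term three gives the stated $\beta$), and then asserts optimality with the remark that since $\alpha$ and $\beta$ appear in both numerators and denominators, any change makes one term larger — a monotonicity heuristic whose two-variable justification is left implicit. Your key extra observation — that $1+\frac{1}{\alpha}+\frac{1}{\beta}+\frac{1}{\alpha\beta}=(1+\frac{1}{\alpha})(1+\frac{1}{\beta})$ — pulls out the common factor $(1+\frac{1}{\beta})$ and yields the uniform lower bound $\max\{\alpha(1+\frac{1}{\beta}),(1+\frac{1}{\alpha})(1+\frac{1}{\beta})\}\ge \phi(1+\frac{1}{\beta})$ for every $\alpha\ge 1$, which decouples the problem into two genuinely one-dimensional minimizations. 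That turns the paper's balancing heuristic into an actual lower-bound proof of global optimality. The computations coincide: equating the first two terms is exactly $\alpha=1+\frac{1}{\alpha}$, and equating the second and third is exactly $\phi(1+\frac{1}{\beta})=\beta$; you also verify that $2$ and $1+\frac{2}{\alpha}=\sqrt{5}$ are dominated at $(\phi,\beta^\star)$, matching the paper's check. Same numbers and same balancing equations, but a sharper justification that the chosen $(\alpha,\beta)$ is a true global minimizer.
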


\begin{proof}
    We consider $\alpha(1+\frac{1}{\beta})$, $1+\frac{1}{\alpha}+\frac{1}{\beta} + \frac{1}{\alpha\beta}$, and $\beta$ in the $\max\{\ldots\}$.
    Let $S$ be the set of these three items.
    By making the first two items of $S$ equal, we obtain $\alpha = \frac{1 + \sqrt{5}}{2}$.
    We then make the second and the third items of $S$ equal, and with the obtained value of $\alpha$ plugged in, we get $\beta = \frac{\phi + \sqrt{5\phi + 1}}{2}$.
    With these specified values of $\alpha$ and $\beta$, the items in $S$ are all equal, and they are the largest items in the $\max\{\ldots\}$.
    Since $\alpha$ and $\beta$ are located in both numerators and denominators in $S$, any changes to $\alpha$ and $\beta$ yield one of the items larger.
    Thus no other choices of $\alpha$ and $\beta$ can provide a smaller value of the $\max\{\ldots\}$.
    \qedhere
\end{proof}
}

The selection of golden ratio for $\alpha$ balances the penalty of making a wrong guess for testing a job or not.

\smallskip

Note that using the analysis proposed in the work of Albers and Eckl~\cite{DBLP:conf/waoa/AlbersE20} on the new algorithm that put $t_j+p_j$ back to the priority list after testing job $j$, the competitive ratio is $\max\{\alpha, 1+\frac{1}{\alpha}\} + \max\{\alpha, 1+\frac{1}{\alpha}, \beta\}$. 
The best choice of the values is $\alpha = \phi$ and $\beta\in[1,\phi]$, and the competitive ratio is at most $2\phi$.

\subsection{Tightness of our analysis}\label{sec:algorithm_lower_bound}
In this section, we show that our analysis is close to tight.
Recall that the competitive ratio of $\ouralgo$ is at most $\max\{\alpha(1+\frac{1}{\beta}), 1+\frac{1}{\alpha}+\frac{1}{\beta} + \frac{1}{\alpha\beta}, \beta, 2 , 1+\frac{2}{\alpha}\}$. 
When $\alpha = \phi$ and $\beta = \frac{\phi + \sqrt{5\phi + 1}}{2}$, the ratio is dominated by the terms $\alpha(1+\frac{1}{\beta})$, $1+\frac{1}{\alpha}+\frac{1}{\beta} + \frac{1}{\alpha\beta}$, and $\beta$.
These terms correspond to different configurations of some pair of jobs $k$ and $j$.
For example, the bounds $\alpha(1+\frac{1}{\beta})$ and $\beta$ happen when at least one job $k$ or $j$ is executed untested by $\ouralgo$, and some properties must hold between the length of the tasks. 
On the other hand, the term $1 + \frac{1}{\alpha} + \frac{1}{\beta} + \frac{1}{\alpha\beta}$ happens when job $k$ is executed before job $j$ without testing, while $\ouralgo$ tests both jobs and the following conditions hold: 1) $t_k \leq t_j$, 2) $\beta \cdot t_j \leq t_k+p_k$, and 3) $t_k+p_k \leq t_j + p_j$. 
More specifically, to make the bound tightly holds between jobs $k$ and $j$, it must be $p_k = u_k$ and $t_j = \frac{t_j+p_j}{\beta}$.

These ``tight worst cases'' cannot occur between any pair of jobs. 
Consider the case with three jobs $1$, $2$, and $3$, where $t_1 = 1$, and $p_i = u_i = \alpha t_i$ for all $i \in \{1, 2, 3\}$.
In the following, we demonstrate that if jobs $1$ and $2$ form a pair of jobs that admits the bound $1+\frac{1}{\alpha}+\frac{1}{\beta}+\frac{1}{\alpha\beta}$ tightly, and so do jobs $2$ and $3$, then jobs $1$ and $3$ must not form such a bad configuration.

According to previous discussion, jobs $1$ and $2$ admit the bound $1+\frac{1}{\alpha}+\frac{1}{\beta}+\frac{1}{\alpha\beta}$ tightly when $p_1 = u_1 = \phi \cdot t_1 = \phi$, and $t_2 = \frac{t_1+p_1}{(\phi+\sqrt{5\phi+1})/2} \approx 1.130$.
Similarly, jobs $2$ and $3$ admit the bound tightly when $p_2 = u_2 = \phi \cdot t_2 \approx 2.959$, and $t_3 = \frac{t_2+p_2}{(\phi+\sqrt{5\phi+1})/2} \approx 1.278$.
Now, consider the pair of jobs $1$ and $3$. 
Although $t_1 \leq t_3$ and $t_1+p_1 \leq t_3 + p_3$, the critical condition $\beta \cdot t_3 \leq t_1+p_1$ does not hold. 
Therefore, by $\ouralgo$, jobs $1$ and $3$ actually form a sub-schedule where $t_1$ and $p_1$ are executed before $t_3$ and $p_3$, which only admits an upper bound of $\phi$.


To provide a lower bound of $\ouralgo$ with our choice of $\beta = \frac{\phi + \sqrt{5\phi + 1}}{4}$ and $\alpha = \phi$, we design an adversary instance, where each pair of jobs \emph{loosely} admits the bound of $1+\frac{1}{\alpha}+\frac{1}{\beta}+\frac{1}{\alpha\beta}$.

\begin{theorem}
    \label{Thm:ouralgo_lb}
    Let $\alpha = \phi$ and $\beta = \frac{\phi + \sqrt{5\phi + 1}}{4}$, the competitive-ratio of $\ouralgo$ is at least $2.26177$.
\end{theorem}

\begin{proof}
    We consider the case where for each job $j$, $u_j \geq \alpha$, and for each $k \leq_o j$
    \begin{itemize}
        \item $t_k \leq t_j$
        \item $t_k + p_k \geq \beta \cdot t_j$
        \item $t_k + p_k \leq t_j + p_j$
    \end{itemize}
    Then, the algorithm first tests all jobs before executing any of the tested jobs.
    If for each pair of jobs these inequalities must hold, only for the pair $1$ and $n$ can all inequalities be tight.
    The adversary constructs the input as follows: for all jobs $j$, $t_j = 1 + \frac{j-1}{n-1}\cdot\left(\frac{1 +\alpha}{\beta}-1\right)$, and $u_j = p_j = \alpha \cdot t_j$.
    Note that the testing time is an increasing arithmetic sequence, hence $t_k \leq t_j$ and $t_k + p_k \leq t_j + p_j$ hold trivially.
    Furthermore, $t_1 + p_1 = 1 + \alpha = \beta \cdot \frac{1+\alpha}{\beta} = \beta \cdot \left(1 + \frac{n-1}{n-1} \cdot \left( \frac{1+\alpha}{\beta} - 1 \right) \right) = \beta t_n$, thus $t_k + p_k \geq \beta t_j$ holds for all pairs $k\leq_oj$ while the inequality is tight for pair $1,n$.

    As for any job $j$, $t_j + p_j > u_j$, the optimal algorithm will test no jobs and execute them untested for cost $u_j = p_j =\alpha t_j$.
    Then, with $\alpha = \phi$ and $\beta = \frac{\phi + \sqrt{5\phi + 1}}{4}$, it follows for $n$ tending towards infinity that
{
\allowdisplaybreaks 
    \begin{align*}
    \lim\limits_{n\to\infty} \frac{\cost(\ouralgo)}{\cost(\optimal)} &= \lim\limits_{n\to\infty} \frac{\sum\limits_{j=1}^n\left( \sum\limits_{i=1}^n t_i + \sum\limits_{i\leq_o j} p_i\right)}{\sum\limits_{j=1}^n\sum\limits_{i\leq_o j} p_i} \\
    &= \lim\limits_{n\to\infty} \frac{\sum\limits_{j=1}^n\left( \sum\limits_{i=1}^n t_i + \sum\limits_{i \leq_o j} \phi t_i\right)}{\sum\limits_{j=1}^n\sum\limits_{i\leq_oj} \phi t_i} \\
    &= \lim\limits_{n\to\infty} \frac{n \sum\limits_{i=1}^n t_i}{\phi\sum\limits_{j=1}^n\sum\limits_{i\leq_oj} t_i} + \frac{\phi\sum\limits_{j=1}^n\sum\limits_{i \leq_o j} t_i}{\phi\sum\limits_{j=1}^n\sum\limits_{i\leq_oj} t_i}\\
    &= 1 + \lim\limits_{n\to\infty} \frac{n \sum\limits_{i=1}^n \left( 1 + \frac{i-1}{n-1}\cdot \frac{\phi^2-\beta}{\beta}\right)}{\phi\sum\limits_{j=1}^n\sum\limits_{i\leq_oj} \left( 1 + \frac{i-1}{n-1}\cdot \frac{\phi^2-\beta}{\beta} \right)}\\
    &= 1 + \lim\limits_{n\to\infty} \frac{n\left( \sum\limits_{i=1}^n  1 + \frac{\phi^2-\beta}{\beta (n-1)}\cdot \sum\limits_{i=1}^n (i - 1)\right)}{\phi\sum\limits_{j=1}^n \left( \sum\limits_{i\leq_oj}  1 + \frac{\phi^2-\beta}{\beta (n-1)}\cdot \sum\limits_{i\leq_oj} (i - 1) \right)}\\
    &= 1 + \lim\limits_{n\to\infty} \frac{n\left( n + \frac{\phi^2-\beta}{\beta (n-1)}\cdot \frac{(n-1)n}{2}\right)}{\phi\sum\limits_{j=1}^n \left( (j-1) + \frac{\phi^2-\beta}{\beta (n-1)}\cdot \frac{(j-2)(j-1)}{2} \right)}\\
    &= 1 + \lim\limits_{n\to\infty} \frac{n\left(n + \frac{\phi^2-\beta}{2\beta}\cdot n\right)}{\phi\left(\sum\limits_{j=1}^n  (j-1) + \frac{\phi^2-\beta}{2\beta (n-1)}\cdot \sum\limits_{j=1}^n \left(j-2\right)\left(j-1\right)\right)}\\
    &= 1 + \lim\limits_{n\to\infty} \frac{n^2 + \frac{\phi^2-\beta}{2\beta}\cdot n^2}{\phi\left(\frac{(n-1)n}{2} + \frac{\phi^2-\beta}{2\beta (n-1)}\cdot \frac{(n-2)(n-1)n}{3} \right)}\\
    &= 1 + \lim\limits_{n\to\infty} \frac{\left(1 + \frac{\phi^2-\beta}{2\beta}\right)n^2}{\frac{\phi}{2}\cdot(n^2 - n) + \frac{\phi^3-\phi\beta}{6\beta}\cdot (n^2-2n) }\\
    &\hspace{-22pt}\stackrel{\text{L'H\^{o}pital's rule}}{=} 1 + \lim\limits_{n\to\infty} \frac{\left(1 + \frac{\phi^2-\beta}{2\beta}\right)2}{\frac{\phi}{2} \cdot 2 + \frac{\phi^3-\phi\beta}{6\beta}\cdot 2 }\\
    &= 1 + \frac{\left(2 + \frac{\phi^2-\beta}{\beta}\right)}{\phi + \frac{\phi^3-\phi\beta}{3\beta} }\\
    &= 2.26177
\end{align*}
\qedhere
}
\end{proof}

\subsection{Preemption}
\label{subsec:preemption}

\hide{In this subsection, }We show that preempting the tasks does not improve the competitive ratio.
Intuitively, we show that given an algorithm $A$ that generates a preemptive schedule, we can find another algorithm $B$ that is capable of simulating $A$ and performs the necessary merging of preempted parts.
The simulation may make the timing of $A$'s schedule gain extra information about the real processing times earlier due to the advance of a testing task.
However, a non-trivial $A$ can only perform better by receiving the information earlier. 
Thus, $B$'s non-preemptive schedule has a total completion time at most that of $A$'s schedule.
\hide{
Intuitively, we show that given a schedule $S$ where jobs are preempted, we can transform it into a non-preemptive schedule that  performs the necessary merging of preempted parts.
The transformation may make the timing of $S$ gaining extra information about the real processing times earlier due to the advance of a testing task.
However, a non-trivial $S$ can only perform better by receiving the information earlier. 
Thus the schedule after transformation is non-preemptive and has a total completion time at most that of $S$.
}
\begin{lemma}
    In the \pr~problem on a single machine,
    if there is an algorithm that generates a preemptive schedule, then we can always find another algorithm that generates a non-preemptive schedule and performs as well as the previous algorithm in terms of competitive ratios.
\end{lemma}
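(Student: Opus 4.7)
The plan is to define $B$ as an internal simulator of $A$ that outputs a non-preemptive schedule. More precisely, $B$ runs $A$ in simulation on the same instance and maintains a FIFO queue; whenever $A$ (inside the simulation) completes a task $\tau$, $B$ appends $\tau$ to the queue, and $B$'s real schedule simply pulls tasks off the queue and runs each to completion without preemption. Thus $B$ executes exactly the same multiset of tasks as $A$, reordered by $A$'s task-completion order and with no preemption.

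The heart of the argument is a short completion-time inequality. Index the tasks as $\tau_1,\tau_2,\ldots$ in the order $A$ completes them, let $\ell_i$ denote the (identical) processing time of $\tau_i$, and let $f_i^A,f_i^B$ denote its completion time in $A$ and $B$, respectively. Since $B$ runs $\tau_1,\ldots,\tau_i$ contiguously in this order,
\[
f_i^B \;=\; \sum_{j\le i}\ell_j \;\le\; f_i^A,
\]
where the inequality holds because the single machine in $A$ can do at most $f_i^A$ units of work by time $f_i^A$ and it has already finished $\tau_1,\ldots,\tau_i$ by then. A job's completion time equals the finishing time of its last (execution) task, so job completion times can only decrease from $A$ to $B$; hence $\cost(B)\le\cost(A)$ on every instance, and the competitive ratio of $B$ is at most that of $A$.

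Two side checks are needed. First, $B$ is a genuine online algorithm: when $B$ is about to commit $\tau_{i+1}$ the simulation only needs test outcomes $A$ has obtained by time $f_{i+1}^A$, which come from testing tasks among $\tau_1,\ldots,\tau_i$; by the inequality above, $B$ has already completed them in its real schedule and hence knows their $p_j$ values. Second, the precedence constraint ``testing of a job precedes its execution'' is inherited for free: for any tested job $k$, $A$ must finish its test before it finishes its execution, so in the ordering $\tau_1,\tau_2,\ldots$ the testing appears before the execution, and $B$ runs them in that order.

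The main obstacle I expect is the information-flow worry in the previous paragraph: because $B$'s testings finish strictly earlier than $A$'s, one might fear that $A$ inside the simulation behaves differently than in a ``standalone'' run and therefore produces a different completion order. The resolution is that $A$'s behaviour is a deterministic function of the instance $(t_j,u_j,p_j)$ (together with its random bits, for a randomized $A$, which $B$ can couple to its simulation); the real-time moment at which $A$ sees a test result does not affect its reaction to that result, so the task order fed to $B$'s queue is unambiguous. Minor technicalities such as idle slots in $A$'s schedule or ties in completion times can be absorbed by arbitrary tie-breaking and do not disturb the key inequality $f_i^B\le f_i^A$.
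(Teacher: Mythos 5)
Your proof is correct, but it takes a genuinely different route from the paper's. The paper constructs $B$ by a sequence of \emph{local} ``right-merge'' swaps: each preempted piece $s_i$ of a task is shifted rightward past the intervening tasks $S$ to abut the next piece $s_{i+1}$, turning $(s_i,S,s_{i+1})$ into $(S,s_i,s_{i+1})$, and each swap is argued to be non-increasing in total completion time. Because a swap can advance a test in $S$ and thereby change the behaviour of $A$, the paper must additionally assert that any ``non-trivial'' modified algorithm $A'$ only benefits from earlier information --- a somewhat informal step --- and then iterate the merge. Your argument instead builds $B$ in one shot as the $A$-completion-order listing of whole tasks and proves $\cost(B)\le\cost(A)$ via the single work-conservation inequality $f_i^B=\sum_{j\le i}\ell_j\le f_i^A$, from which the job-completion-time domination follows because a job's last task is the same in both orders. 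The two constructions yield essentially the same final non-preemptive schedule, but your global argument sidesteps the paper's information-flow caveat entirely: $B$ runs $A$ purely in simulation (so $A$'s task order never changes), and you only need the observation that, when $B$ must commit to $\tau_{i+1}$, every test result the simulation requires was obtained by a task $\tau_j$ with $j\le i$, which $B$ has already finished by $f_i^B\le f_i^A$. This is tighter and, as you note, extends painlessly to randomized $A$ by coupling the random bits.
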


\fullversion{
\begin{proof}
    \hide{
    Given an algorithm $A$ that generates a preemptive schedule, we can find another algorithm $B$ that is capable of simulating $A$ and performs the necessary merging of preempted parts.
    By postponing a preempted part of a job and combining it with the later part of the job, the total completion time does not increase.
    Although this operation may change the behavior of $A$ due to the advance of a testing task, a non-trivial $A$ can only perform better by receiving the information of the testing task earlier.
    $B$ can still simulate $A$ by carefully following $A$'s behavior.
    Thus the schedule generated by $B$ is non-preemptive and has a total completion time at most that of $A$.
    }
    In a preemptive schedule, a job may be divided into multiple in-contiguous parts $s_1, s_2, \cdots$.
    In order to obtain a corresponding non-preemptive schedule, one may need to reschedule these parts such that they are executed together.
    One way to do so is to \emph{right-merge} each of these parts.
    Consider two parts $s_i$ and $s_{i+1}$ and the sequence of tasks $S$ located in between $s_i$ and $s_{i+1}$.
    A right-merging of $s_i$ changes the subsequence of tasks from $(s_i, S, s_{i+1})$ to $(S, s_i, s_{i+1})$.
    This operation varies the total completion time non-increasingly, since only the completion times of the jobs corresponding to $S$ have been changed, and they cannot increase.
    By right-merging all the parts for each job, we can obtain a non-preemptive schedule with equal or smaller total completion time.
    In the following paragraphs, we prove that such right-merging is always possible.
    More precisely, we show that: given an algorithm $A$ that generates a preemptive schedule, we can find another algorithm $B$ that is capable of simulating $A$ and performs necessary right-merging.
    The schedule generated by $B$ is non-preemptive and has a total completion time
    at most
    that of $A$.
    Thus the lemma follows.
    
    Algorithm $B$ will simulate algorithm $A$ and perform a right-merging for each part of preempted tasks that $A$ generates.
    Algorithm $A$ may change its behavior based on the results of testing tasks, and $B$ must follow $A$'s behavior carefully.
    We elaborate on $B$'s behavior in the following two cases.
    For ease of analysis, we assume without loss of generality that $B$ only performs a right-merging of $s_i$ and leaves the other parts still  preempted.
    One can apply the arguments repeatedly and complete the proof.

    Suppose $S$ does not contain any testing tasks.
    Let $t$ be the last testing task located before $s_i$ and $t'$ be the first testing task located after $s_{i+1}$, which means the subsequence containing these tasks is $(t, \ldots, s_i, S, s_{i+1}, \ldots, t')$.
    At the moment immediately after $t$ is executed, $B$ is able to simulate $A$'s behavior between $t$ and $t'$, and thus $B$ can perform a right-merging of $s_i$ and execute the other tasks in between $t$ and $t'$ accordingly.
    On the other hand, suppose $S$ contains testing tasks.
    Let $t$ be the last testing task in $S$ and $t'$ be the first testing task located after $s_{i+1}$, which means the subsequence containing these tasks is $(s_i, S, s_{i+1}, ..., t')$ where $t \in S$.
    In order to perform a right-merging of $s_i$, $B$ needs to postpone the execution of $s_i$.
    This makes the testing tasks in $S$ being executed earlier, and $A$ may change its behavior and become another algorithm, denoted by $A'$, due to the advance of the test results.
    We note that, for any non-trivial $A'$, it performs no worse than $A$ with respect to the tasks located before $t$ (excepting the postponed $s_i$).
    The detailed behavior of $B$ is as follows.
    It first postpones $s_i$ and sees if $A$ changes its behavior.
    If $A$ does not change its behavior, $B$ simulates $A$ until $t$.
    Otherwise, $B$ simulates $A'$ until $t$.
    Algorithm $B$ also executes the tasks assigned in the simulation (except $s_i$).
    After that, $B$ performs a right-merging of $s_i$ and executes the other tasks in between $t$ and $t'$ accordingly.
    \qedhere
\end{proof}
}

\section{Randomized algorithm on a single machine}\label{sec:rand}
The amortization also helps improve the performance of randomized algorithms.
We combine the $\ouralgo$ algorithm with the framework in the work of Albers and Eckl~\cite{DBLP:conf/waoa/AlbersE20}, where instead of using a fixed threshold $\alpha$, a job $j$ is tested with probability $\prob_j$, which is a function of $u_j$, $t_j$, and $\beta$.

\runtitle{Our randomized algorithm $\randalgo$}. For any job $j$ with $\frac{u_j}{t_j} < 1$ or $\frac{u_j}{t_j}>3$, we insert $u_j$ or $\beta t_j$ into the queue, respectively.  
For any job $j$ with $1\leq\frac{u_j}{t_j}\leq3$, we insert $\beta t_j$ into the queue with probability $\prob_j$ and insert $u_j$ with probability $1 - \prob_j$.
\hide{\footnote{One can verify that $0 \leq \prob_j \leq 1$ when $1 \leq r_j \leq 3$.}}
Once a testing task $t_j$ is executed, we insert $t_j+p_j$ into the queue.
(See Algorithm~\ref{Alg:randomized}.)

\begin{algorithm}[t]
    \caption{$\randalgo$ algorithm}
    \label{Alg:randomized}
    \begin{algorithmic}
        \State Initialize a priority queue $Q$
        \For{$j = 1, 2, 3, \cdots, n$}
            \State Let $r_j \gets \frac{u_j}{t_j}$
            \If{$r_j < 1$}
                \State $\prob_j \gets 0$
            \ElsIf{$r_j > 3$}
                \State $\prob_j \gets 1$
            \Else
                \State $\prob_j= \frac{3r_j^2-3r_j}{3r_j^2-4r_j+3}$
            \EndIf
            
            \State {Choose one of $\beta t_j$ and $u_j$ randomly with probability $\prob_j$ for $\beta t_j$ and $1 - \prob_j$ for~$u_j$}
            \State {Insert a testing task with weight $\beta t_j$ into $Q$ if $\beta t_j$ is chosen, and insert an execution-untested task with weight $u_j$ into $Q$ otherwise}
        \EndFor
    \State \textbf{Updated Queue-Execution}$(Q)$ \Comment{See Algorithm~\ref{Alg:new_queue}}
    \end{algorithmic}
\end{algorithm}

\shortversion{
\runtitle{Analysis.} The following lemma can be proven using Lemma~\ref{lem:newN} and Lemma~\ref{lem:newT}.}

\begin{lemma}
    The expected total completion time of the $n$ jobs is at most 
    \[\sum_j\sum_{k\leq_o j} (1+\frac{1}{\beta})u_k(1-\prob_k) + \max\{2t_k + p_k, \beta t_k, (1+\frac{1}{\beta})(t_k+p_k)\}\prob_k,\]
    where $\prob_k$ is the probability that job $k$ is tested. 
\end{lemma}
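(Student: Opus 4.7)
The plan is to decompose the expected total completion time into pairwise contributions and apply Lemmas~\ref{lem:newN} and~\ref{lem:newT} conditional on whether each job $k$ is tested by $\randalgo$. First I would fix an arbitrary realisation of the independent coins $\randalgo$ flips when initialising the queue~$Q$; conditional on this outcome, the algorithm is purely deterministic, and the task-ordering argument used to prove Lemmas~\ref{lem:newN} and~\ref{lem:newT} goes through verbatim. Those proofs never use the specific rule $u_k\geq\alpha t_k$ for testing; they only compare the weights that end up in~$Q$ (for example $\beta t_j \leq u_k$, or $t_j+p_j\leq \beta t_k$). Hence both lemmas hold pathwise for $\randalgo$ as well, as long as we interpret ``$\randalgo$ tests $k$'' as the outcome of the coin flip for~$k$.

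With that in hand, I would write
\begin{equation*}
    \cost(\randalgo) = \sum_{j=1}^n \sum_{k=1}^n c(k,j) = \sum_{j=1}^n \Bigl( c(j,j) + \sum_{k<_o j}\bigl(c(k,j)+c(j,k)\bigr)\Bigr),
\end{equation*}
and take expectation. For each pair $k<_o j$, conditioning on the (independent) coin of job~$k$ and invoking Lemmas~\ref{lem:newN} and~\ref{lem:newT} on the two events yields
\begin{equation*}
    \expect\bigl[c(k,j)+c(j,k)\bigr] \leq (1+\tfrac{1}{\beta}) u_k (1-\prob_k) + \max\bigl\{2t_k+p_k,\ \beta t_k,\ (1+\tfrac{1}{\beta})(t_k+p_k)\bigr\}\prob_k,
\end{equation*}
which is exactly the summand appearing in the claim.

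To absorb the diagonal term into the sum over $k\leq_o j$, I would observe that $c(j,j)=\alg{p_j}$ equals $t_j+p_j$ when $j$ is tested and $u_j$ otherwise. Since $t_j+p_j\leq \max\{2t_j+p_j,\beta t_j,(1+\tfrac{1}{\beta})(t_j+p_j)\}$ and $u_j\leq (1+\tfrac{1}{\beta})u_j$, the $k=j$ contribution $\expect[c(j,j)]$ is dominated by the same expression, so extending the inner sum from $k<_o j$ to $k\leq_o j$ only loosens the bound. Summing over all such pairs and using linearity of expectation produces the stated inequality.

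The main obstacle I anticipate is the first paragraph: I must justify carefully that Lemmas~\ref{lem:newN} and~\ref{lem:newT} remain valid for $\randalgo$ and that conditioning on job~$k$'s coin decouples cleanly from the coins of the other jobs. Once the pathwise interpretation is established and the independence structure of the coin flips is pinned down, the remainder is a mechanical application of the law of total expectation and linearity, with no further case analysis beyond the pairwise bounds already in place.
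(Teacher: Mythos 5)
Your proof is correct and follows essentially the same route as the paper: decompose the expected cost into pairwise contributions $c(k,j)+c(j,k)$, condition on whether job $k$ is tested, and invoke Lemmas~\ref{lem:newN} and~\ref{lem:newT}. You are in fact more careful than the paper's terse proof in two places it leaves implicit — namely, that the two lemmas hold pathwise for $\randalgo$ once the coin flips are fixed, and that the diagonal term $c(j,j)=\alg{p_j}$ is dominated by the same expression so it absorbs cleanly into the $k\leq_o j$ sum — and you avoid the paper's typo where the roles of ``tested'' and ``not tested'' are swapped in the stated conditional expectations.
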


\shortversion{
Depending on whether the jobs are tested or not in the optimal schedule, the expected total completion time can be expressed by functions with the variables: the probability, the parameters of the jobs, and $\beta$.
We design the probability $\prob_k$ by balancing the costs between the worst cases where $\opt{p_k} = u_k$ or $\opt{p_k} = t_k+p_k$. 
Note that there are cases where the ``ideal'' value of $\prob_k$ is outside the range $[0,1]$.
We take care of these special cases by setting $\prob_k$ as $0$ or $1$ if the ideal value is smaller than $0$ or larger than $1$, respectively.
}

\fullversion{
\begin{proof}
    By Lemma~\ref{lem:newN} and Lemma~\ref{lem:newT}, $\expect[c(k,j)+c(j,k)\mid k \text{ is not tested}] \leq (1+\frac{1}{\beta})u_k$, and $\expect[c(k,j)+c(j,k)\mid k \text{ is tested}] \leq \max\{2t_k + p_k, \beta t_k, (1+\frac{1}{\beta})(t_k+p_k)\}$. 
    Therefore, $\expect[c(k,j)+c(j,k)] = (1+\frac{1}{\beta})u_k(1-\prob_k) + \max\{2t_k + p_k, \beta t_k, (1+\frac{1}{\beta})(t_k+p_k)\}\prob_k$.
    \qedhere
\end{proof}
}

\begin{theorem}
    \label{Thm:randomized}
    Let $r_k$ denote $\frac{u_k}{t_k}$. 
    The expected competitive ratio of
    $\randalgo$
    is at most
    \[\max_k \frac{(1+\frac{1}{\beta})u_k(1-\prob_k)+\max\{2t_k+p_k, \beta t_k, (1+\frac{1}{\beta})(t_k+p_k)\}\prob_k}{\opt{p_k}}\text{, where }\]
    \[\prob_k = \frac{(\beta+1)(r_k-1)}{\beta(\max\{\frac{2}{r_k}+1, \frac{\beta}{r_k}, (1+\frac{1}{\beta})(1+\frac{1}{r_k})\}-\max\{2,\beta,1+\frac{1}{\beta}\}+r_k-1)+r_k-1}\] 
    if $r_k\in [1,3]$, $\prob_k = 0$ if $r_k < 1$, and $\prob_k = 1$ if $r_k>3$.
    \shortversion{
    By choosing $\beta = 2$, the ratio is $\frac{3 (7 + 3 \sqrt{6})}{20} \leq 2.152271$.
    The choice of $\beta$ is optimal.
    }
\end{theorem}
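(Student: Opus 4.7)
The plan is to reduce the expected competitive ratio to a per-job worst-case ratio, then choose $\prob_k$ by equating two adversarial scenarios. Combining the previous lemma with $\cost(\optimal) = \sum_j\sum_{k\leq_o j}\opt{p_k}$, and noting that the diagonal contributions $c(j,j)=\alg{p_j}$ fit the same template with $k=j$, the competitive ratio is bounded by the maximum over jobs $k$ and admissible $p_k$ of the quotient
\begin{align*}
\frac{(1+\tfrac{1}{\beta})u_k(1-\prob_k) + \max\{2t_k+p_k,\,\beta t_k,\,(1+\tfrac{1}{\beta})(t_k+p_k)\}\,\prob_k}{\opt{p_k}}.
\end{align*}

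Next I would normalize by $t_k$, writing $r_k=u_k/t_k$ and $x=p_k/t_k\in[0,r_k]$, so that $\opt{p_k}/t_k=\min(r_k,1+x)$, and split on whether $\optimal$ tests $k$. In Case~A (when $x\geq r_k-1$, i.e.\ $\opt{p_k}=u_k$), the numerator is non-decreasing in $x$ while the denominator is constant, so the adversary picks $x=r_k$, giving
\begin{align*}
A(\prob_k)=(1+\tfrac{1}{\beta})(1-\prob_k)+A_2\,\prob_k,\quad A_2:=\max\left\{\tfrac{2}{r_k}+1,\tfrac{\beta}{r_k},(1+\tfrac{1}{\beta})(\tfrac{1}{r_k}+1)\right\}.
\end{align*}
In Case~B (when $x\leq r_k-1$, i.e.\ $\opt{p_k}=(1+x)t_k$), a direct derivative check in each of the three pieces of the numerator's $\max$ shows the quotient is non-increasing in $x$, so the adversary picks $x=0$, yielding
\begin{align*}
B(\prob_k)=(1+\tfrac{1}{\beta})r_k(1-\prob_k)+B_2\,\prob_k,\quad B_2:=\max\{2,\beta,1+\tfrac{1}{\beta}\}.
\end{align*}
The per-job worst-case ratio is therefore $\max\{A(\prob_k),B(\prob_k)\}$.

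The last step is choosing $\prob_k$. For $r_k\geq 1$, $A$ is non-decreasing in $\prob_k$ (since $A_2\geq 1+\tfrac{1}{\beta}$), so I would equate $A(\prob_k)=B(\prob_k)$ and solve the resulting linear equation, which rearranges to
\begin{align*}
\prob_k=\frac{(\beta+1)(r_k-1)}{\beta(A_2-B_2+r_k-1)+(r_k-1)},
\end{align*}
matching the theorem once $A_2$ and $B_2$ are expanded. The main bookkeeping obstacle lies at the boundaries of the valid range of $\prob_k$: for $r_k<1$, Case~B is vacuous (no admissible $x$ satisfies $x\leq r_k-1$) and $A$ is increasing, so $\prob_k=0$ is optimal; for $r_k>3$, one verifies using $\beta\geq 1$ to identify which term realises each of $A_2$ and $B_2$ that the balanced value exceeds $1$ while $B$ still dominates at $\prob_k=1$, whence the best allowed choice is $\prob_k=1$. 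Pinning down that the crossover thresholds are exactly $1$ and $3$ is the delicate case-splitting part of the calculation.
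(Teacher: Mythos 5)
Your approach matches the paper's: bound the expected total completion time per-job, split according to whether $\opt{p_k}=u_k$ or $\opt{p_k}=t_k+p_k$, take the adversarial $p_k$ in each branch ($p_k=u_k$ and $p_k=0$ respectively) to reduce to two linear functions of $\prob_k$, and equate them to solve for $\prob_k$. Your derivative checks justifying the adversarial choices and your discussion of clamping $\prob_k$ to $[0,1]$ make explicit steps the paper leaves implicit, but the argument is the same; the only caveat is that the cutoffs $r_k<1$ and $r_k>3$ are exact only for $\beta=2$ (the value used in the corollary), a point the paper also glosses over.
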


\fullversion{
\begin{proof}
    \begin{align*}
        &\expect[\sum_j C_j] = \expect[\sum_j \sum_{k\leq_o j}c(k,j)+c(j,k)] = \sum_j \sum_{k\leq_o j} \expect[c(k,j)+c(j,k)]\\
        & = \sum_j \sum_{k\leq_o j} \frac{(1+\frac{1}{\beta})u_k(1-\prob_k) + \max\{2t_k + p_k, \beta t_k, (1+\frac{1}{\beta})(t_k+p_k)\}\prob_k}{\opt{p_k}}\cdot\opt{p_k}\\
        & \leq \max_k \frac{(1+\frac{1}{\beta})u_k(1-\prob_k) + \max\{2t_k + p_k, \beta t_k, (1+\frac{1}{\beta})(t_k+p_k)\}\prob_k}{\opt{p_k}}\cdot \cost(\optimal)
    \end{align*}

    Next, we explain how to find $\prob_k$.
    There are two cases of $\opt{p_k}$: 1) $\opt{p_k} = u_k$, and 2) $\opt{p_k} = t_k+p_k$. 
    In the first case, the expected competitive ratio is at most 
    \begin{equation}
        \label{eq:rand_N}
        (1+\frac{1}{\beta})(1-\prob_k) + \max\{\frac{2}{r_k}+1, \frac{\beta}{r_k}, (1+\frac{1}{\beta})(1+\frac{1}{r_k})\}\prob_k
    \end{equation}
    Similarly, in the second case, the expected competitive ratio is at most
    \begin{equation}
        \label{eq:rand_T}
        (1+\frac{1}{\beta})r_k(1-\prob_k) + \max\{2,\beta,1+\frac{1}{\beta}\}\prob_k
    \end{equation}
    We optimize $\prob_k$ by letting {expressions}~\eqref{eq:rand_N} and~\eqref{eq:rand_T} equal.
    And $\prob_k$ is obtained as stated.
    \qedhere
\end{proof}
}

\hide{
\begin{corollary}
    By choosing $\beta = 2$,
    $\randalgo$
    has expected competitive ratio $\frac{3 (7 + 3 \sqrt{6})}{20} \leq 2.152271$.
    In this case, job $j$ is tested in a probability of $\frac{3r_j^2-3r_j}{3r_j^2-4r_j+3}$ for $1 \leq r_j \leq 3$, where $r_j = \frac{u_j}{t_j}$.
    The choice of $\beta$ is optimal.
\end{corollary}
}

\fullversion{
\begin{corollary}
    By choosing $\beta = 2$,
    $\randalgo$
    has expected competitive ratio $\frac{3 (7 + 3 \sqrt{6})}{20} \leq 2.152271$. In this case, job $j$ is tested in a probability of $\frac{3r_j^2-3r_j}{3r_j^2-4r_j+3}$ for $1 \leq r_j \leq 3$, where $r_j = \frac{u_j}{t_j}$.
\end{corollary}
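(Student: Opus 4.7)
The plan is to build on the preceding lemma (which bounds the expected total completion time by a sum over pairs $k \leq_o j$ of expected contributions) and then reduce the competitive-ratio analysis to a per-job optimization. First I would write
\[
\expect\bigl[\cost(\randalgo)\bigr] = \sum_{j}\sum_{k \leq_o j}\expect[c(k,j)+c(j,k)]
\]
and, using Lemma~\ref{lem:newN} and Lemma~\ref{lem:newT}, bound each term by
\[
(1+\tfrac{1}{\beta})u_k(1-\prob_k) + \max\{2t_k+p_k,\,\beta t_k,\,(1+\tfrac{1}{\beta})(t_k+p_k)\}\prob_k.
\]
Dividing by $\cost(\optimal) = \sum_j\sum_{k\leq_o j}\opt{p_k}$ and pulling out the maximum over $k$ yields the ratio bound claimed in the theorem statement. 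This part follows directly from the deterministic amortization machinery already established.

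Next I would determine the probability $\prob_k$ that minimizes the per-job ratio. The key is that $\opt{p_k}$ has two possible values, $u_k$ or $t_k+p_k$, corresponding to whether $\optimal$ tests job $k$. Substituting and setting $r_k = u_k/t_k$, I would express the ratio, in each case, as a linear function of $\prob_k$:
\begin{align*}
\text{(Case } \opt{p_k}=u_k\text{):} \quad & (1+\tfrac{1}{\beta})(1-\prob_k) + \max\{\tfrac{2}{r_k}+1,\tfrac{\beta}{r_k},(1+\tfrac{1}{\beta})(1+\tfrac{1}{r_k})\}\prob_k,\\
\text{(Case } \opt{p_k}=t_k+p_k\text{):} \quad & (1+\tfrac{1}{\beta})r_k(1-\prob_k) + \max\{2,\beta,1+\tfrac{1}{\beta}\}\prob_k.
\end{align*}
The first is decreasing in $\prob_k$ when $r_k$ is small and the second is increasing in $\prob_k$ when $r_k$ exceeds some threshold, so the worst case over the two values of $\opt{p_k}$ is minimized by equating the two expressions. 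Solving this linear equation in $\prob_k$ produces the stated formula.

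Finally I would handle the boundary behavior. The balancing probability above is only meaningful when it lies in $[0,1]$; a short monotonicity check shows the right-hand side vanishes at $r_k=1$ and reaches $1$ precisely when $r_k=3$ (in both cases the dominant terms in the $\max\{\ldots\}$ expressions can be identified explicitly using $\beta \in [1,3]$, which is the relevant regime). Outside this interval, the ratio is minimized at a boundary: for $r_k<1$ the first case dominates and the ratio is decreasing in $\prob_k$ over $[0,1]$ after reflection, so we set $\prob_k=0$; for $r_k>3$ the second case dominates and we set $\prob_k=1$. The main obstacle, and the only genuinely delicate calculation, is verifying that the piecewise choice of $\prob_k$ is indeed optimal around the boundaries $r_k=1$ and $r_k=3$, i.e.\ that the $\max\{\cdot\}$ terms in the numerator and denominator transition consistently; this reduces to checking that the linear functions' slopes have the expected signs throughout $[1,3]$, which is a routine but careful verification that depends on the specific choice of $\beta$ made in the corollary.
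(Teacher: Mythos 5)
Your framework is correct and matches the route the paper takes: decompose the expected cost using the lemma, split into the two cases $\opt{p_k}=u_k$ and $\opt{p_k}=t_k+p_k$, and balance the two resulting linear functions of $\prob_k$ to obtain the stated probability formula. But note that most of what you describe is really a re-derivation of Theorem~\ref{Thm:randomized}, which is already established; the corollary's own content begins once $\beta=2$ is fixed, and that is exactly where your proposal stops short.

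The genuine gap is that you never compute the claimed ratio $\frac{3(7+3\sqrt{6})}{20}$. After setting $\beta=2$ one must observe that for $r_j\in[1,3]$ the dominant term in $\max\{\frac{2}{r_j}+1,\frac{2}{r_j},\frac{3}{2}(1+\frac{1}{r_j})\}$ is $\frac{3}{2}(1+\frac{1}{r_j})$ and in $\max\{2,2,\frac{3}{2}\}$ is $2$, so equating the two expressions yields $\prob_j=\frac{3r_j^2-3r_j}{3r_j^2-4r_j+3}$; substituting this back gives the balanced ratio $\frac{9r_j^2-3r_j}{6r_j^2-8r_j+6}$, whose derivative vanishes at $3r_j^2-6r_j+1=0$, i.e.\ $r_j=1+\sqrt{2/3}\approx 1.8165$, where the value is $\frac{3(7+3\sqrt{6})}{20}$. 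This interior maximization is the heart of the corollary, not the boundary behavior. In fact you have the emphasis inverted: the boundary cases $r_j<1$ and $r_j>3$ give ratios of only $\frac{3}{2}$ and $2$ respectively, both below the claimed bound, so they are the easy part. The \emph{only} place the bound $2.152271$ can come from is the interior critical point, and a complete proof must exhibit it.
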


\begin{proof}
    We show that for any $r_j$, both expressions~\eqref{eq:rand_N} and~\eqref{eq:rand_T} are at most $\frac{3 (7 + 3 \sqrt{6})}{20}$.
    If $r_j > 1$, $\prob_j = 0$ by the algorithm.
    Expression~\eqref{eq:rand_N} is $\frac{3}{2}$, and expression~\eqref{eq:rand_T} is $\frac{3}{2} r_j \leq \frac{3}{2}$.
    If $r_j < 3$, $\prob_j = 1$ by the algorithm.
    Expression~\eqref{eq:rand_N} is $\frac{3}{2} (1 + \frac{1}{r_j}) \leq \frac{3}{2} (1 + \frac{1}{3}) = 2$, and expression~\eqref{eq:rand_T} is $2$.
    Otherwise, $1 \leq r_j \leq 3$.
    To find the max of expressions~\eqref{eq:rand_N} and~\eqref{eq:rand_T}, we make the two expressions equal.
    This gives us $\prob_j = \frac{3 r_j^2 - 3 r_j}{3 r_j^2 - 4 r_j + 3}$.
    By plugging $\prob_j$ back to expression~\eqref{eq:rand_N}, we obtain $\frac{9 r_j^2 -3 r_j}{6 r_j^2 - 8 r_j + 6}$.
    This function has the maximum $\frac{3 (7 + 3 \sqrt{6})}{20}$, which happens at $r_j = 1 + \sqrt{\frac{2}{3}} \approx 1.816497$.
    \qedhere
\end{proof}
}

\fullversion{
\begin{lemma}
    \label{Lem:bestBeta}
    The choice of $\beta = 2$ minimizes the expected competitive ratio in the analysis among all possible $\beta > 0$.
\end{lemma}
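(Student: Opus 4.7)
The plan is to show that the function $f(\beta) := \max_{r_k \in (0, \infty)} R(\beta, r_k)$ is uniquely minimized at $\beta = 2$, where $R(\beta, r_k)$ denotes the per-job expected ratio obtained after balancing the two cases via $\prob_k$ as prescribed in Theorem~\ref{Thm:randomized}. For $r_k$ outside $[1, 3]$, the algorithm sets $\prob_k$ deterministically, and direct substitution shows the ratio there is dominated by $\max\{1 + 1/\beta,\ \max\{2,\beta,1+1/\beta\}\}$, which in the regime of interest (roughly $\beta$ near $2$) is bounded above by $f(2) = \frac{3(7+3\sqrt{6})}{20}$. So the supremum of interest is attained for some $r_k \in [1, 3]$, matching the analysis of the corollary.

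Inside the interval $r_k \in [1, 3]$, I would first reduce the two nested maxima $M_1(\beta, r) := \max\{2/r + 1,\ \beta/r,\ (1 + 1/\beta)(1 + 1/r)\}$ and $M_2(\beta) := \max\{2,\ \beta,\ 1 + 1/\beta\}$ to explicit piecewise expressions. The breakpoints of $M_2$ lie at $\beta = 1$ and $\beta = 2$; those of $M_1$ arise from comparisons among the three candidates as $\beta$ and $r$ vary. In each resulting cell, $R(\beta, r_k)$ becomes a rational function of $r_k$, and setting $\partial R/\partial r_k = 0$ yields the worst-case $r^*(\beta)$. Substitution produces $f(\beta)$ as a piecewise function of $\beta$ alone; at $\beta = 2$ this recovers the expression $\frac{9r^2 - 3r}{6r^2 - 8r + 6}$ from the corollary's proof.

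The crux is then showing $f$ is minimized at $\beta = 2$. The intuition is that $\beta = 2$ balances two opposing pressures: for $\beta > 2$, the term $M_2 = \beta$ grows linearly, directly inflating expression~\ref{eq:rand_T}; for $\beta < 2$, the coefficient $(1 + 1/\beta)$, which is ubiquitous in expression~\ref{eq:rand_N} and in $M_1$, increases, inflating the untested-job contributions. Formally, I would (i) differentiate $f$ in each regime and verify that $f'(\beta) < 0$ for $\beta \in [1, 2)$ and $f'(\beta) > 0$ for $\beta > 2$ using the envelope theorem (since $r^*(\beta)$ varies smoothly away from breakpoints), or (ii) exhibit explicit witnesses by substituting $r_0 = 1 + \sqrt{2/3}$ (the argmax at $\beta = 2$) and checking directly that $R(\beta, r_0) > f(2)$ for every $\beta \neq 2$ in a neighborhood of $2$, with cruder bounds handling $\beta \le 1$ and large $\beta$.

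The main obstacle is the piecewise structure of $M_1$ and $M_2$, especially at the breakpoint $\beta = 2$ where $M_2$ transitions from the constant $2$ to $\beta$. At this breakpoint, left- and right-derivatives of $f$ must be compared separately to confirm the kink is a minimum rather than an inflection. Additionally, the rational forms of $R(\beta, r)$ that arise within each cell are algebraically heavy; while every manipulation is routine, verifying monotonicity of $f$ on both sides of $\beta = 2$ without sign errors is where the main care is required. Once the one-sided comparisons are in hand and a monotonicity check extends the argument to all of $(0, \infty)$, the optimality of $\beta = 2$ follows.
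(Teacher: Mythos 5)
Your option~(ii) is essentially the paper's route: the paper also proves the lemma by exhibiting, for each range of $\beta$, an explicit witness $r_j$ — namely $r_j=2$ for $\beta\le 1$, $r_j=1+\sqrt{2/3}$ for $1<\beta\le 2+\sqrt{2/3}$, and $r_j=\beta-1$ for larger $\beta$ — and then checking that the resulting single-variable rational functions of $\beta$ are monotone on each side of $\beta=2$ with the common value $\frac{3(7+3\sqrt{6})}{20}$ there. Your identification of $r_0=1+\sqrt{2/3}$ as the crucial witness near $\beta=2$ is exactly what the paper uses, and your intuition about the opposing pressures of $1+\frac{1}{\beta}$ versus $\beta$ across the $M_2$ breakpoint is the right picture; what you defer (resolving the nested maxima into explicit rational forms and carrying out the monotonicity checks) is the body of that proof. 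One caution: the phase-1 reduction to $r_k\in[1,3]$ in your first paragraph is not needed for a lower-bound argument — you only need a single witness per $\beta$ — and the claimed domination by $\max\{1+\frac{1}{\beta},2,\beta\}$ for $r_k>3$ is in fact incorrect once $\beta$ moves below $2$, since there expression~(\ref{eq:rand_N}) at $\prob_k=1$ equals $\max\{\frac{2}{r_k}+1,\frac{\beta}{r_k},(1+\frac{1}{\beta})(1+\frac{1}{r_k})\}$, which for $r_k$ just above $3$ exceeds that bound; this is an artifact of the hard thresholds $r_k\in\{1,3\}$ in the algorithm being tuned to $\beta=2$. Your alternative option~(i), an envelope-theorem analysis of $f'(\beta)$, is a genuine departure from the paper's case-by-case substitution and would also work, but it needs the left/right derivative comparison you flag at the kink $\beta=2$, which the paper avoids by checking one-sided monotonicity directly.
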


\begin{proof}
    Given any $\beta > 0$, we show that there exists an $r_j$ such that the max of expressions~\eqref{eq:rand_N} and~\eqref{eq:rand_T} is at least $\frac{3 (7 + 3 \sqrt{6})}{20}$, and the lemma follows.
    To find the max of the two expressions, we use $\prob_j$ stated in Theorem~\ref{Thm:randomized}.
    For ease of analysis, we explicitly list the following two expressions that will be referred frequently in the proof.
    \begin{equation}
        \max\{\frac{2}{r_j} + 1, \frac{\beta}{r_j}, (1 + \frac{1}{\beta}) (1 + \frac{1}{r_j})\}
        \label{eq:x}
    \end{equation}
    \begin{equation}
        \max\{2, \beta, 1 + \frac{1}{\beta}\}
        \label{eq:y}
    \end{equation}
    We consider four cases of $\beta$.
    If $0 < \beta \leq 1$, we set $r_j = 2$.
    In this case, expression~\eqref{eq:x} is $\frac{3}{2} (1 + \frac{1}{\beta})$ and expression~\eqref{eq:y} is $1 + \frac{1}{\beta}$.
    The max of expressions~\eqref{eq:rand_N} and~\eqref{eq:rand_T} is $\frac{4}{3} + \frac{4}{3 \beta} \geq \frac{4}{3} + \frac{4}{3} \geq 2.66$.
    For $1 < \beta \leq 2$, we set $r_j = 1 + \sqrt{\frac{2}{3}}$.
    We have that expression~\eqref{eq:x} is $(1 + \frac{1}{\beta}) (4 - \sqrt{6})$, and expression~\eqref{eq:y} is $2$.
    The max of expressions~\eqref{eq:rand_N} and~\eqref{eq:rand_T} is $\frac{(1 + \beta) (\sqrt{6} \beta + \sqrt{6} + 6)}{2 \beta ((3 - \sqrt{6}) \beta - \sqrt{6} + 6)}$.
    This function decreases in the domain $1 < \beta \leq 2$ when $\beta$ goes larger.
    So, the function has the minimum in the domain at $\beta = 2$, which is $\frac{3 (7 + 3 \sqrt{6})}{20}$.
    If $2 < \beta \leq 2 + \sqrt{\frac{2}{3}}$, we also set $r_j = 1 + \sqrt{\frac{2}{3}}$.
    In this case, expression~\eqref{eq:x} is $(4 - \sqrt{6}) (1 + \frac{1}{\beta})$ and expression~\eqref{eq:y} is $\beta$.
    The max of expressions~\eqref{eq:rand_N} and~\eqref{eq:rand_T} is $\frac{(\beta + 1) (3 \beta^2 - (\sqrt{6} + 6) (\beta + 1))}{\beta (3 \beta^2 + (2 \sqrt{6} - 12) (\beta + 1))}$.
    This function increases in the domain $2 < \beta \leq 2 + \sqrt{\frac{2}{3}}$ when $\beta$ goes larger.
    So, the function has the minimum in the domain at $\beta = 2$, which is also $\frac{3 (7 + 3 \sqrt{6})}{20}$.
    Otherwise, $\beta > 2 + \sqrt{\frac{2}{3}}$.
    We set $r_j = \beta - 1$.
    We have that expression~\eqref{eq:x} is $\frac{\beta + 1}{\beta - 1}$, and expression~\eqref{eq:y} is $\beta$.
    The max of expressions~\eqref{eq:rand_N} and~\eqref{eq:rand_T} is $\frac{\beta^2 - 1}{2}$.
    This function increases in the domain $\beta \geq 2 + \sqrt{\frac{2}{3}}$ when $\beta$ goes larger.
    So, the function has the minimum in the domain at $\beta = 2 + \sqrt{\frac{2}{3}}$, which is $\frac{11 + 4 \sqrt{6}}{6} \geq 3.46$.
    \qedhere
\end{proof}
}

\section{\pw{Extension} to multiple machines}\label{sec:parallel}
\subsection{Review of the framework in~\cite{DBLP:journals/algorithmica/GongCH23}}
\label{sec:para_algo}

For multiple machines, Gong et al.~\cite{DBLP:journals/algorithmica/GongCH23} proposed a framework based on Albers and Eckl's work.
When $m$ tends to infinity, the ratio of their algorithm is $2.92706$.
The authors employed the lower bound that the optimal cost is at least
\begin{equation} \label{eq:opt_LB_parallel}
    \max\{\sum_{j=1}^n p_j^*, \frac{1}{m}\sum_{j=1}^n \sum_{k=1}^j p_k^* + (\frac{1}{2}-\frac{1}{2m})\sum_{j=1}^n p_j^*\},\footnote{This is recasted for fitting our framework. And the order of $j$ is according to the execution order of the jobs in the optimal schedule.} 
\end{equation}
where $p_j^* = \min\{u_j, t_j+p_j\}$, and proposed a $2\phi$-competitive algorithm, adapted from $\AEalgo$, for variable testing time case and a $\phi+\frac{\phi+1}{2}\cdot(1-\frac{1}{m})$-competitive algorithm for uniform testing time case. 
The algorithms proposed by the authors were modified from the $\AEalgo$ with a more sophisticated ordering of the tasks (testing, execution, and execution-untested). More specifically, Gong et al. also use the priority list of the jobs where the jobs $j$ after testing have weights $t_j+p_j$.
In this section, we extend our technique to the framework for multiple machines proposed by Gong et al.

\bigskip

\runtitle{The algorithm for multiple machines.}
The algorithm in~\cite{DBLP:journals/algorithmica/GongCH23} is similar to $\ouralgo$ (Algorithm~\ref{Alg:Susanne}).
We recast the algorithm, as $\paralgo$, by replacing the subroutine described in Algorithm~\ref{Alg:new_queue} with a subroutine that is logically equivalent, which is however able to deal with multiple machines (see Algorithm~\ref{Alg:new_queue_PM}).
That is, when a machine $m$ is finished executing a task, in the case that it was a testing task for job $i$, the algorithm will insert task $p_i$ with weight $t_i + p_i$ into the queue.
Otherwise, no item is inserted into the queue.
Next, the algorithm will extract the smallest-weight task in the queue, and execute it on machine $m$.
Note that if this is task $p_i$, then $t_i$ and $p_i$ are executed consecutively on the same machine.

\begin{algorithm}[t]
\begin{algorithmic}
{ 
\renewcommand{\algorithmicfor}{\textbf{when}}
    \Procedure{Multiple Machines Queue-Execution}{$Q$}
        \While{any machine is not idle}
            \For{machine $m$ becomes idle}
                \State Let $y$ be the previous task executed on machine $m$
                \If{$y$ was a testing task for job $i$}
                    \State Insert the execution task of job $i$ with weight $t_i + p_i$ into $Q$
                \EndIf
                \State {$x \gets$ Extract the smallest-weight} task in $Q$
                \State Execute task $x$ on machine $m$
            \EndFor
        \EndWhile
    \EndProcedure

\caption{Procedure \textbf{Multiple Machines Queue-Execution} $(Q)$}
\label{Alg:new_queue_PM}
}
\end{algorithmic}
\end{algorithm}

The challenge of applying the priority-queue based algorithms is the precedence constraints between the testings and actual executions. 

\bigskip

\runtitle{An unlucky case in the multiple machines setting.} Consider only two available jobs~$k$ and~$j$, where the algorithm tests~$k$ and executes~$j$ untested, and $\beta\cdot t_k \leq t_k + p_k \leq u_j$.
In the scenario where two machines $m_1$ and $m_2$ are available at time~$\tau_1$ and $\tau_2 \geq \tau_1$, respectively, $t_k$ will be assigned to $m_1$, and $p_k$ will be available at time $\tau_1+t_k$. 
If $\tau_2 < \tau_1+t_k$, a no-wait priority-queue based algorithm will assign~$u_j$ to~$m_2$, which violates the real priority, where~$p_k$ should be assigned before~$u_j$ since $t_k+p_k < u_j$.

\bigskip

\runtitle{Priority violation.} A \emph{priority violation} occurs when task $o_x$ would have been assigned before~$o_y$ in the single-machine case, but $o_x$ starts later than~$o_y$ because~$o_x$ is available only after the start time of~$o_y$ in the multiple-machines case.
Since both testing tasks and untested execution tasks are available from the beginning of execution, only a tested execution task can cause a priority violation, i.e., $o_x = p_x$.
Formally, for any task~$o_i$, denote the time it becomes available to execute by~$a(o_i)$, the time the execution starts by~$s(o_i)$, and its weights in the priority queue by $w(o_i)$. 
A priority violation between $p_x$ and $o_y$ occurs when 
\[w(p_x) \leq w(o_y), \hspace{1cm} w(t_x) \leq w(o_y), \hspace{0.5cm}\text{ and }\hspace{0.5cm} a(p_x) > s(o_y).\]

In order to see why the condition $w(t_x) \leq w(o_y)$ is required for a priority violation to occur between $p_x$ and $o_y$, let us first consider $o_y \in \{t_y, u_y \}$.
Both tasks $o_y$ and $t_x$ are available from the start, hence if $w(t_x) > w(o_y)$, both in the single machine case as in the multiple machine case, $o_y$ is executed before~$t_x$.
Now consider $o_y = p_y$ where $w(t_x) > w(p_y)$.
If $w(t_x) < w(t_y)$, the algorithm schedules~$t_x$ before scheduling~$t_y$.
Consequently, $a(p_x) < a(p_y)$, implying that~$p_y$ cannot start before~$p_x$, and thus cannot violate its priority.
On the other hand, if $w(t_x) \geq w(t_y)$, then $t_y$ is scheduled before~$t_x$, and since $w(p_y) < w(t_x)$, task~$p_y$ has priority over $p_x$ (the single machine order would be $t_y, p_y, t_x, p_x$), resulting in no priority violation between~$p_x$ and~$p_y$.

To deal with the priority violation issue, the authors of~\cite{DBLP:journals/algorithmica/GongCH23} defined \emph{last segments} of jobs:

\begin{definition}(Recasting Definition $4$ in~\cite{DBLP:journals/algorithmica/GongCH23})
    Given a list scheduling algorithm, the \emph{last segment} of a job $j$ is 
    \begin{itemize}
        \item the task $u_j$ if $j$ is executed untested, 
        \item the tasks $t_j$ and $p_j$ if $j$ is tested, and the tasks $t_j$ and $p_j$ are on the same machine without another job in between, or
        \item the task $p_j$ otherwise.
    \end{itemize}
\end{definition}

\runtitle{Contribution of job $k$ to job $j$'s completion time.} 
By Lemma~\ref{Lem:Gong}, the authors redefined the \emph{contribution} of job $k$ to the completion time of job $j$, $c(k,j)$, by the time the algorithm spent on $k$ before the starting time of the last segment of $j$.
By the greedy nature, the completion time of job $j$ \cite{DBLP:journals/algorithmica/GongCH23}, 
\begin{equation} \label{eq:cj_parallel}
c_j\leq \frac{1}{m} \sum_{k\neq j} c(k,j) + p^A_j 
\end{equation}

\begin{lemma}[Recasting Lemmas~3 and~4 in \cite{DBLP:journals/algorithmica/GongCH23}]
    \label{Lem:Gong}
    For any priority-queue based algorithm, if task $p_x$ precedes another task $o_y \in \{ t_y, p_y, u_y \}$ in the queue, (that is, if $w(p_x) \leq w(o_y)$) but $a(p_x) > s(o_y)$, then the contribution of job $y$ to the completion time of job $x$ and vice versa is at most

    \[ c(y,x) \leq \begin{cases}
        t_y & \text{if } o_y = p_y\\
        0 & \text{if } o_y = t_y \text{ or } u_y
    \end{cases}\text{ and } c(x,y) \leq \begin{cases}
        t_x + p_x & \text{if } o_y = t_y\\
        t_x & \text{if } o_y = p_y \text{ or } u_y
    \end{cases}\]
\end{lemma}

\begin{proof}
    First, assume that the following claim is true:
    \begin{claim}
        Let $p_x$ and $o_y$ be a pair of tasks whose priority is violated (that is, $w(p_x) \leq w(o_y)$, $w(t_x) \leq w(o_y)$ and $a(p_x) > s(o_y)$).
        Then, $s(t_x) \leq s(o_y)$, and the last segment of job~$x$ is $t_x$ and $p_x$.
    \end{claim}
    
    Now, consider a priority-violated pair of tasks $p_x$ and $o_y$. 
    By the claim, the last segment of job $x$ is $t_x$ and $p_x$, and $s(t_x) \leq s(o_y)$. 
    Thus, the task $o_y$ cannot contribute to job $x$. 
    Therefore, if $o_y$ is $t_y$ or $u_y$, $c(y, x) = 0$.
    On the other hand, if $o_y$ is $p_y$, it might be the case that $s(t_y) < s(t_x)$. 
    Hence, $c(y, x) \leq t_y$ in this case. 

    Next, we prove the bound on the contribution of job $x$ to the completion time of job $y$.
    Since $s(p_x) > s(o_y)$, task $p_x$ does not contribute to the completion time of job $y$ if $o_y$ is (part of) the last segment of job $y$.        
    Thus, if $o_y$ is either $p_y$ or $u_y$, then $c(x,y) \leq t_x$.
    However, when $o_y$ is $t_y$ and the last segment of job $y$ is $p_y$ solely, $p_x$ might also contribute to the completion time of \fhl{$y$} if $s(p_x) \leq s(p_y)$.
    Hence, $c(x,y) \leq t_x + p_x$ in this case.
    
    \smallskip
    
    The last thing to do is to show the claim.    
    Assume aiming towards a contradiction that there is the first pair of priority-violated jobs $x$ and $y$ (that is, $w(p_x) \leq w(o_y)$, $w(t_x) \leq w(o_y)$ and $s(p_x) \geq a(p_x) > s(o_y)$) where $s(t_x) > s(o_y)$ or $p_x$ alone is the last segment of job $x$.
    Since any testing task $t_x$ is available since the start of the execution, $a(t_x) \leq a(o_y)$.
    Then, by $w(t_x) \leq w(o_y)$, also $s(t_x) \leq s(o_y)$, and task $p_x$ must be the last segment of job $x$ by our assumption.

    Since the priority of tasks $p_x$ and $o_y$ was violated, $a(p_x) > s(o_y)$.
    If the moment task $p_x$ was added to the queue, there existed a task $o_z$ in the queue such that $w(o_z) < w(p_x)$, task $o_y$ would violate the priority of task $o_z$, and therefore $p_x$ and $o_y$ are not the first pair of priority-violated tasks.
    Then, it follows that when $p_x$ was added to the queue it was the minimum weight element, and task $p_x$ was scheduled the moment it was available (i.e., $s(p_x) = a(p_x) = s(t_x) + t_x$) on the same machine where $t_x$ is finished.
    Thus, since $t_x$ and $p_x$ are executed consecutively on the same machine, $t_x$ and $p_x$ form the last segment of job $x$, which leads to a contradiction and proves the claim.
    \qedhere
\end{proof}

\subsection{Applying the amortized analysis on multiple machines} 
Our analysis can be further combined with the multiprocessor environment framework by Gong~\cite{DBLP:journals/algorithmica/GongCH23}:

\begin{lemma}
    \label{lem:multNewN}
    Given two jobs $k \leq_o j$, if $\paralgo$ does not test job $k$,
    \[ c(k,j) + c(j,k) \leq (1 + \frac{1}{\beta})u_k \]
\end{lemma}

\begin{proof}
    The proof is the same as the proof of Lemma~\ref{lem:newN} except for the case where $j$ is tested, and tasks $p_j$ and $u_k$ form a priority-violated pair.
    
    This priority-violation happens when $\beta t_j \leq u_k$ and $t_j + p_j \leq u_k$. 
    In this case, by Lemma~\ref{Lem:Gong} (where $x = j$ and $y = k$), $c(k, j) + c(j, k) \leq 0 + t_j \leq \frac{u_k}{\beta}$.
    \qedhere
\end{proof}

\begin{lemma}
    \label{lem:multNewT}
    Given two jobs $k \leq_o j$, if $\paralgo$ tests job $k$,
    \[ c(k,j) + c(j,k) \leq \max \{2t_k + p_k, \beta t_k, (1 + \frac{1}{\beta})(t_k + p_k) \} \]
\end{lemma}

\begin{proof}
    The proof is the same as the proof of Lemma~\ref{lem:newT} except the cases where priority-violation happens.
    More specifically, $p_k$ can form a priority-violation pair with $u_j$, $t_j$, or $p_j$.
    \begin{itemize}
        \item Priority-violation pair $p_k$ and $u_j$:
            By Lemma~\ref{Lem:Gong} with $x = k$ and $y = j$, $c(k, j) + c(j, k) \leq t_x + 0 = t_k$.
        
        \item Priority-violation pair $p_k$ and $t_j$:
            By Lemma~\ref{Lem:Gong} with $x = k$ and $y = j$, $c(k, j) + c(j, k) \leq (t_x + p_x) + 0 = t_k + p_k$.

        \item Priority-violation pair $p_k$ and $p_j$:
            This happens when $t_j \leq t_k$, $\beta t_k \leq t_j + p_j$, and $t_k + p_k \leq t_j + p_j$.
            By Lemma~\ref{Lem:Gong} with $x = k$ and $y = j$, $c(k, j) + c(j, k) \leq t_x + t_y = t_k + t_j \leq 2t_k$.
    \end{itemize}

    Similarly, if $j$ is tested by the algorithm, $p_j$ can form priority-violated pairs with $t_k$ or $p_k$ (recall that this lemma focuses on the case where $k$ is tested).
    \begin{itemize}
        \item Priority-violation pair $p_j$ and $t_k$: 
            This happens when $t_j \leq t_k$ and $t_j + p_j \leq \beta t_k$.
            By Lemma~\ref{Lem:Gong} with $x = j$ and $y = k$, $c(k, j) + c(j, k) \leq 0 + (t_x + p_x) = t_j + p_j \leq \beta t_k$.
        
        \item Priority-violation pair $p_j$ and $p_k$:
            This happens when $t_k \leq t_j$, $\beta t_j \leq t_k + p_k$, and $t_j + p_j \leq t_k + p_k$.
            By Lemma~\ref{Lem:Gong} with $x = j$ and $y = k$, $c(k, j) + c(j, k) \leq t_y + t_x = t_k + t_j \leq \frac{1}{\beta}(t_k + p_k) + t_k$.
            \qedhere
    \end{itemize} 
\end{proof}

\begin{theorem}\label{Thm:multi-p_variable}
    Let $R = \max\{\alpha(1+\frac{1}{\beta}), 1+\frac{1}{\alpha}+\frac{1}{\beta} + \frac{1}{\alpha\beta}, \beta, 2 , 1+\frac{2}{\alpha} \}$ and $r = \max\{\alpha, 1+\frac{1}{\alpha}\}$. 
    When $R \leq 2r$, there is a $R \cdot (\frac{1}{2}+\frac{1}{2m})+ r \cdot (1-\frac{1}{m})$-competitive algorithm for the \textsc{SEU} problem with objective minimizing the total completion time on $m$ multiple machines.
\end{theorem}

\begin{proof}
Similar to the proof of Theorem~\ref{Thm:improved}, using the bounds of Lemma~\ref{lem:multNewN} and Lemma~\ref{lem:multNewT},
    let $R$ be $\frac{c(k,j) + c(j,k)}{\opt{p_k}} \leq \max\{\alpha(1+\frac{1}{\beta}), 1+\frac{1}{\alpha}+\frac{1}{\beta} + \frac{1}{\alpha\beta}, \beta, 2 , 1+\frac{2}{\alpha}\}$.
    Furthermore, let $r$ be $\frac{c(k,k)}{\opt{p_k}} \leq \max\{\alpha, 1+\frac{1}{\alpha}\}$.
    
    \begin{align*}
        \sum_j C^A_j 
        & \stackrel{Eq\eqref{eq:cj_parallel}}{\leq} \sum_j \left( \frac{1}{m} \sum_{k\neq j} c(k,j) + p^A_j \right) \\
        & \stackrel{p^A_k = c(k,k)}{=} \sum_j \left( \frac{1}{m} \sum_k c(k,j) + (1-\frac{1}{m})p^A_j \right) \\
        & \leq \frac{1}{m} \sum_j\sum_{k \leq_o j} (c(k,j) + c(j,k)) + \sum_j(1-\frac{1}{m}) p^A_j\\
        & \stackrel{Lemma~\ref{Thm:improved}}{\leq} \frac{1}{m} \cdot R \cdot \sum_j \sum_{k \leq_o j} p^*_k + r \cdot (1-\frac{1}{m})\sum_j p^*_j\\ 
        & = R\cdot \left( \frac{1}{m}\sum_j\sum_{k \leq_o j} p^*_k + (\frac{1}{2} - \frac{1}{2m})\sum_j p^*_j \right) \\ &\hspace{20pt}+ \left( r\cdot(1-\frac{1}{m})-R\cdot(\frac{1}{2}- \frac{1}{2m}) \right) \sum_j p^*_j\\
        & \stackrel{R\leq 2r}{\stackrel{Eq\eqref{eq:opt_LB_parallel}}{\leq}} R\cdot \cost(\optimal) + \left( r\cdot(1-\frac{1}{m})-R\cdot(\frac{1}{2}-\frac{1}{2m}) \right) \cdot \cost(\optimal) \\
        & = \left( R\cdot(\frac{1}{2} + \frac{1}{2m}) + r\cdot(1-\frac{1}{m}) \right) \cdot \cost(\optimal).
    \end{align*}
    \qedhere
\end{proof}

\begin{corollary}
\label{Cor:multi-p_variable}
    By choosing $\alpha = \phi$ and $\beta = \frac{\phi + \sqrt{5\phi+1}}{2} \leq 2.31652$, the competitive ratio of $\paralgo$ on $m$ multiple machines is $\frac{\phi + \sqrt{5\phi+1}}{2} \cdot(\frac{1}{2}+\frac{1}{2m})+\phi\cdot(1-\frac{1}{m})$.
    When $m$ tends to infinity, the competitive ratio is $2.77630$.
\end{corollary}

\subsection{Uniform testing time}
Next, we further investigate the performance of $\paralgo$ when all jobs have uniform testing times.
Without loss of generality, we assume that for each job $j$, the testing cost equals $t_j = 1$.
Note that when $\alpha < \beta$, the schedule has a special form:
all untested jobs are executed first, before any testing begins.
Once all untested jobs are executed, the algorithm tests the remaining jobs.
Tested jobs $j$ with processing time $p_j \leq \beta - 1$ are executed immediately, while those with longer processing times are postponed until the end.
Thus, we adjust $\paralgo$ a bit by removing the parameter $\beta$. 
The algorithm first executes all jobs that do not require testing, i.e., the jobs $j$ with $u_j \leq \alpha$.
Next, it tests all the remaining jobs.
Only after completing these tests does it execute the tested jobs, in order of non-decreasing actual processing times.

\begin{theorem}\label{Thm:multi-p_uniform}
    Let $R = \max \{ 2, \alpha, 1 + \frac{2}{\alpha} \}$ and $r = \max \{ \alpha, 1 + \frac{1}{\alpha} \}$. When $R\leq 2 \cdot r$,
    and
    the testing times of all jobs are uniform, there is a $\max\{2,\alpha, 1+\frac{2}{\alpha}\}\cdot(\frac{1}{2}+\frac{1}{2m})+\max\{\alpha, 1+\frac{1}{\alpha}\}\cdot(1-\frac{1}{m})$-competitive algorithm for the \textsc{SEU} problem with objective minimizing the total completion time on $m$ multiple machines.
\end{theorem}

\begin{proof}
    Consider jobs $k$ and $j$ where $k \leq_o j$. 
    If at least one of them is not tested, i.e., $u_k \leq \alpha$ or $u_j \leq \alpha$, the contribution $c(k,j) + c(j,k) \leq \alpha$, since all untested jobs are executed before any testing begins.
    It follows that $c(k,j)+c(j,k) \leq \alpha \leq u_k$ if job $k$ is tested.
    On the other hand, if job $k$ is untested, job $j$ proceeds $k$ only when $u_j \leq u_k \leq \alpha$. 
    Therefore, in this case, $c(k,j) + c(j,k)$ is also upper-bounded by $u_k$.
      
    Now, we consider the case where both $u_k \geq \alpha$ and $u_j \geq \alpha$.
    Recall that the algorithm first executes all jobs $j$ with $u_j < 1$, then tests all remaining jobs, and finally executes the tested jobs $j$ in non-decreasing order of $p_j$'s. 
    Thus, $p_k$ and $p_j$ are the only possible pairs of a priority-violated pair, and there is no priority-violated pair where a testing task is involved.
    By Lemma~\ref{Lem:Gong}, for either the case where $x = k$ and $y = j$ or the other case, $x = j$ and $y = k$, $c(k, j) + c(j, k) \leq t_k + t_j = 2$.
    Therefore, similar to the proof of Lemmas~\ref{lem:newN} and~\ref{lem:newT}, $c(k,j) + c(j,k) \leq \max \{u_k, 2 + p_k, 2\}$.

    Then, $\frac{c(k,j) + c(j,k)}{\opt{p_k}} \leq \max\{\alpha, 2, 1+\frac{2}{\alpha}\}$ and $\frac{c(k,k)}{\opt{p_k}} \leq \max\{\alpha, 1+ \frac{1}{\alpha}\}$. 
    By the same techniques in the proof of Theorem~\ref{Thm:amortized} and Theorem~\ref{Thm:multi-p_uniform}, the cost of the algorithm is at most 
    \[\left(R\cdot(\frac{1}{2}+\frac{1}{2m}) + r\cdot(1-\frac{1}{m})\right) \cdot \cost(\optimal).\]
    \qedhere
\end{proof}

\begin{corollary}
    By choosing $\alpha = \phi$ and $\beta = \frac{\phi+\sqrt{5\phi+1}}{4} \leq 2.31652$, the competitive ratio of $\paralgo$ on $m$ multiple machines is $\sqrt{5} \cdot (\frac{1}{2}+\frac{1}{2m}) + \phi \cdot (1-\frac{1}{m})$ for instances with uniform testing times.
    When $m$ tends to infinity, the competitive ratio is $2.73606$.
\end{corollary}

Interestingly, the later results on uniform testing time suggest that when $m = 1$ and $2$, the parameter $\alpha$ should be chosen as $2$ and $\sqrt{3}$, respectively. And it shows that the competitive ratio of the algorithm is $2$ and $2.48206$, respectively.
Note that in Theorems~\ref{Thm:multi-p_variable} and~\ref{Thm:multi-p_uniform}, the ratios when $m= 1$ match the current best results on a single machine.

\subsection{Randomized algorithm}
The framework for multiple machines proposed by Gong et al.~\cite{DBLP:journals/algorithmica/GongCH23} can also be applied to randomized algorithms.

\runtitle{The algorithm.}
For multiple machines, we base our randomized algorithm, called $\randparalgo$, on $\randalgo$ (Algorithm~\ref{Alg:randomized}).
We replace the subroutine described in Algorithm~\ref{Alg:new_queue} by a subroutine that is logically equivalent, however is able to deal with multiple machines (Algorithm~\ref{Alg:new_queue_PM}).

\begin{lemma}
    \label{Lem:expContMult}
    If $\prob_k$ is the probability that $\randparalgo$ tests job $k$, then
    \[\expect[c(k,j) + c(j,k)] \leq (1 + \frac{1}{\beta}) \cdot u_k \cdot (1 - \prob_k) + \max\{ 2t_k + p_k, \beta t_k, t_k + (1 + \frac{1}{\beta})\cdot p_k\}\cdot \prob_k\]
\end{lemma}

\begin{proof}
    By Lemma~\ref{lem:multNewN}, $\expect [c(k,j) + c(j,k) \mid k \text{ is not tested}] \leq (1 + \frac{1}{\beta}) u_k$.
    And by Lemma~\ref{lem:multNewT}, $\expect [c(k,j) + c(j,k) \mid k \text{ is tested}] \leq  \max\{ 2t_k + p_k, \beta t_k, t_k + (1 + \frac{1}{\beta})p_k\}$.
    Therefore, as $\prob_k$ denotes the probability that $k$ is tested, it follows that $\expect[c(k,j) + c(j,k)] \leq (1 + \frac{1}{\beta}) u_k (1 - \prob_k) + \max\{ 2t_k + p_k, \beta t_k, t_k + (1 + \frac{1}{\beta})p_k\}\prob_k$.
\end{proof}

\begin{theorem}
    \label{Thm:randomized_parallel}
    Let,
    \[\factorOne = \max\limits_k \frac{(1 + \frac{1}{\beta}) u_k (1 - \prob_k) + \max\{ 2t_k + p_k, \beta t_k, t_k + (1 + \frac{1}{\beta})p_k\}\prob_k}{\opt{p_k}} \text{, and}\]
    \[\factorTwo = \max\limits_j \frac{u_j(1 - \prob_j) +  (t_j + p_j)\prob_j}{\opt{p_j}}\text{.}\]
    When $R \leq 2r$, the expected competitive ratio of $\randparalgo$ is at most
    \[\factorOne \cdot (\frac{1}{2} + \frac{1}{2m} ) + \factorTwo \cdot (1 - \frac{1}{m})
    \]
\end{theorem}

\begin{proof}
The proof is similar to the proof of Theorem~\ref{Thm:randomized}.
By Lemma~\ref{Lem:expContMult}, $\expect[c(k,j) + c(j,k)] \leq \ (1+\frac{1}{\beta})u_k(1-\prob_k) + \max\{2t_k + p_k, \beta t_k, (1+\frac{1}{\beta})(t_k+p_k)\}\prob_k$.

Recall that, for multiple machines, the completion time of job $j$ can be bounded by $C_j \leq \frac{1}{m} \sum\limits_{k\neq j} c(k,j) + \alg{p_j}$.
Thus, it follows by linearity of expectation that $\expect[C_j] \leq \frac{1}{m} \sum\limits_{k\neq j} \expect[c(k,j)] + \expect[\alg{p_j}]$ and, $\expect[\sum\limits_{j=1}^n C_j] = \sum\limits_{j=1}^n \expect[C_j]$.

Then,
{
\allowdisplaybreaks 
\begin{align*}
    &\expect[\sum\limits_{j=1}^n C_j] \leq \sum\limits_{j=1}^n \left[ \frac{1}{m} \sum\limits_{k\neq j} \expect[c(k,j)] + \expect[\alg{p_j}] \right] \\
    &= \frac{1}{m} \sum\limits_{j=1}^n \sum\limits_{k \leq_o j} \Bigl[ \expect[c(k,j)] + \expect[c(j,k)] \Bigr] + (1 - \frac{1}{m}) \sum\limits_{j=1}^n \expect[\alg{p_j}]  \\
    &\leq \frac{1}{m} \sum\limits_{j=1}^n \sum\limits_{k \leq_o j} \Bigl[  (1 + \frac{1}{\beta}) u_k (1 - \prob_k) + \max\{ 2t_k + p_k, \beta t_k, t_k + (1 + \frac{1}{\beta})p_k\}\prob_k \Bigr] \\&\hspace{2cm}+ (1 - \frac{1}{m}) \sum\limits_{j=1}^n \left[ u_j(1 - \prob_j) +  (t_j + p_j)\prob_j \right] \\
    &\leq \left( \max\limits_k \frac{(1 + \frac{1}{\beta}) u_k (1 - \prob_k) + \max\{ 2t_k + p_k, \beta t_k, t_k + (1 + \frac{1}{\beta})p_k\}\prob_k}{\opt{p_k}} \right) \frac{1}{m} \sum\limits_{j=1}^n \sum\limits_{k \leq_o j} \opt{p_k} \\&\hspace{2cm} + \left( \max\limits_j \frac{u_j(1 - \prob_j) +  (t_j + p_j)\prob_j}{\opt{p_j}} \right) (1 - \frac{1}{m}) \sum\limits_{j=1}^n \opt{p_j} \\
\end{align*}
}

\[\text{Let, } \factorOne = \max\limits_k \frac{(1 + \frac{1}{\beta}) u_k (1 - \prob_k) + \max\{ 2t_k + p_k, \beta t_k, t_k + (1 + \frac{1}{\beta})p_k\}\prob_k}{\opt{p_k}} \text{, and}\]
\[ \factorTwo = \max\limits_j \frac{u_j(1 - \prob_j) +  (t_j + p_j)\prob_j}{\opt{p_j}} \]
Then, 

\begin{align*}
    \expect[\sum\limits_{j=1}^n C_j] &\leq \factorOne \cdot \frac{1}{m} \sum\limits_{j=1}^n \sum\limits_{k \leq_o j} \opt{p_k} + \factorTwo \cdot (1 - \frac{1}{m}) \sum\limits_{j=1}^n \opt{p_j} \\
    &= \factorOne \cdot \left( \frac{1}{m} \sum\limits_{j=1}^n \sum\limits_{k \leq_o j} \opt{p_k} + (\frac{1}{2} - \frac{1}{2m} ) \sum\limits_{j=1}^n \opt{p_j} \right) \\
    &\hspace{20pt}- \factorOne \cdot (\frac{1}{2} - \frac{1}{2m} ) \sum\limits_{j=1}^n \opt{p_j} + \factorTwo \cdot (1 - \frac{1}{m}) \sum\limits_{j=1}^n \opt{p_j} \\
    &= \factorOne \cdot \left( \frac{1}{m} \sum\limits_{j=1}^n \sum\limits_{k \leq_o j} \opt{p_k} + (\frac{1}{2} - \frac{1}{2m} ) \sum\limits_{j=1}^n \opt{p_j} \right) \\
    &\hspace{20pt}+ \left( \factorTwo \cdot (1 - \frac{1}{m}) - \factorOne \cdot (\frac{1}{2} - \frac{1}{2m}) \right) \cdot \sum\limits_{j=1}^n \opt{p_j} \\
    &\stackrel{R\leq 2r}{\stackrel{Eq\eqref{eq:opt_LB_parallel}}{\leq}} \factorOne \cdot \cost(\optimal) + \left( \factorTwo \cdot (1 - \frac{1}{m}) - \factorOne \cdot (\frac{1}{2} - \frac{1}{2m}) \right) \cdot \cost(\optimal) \\
    &= \left(\factorOne \cdot (\frac{1}{2} + \frac{1}{2m} ) + \factorTwo \cdot (1 - \frac{1}{m}) \right) \cdot \cost(\optimal) \\
\end{align*}
\qedhere
\end{proof}

\begin{corollary}
    By choosing $\beta = 2$ and  $\prob_k = 0$ if $r_k < 1$, $\prob_k = 1$ if $r_k > 3$ and 
    \[ \prob_k = \frac{(\beta + 1)(r_k - 1)}{\beta(\max \{ \frac{2}{r_k} + 1, \frac{\beta}{r_k}, (1 + \frac{1}{\beta})(1 + \frac{1}{r_k})\} - \max \{ 2, \beta, 1 + \frac{1}{\beta} \} + r_k - 1) + r_k - 1} \]
    if $r_k \in [1,3]$, $\randparalgo$ has expected competitive ratio  
    \[ 2.152271 \cdot (\frac{1}{2} + \frac{1}{2m} ) + 1.434847 \cdot (1 - \frac{1}{m})\] which is $2.152271$ when $m=1$ and $2.510983$ when $m$ tends to infinity.
\end{corollary}

\begin{proof}
From Lemma~\ref{Lem:bestBeta} it follows that $\factorOne$ is minimized for $\beta = 2$.
For this case the probability can be simplified to $\prob_k = \frac{3r_j^2 - 3r_j}{3r_j^2-4r_j+3}$, and $\factorOne \leq 2.152271$.
It remains to bound $\factorTwo$.
\begin{itemize}
    \item First, assume $\opt{p_j} = u_j$, then,
\begin{align*}
    \frac{u_j(1 - \prob_j) +  (t_j + p_j)\prob_j}{\opt{p_j}} &\leq (1 - \prob_j) +  (\frac{t_j}{u_j} + 1)\prob_j \\
    &= (1 - \prob_j) +  (\frac{1}{r_j} + 1)\prob_j \\
    &= 1 + \frac{\prob_j}{r_j}
\end{align*}

Using the formula for $\prob_j = \frac{3r_j^2 - 3r_j}{3r_j^2-4r_j+3}$, it follows that 
$1 + \frac{\prob_j}{r_j} = 1 + \frac{3r_j - 3}{3r_j^2 - 4r_j + 3}$.
This function has maximum $\frac{7 + 3\sqrt{6}}{10} \leq 1.434847$ when $r_j = 1 + \sqrt{\frac{2}{3}}$.

    \item Next, assume $\opt{p_j} = t_j + p_j$, then,
\begin{align*} 
    \frac{u_j(1 - \prob_j) +  (t_j + p_j)\prob_j}{\opt{p_j}} &\leq r_j(1 - \prob_j) + \prob_j \\
    &= r_j - r_j\prob_j + \prob_j
\end{align*}

Using the formula for $\prob_j = \frac{3r_j^2 - 3r_j}{3r_j^2-4r_j+3}$, it follows that 
$r_j - r_j\prob_j + \prob_j = \frac{2r_j^2}{3r_j^2 - 4r_j + 3}$.
This function has maximum $\frac{6}{5}=1.2$ when $r_j = \frac{3}{2}$.
\end{itemize}
It follows that $\frac{u_j(1 - \prob_j) +  (t_j + p_j)\prob_j}{\opt{p_j}}$ is at most $1.434847$.
Therefore, if we let $r = 1.434847$ and $R = 2.152271$, which is less than $2r$. By Theorem~\ref{Thm:randomized_parallel}, the expected cost of $\randparalgo$ is at most
\[ \left( 2.152271 \cdot (\frac{1}{2} + \frac{1}{2m} ) + 1.434847 \cdot (1 - \frac{1}{m}) \right) \cdot \cost(\optimal) \]
Which is $2.152271$ when $m=1$ and $2.510983$ when $m$ tends to infinity.
\qedhere
\end{proof}

\section{Conclusion}\label{sec:conclusion}
In this work, we study a scheduling problem with explorable uncertainty.
We enhance the analysis framework proposed in the work~\cite{DBLP:conf/waoa/AlbersE20} by introducing amortized perspectives. 
Using the enhanced analysis framework, we are able to balance the penalty incurred by different wrong decisions of the online algorithm.
In the end, we improve the competitive ratio on a single machine significantly from $4$ to $2.316513$ (deterministic) and from $3.3794$ to $2.152271$ (randomized).
An immediate open problem is if one can further improve the competitive ratio by a deeper level of amortization. 

Additionally, we show that preemption does not improve the competitive ratio in the current problem setting, where all jobs are available at first. 
It may not be true in the fully online setting, where jobs can arrive at any time. Thus, another open problem is to study the problem in the fully online model.

Furthermore, we proposed an algorithm for multiple machines
with competitive ratio $2.77629-(0.45977/m)$ (deterministic) and $2.51098-(0.3587/m)$ (randomized).
We note that recently another group in parallel has also extended our work to multiple machines~\cite{DBLP:conf/faw/BuldS25}.



\bibliographystyle{splncs04}
\bibliography{reference}

\appendix
\section{Appendix}
\runtitle{Our contribution on makespan.}
In addition to the total completion time, we also study the objective of minimizing the makespan, where the highest load of machines is minimized.
Albers and Eckl \cite{DBLP:conf/wads/AlbersE21} examined this problem in the non-preemptive setting, in which once a job is tested, its actual processing time must follow its testing time immediately on the same machine where the job is tested. 
By making upper limits extremely large and forcing algorithms to test all jobs, the authors derived a competitive ratio lower bound of $2 - \frac{1}{m}$.
However, the adversary overlooked the impact of uncertain processing time on the competitive ratio.
We carefully select an appropriate upper limit that puts online algorithms in a predicament and thus improves the lower bound.

\begin{theorem}
    For \textsc{SEU} problem under the non-preemptive setting and aiming at minimizing the makespan,
    the competitive ratio is at least $2 - \frac{1}{2m}$
    even for uniform testing time and uniform upper limit.
\end{theorem}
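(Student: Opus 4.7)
The plan is to construct an adversarial instance family with uniform testing time $t = 1$, uniform upper limit $u = (2m+1)/2$, and $n = m(m-1)+1$ jobs, paired with an adaptive revelation strategy. The adversary reveals $p_j = 0$ for any test whose testing phase begins before a threshold time $\tau = m - 1$, reveals $p_j = u$ for tests starting at or after $\tau$, and sets $p_j = 0$ at the end for jobs the algorithm leaves untested. The ``free window'' $[0, \tau)$ thus provides exactly $\tau \cdot m = m(m-1)$ units of test time across all machines---one unit short of accommodating all $n$ jobs.

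The canonical algorithmic behaviour that matches the bound is to run $m - 1$ parallel rounds of tests, one job per machine per round: all $m(m-1)$ tests start at times $0, 1, \ldots, m-2 < \tau$ and are revealed with $p = 0$, so each machine reaches time $m - 1$ with load $m - 1$. One job remains, and the algorithm's best move is to execute it untested, bringing one machine's load to $(m-1) + u = (4m-1)/2$. Because every $p_j$ equals $0$ in this case, OPT tests every job and achieves makespan exactly $m$ by balancing $n$ unit-sized jobs across $m$ machines, giving ratio $(4m-1)/(2m) = 2 - 1/(2m)$. Testing the leftover job instead would produce load $m + u = (4m+1)/2$ on its machine while OPT would have makespan $u = m + 1/2$, giving the larger ratio $(4m+1)/(2m+1)$, so the algorithm indeed prefers to leave the job untested.

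To rule out alternative strategies, I would apply a charging argument. Every test pushed past $\tau$ costs an extra $u$ on some machine (via the $p = u$ reveal) while OPT pays only $u$ for the corresponding job; every untested job costs $u$ on some machine while OPT tests it for $t = 1$. Since $n$ exceeds the free capacity by one, at least one such penalty must be incurred regardless of scheduling, forcing some machine's load to at least $(4m-1)/2$. The main obstacle is handling hybrid strategies that concentrate tests sequentially on fewer machines (still fitting inside the free window but wasting machine capacity) or that mix early testing with untested jobs across machines; confirming that no such mix can simultaneously balance all loads below $(4m-1)/2$ requires a careful case analysis calibrated to the specific choice $u = (2m+1)/2$.
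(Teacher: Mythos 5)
Your choice of instance is a genuine and more economical variant of the paper's: you use $n = m(m-1)+1$ jobs with $u = (2m+1)/2$ and threshold $\tau = m-1$, whereas the paper uses $n = 2m(m-1)+1$ jobs with $u = 2m$ and threshold $2(m-1)+1$. Both choices are consistent for the canonical algorithm ($m(m-1)$ free tests plus one leftover), and your calculation that running the leftover untested gives $(4m-1)/(2m)$ while testing it gives $(4m+1)/(2m+1)$ is correct.

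However, there is a genuine gap in your adversary, and it is not the ``careful case analysis'' you flag at the end. Your revelation rule sets $p_j = u$ for \emph{every} job whose test starts at or after $\tau$, whereas the paper's adversary sets $p_j = 0$ for every job except a \emph{single} carefully chosen job $j'$. That difference is essential: when an algorithm performs several late tests, your rule hands $\textrm{OPT}$ several jobs of size $u$, which inflates $\textrm{OPT}$'s makespan and collapses the ratio. Concretely, take $m=3$ ($n=7$, $u = 7/2$, $\tau = 2$) and the algorithm that puts one free test and one late test on machine~1, two free tests and one late test on machine~2, and two free tests on machine~3. Both late tests return $p = u$, the algorithm's makespan is $13/2$, but $\textrm{OPT}$ now has two jobs of size $7/2$ and five of size $1$, giving $\textrm{OPT} = 9/2$ (pack $\{7/2,1\},\{7/2,1\},\{1,1,1\}$); the ratio is $13/9 \approx 1.44 < 11/6 = 2 - 1/(2m)$. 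In general, as the number $L$ of late tests grows the ratio tends to $1$, since both the algorithm's and $\textrm{OPT}$'s makespans scale as $\Theta(Lu/m)$. So the bound fails for a whole family of algorithms under your adversary. The repair is to adopt the paper's revelation rule: reveal $p_j = 0$ everywhere except for one designated job $j'$ (say the first test starting at or after $\tau$, or, if none, the untested leftover), for which $p_{j'} = u$ precisely when the algorithm tests it. With that single change your parameters do give $2 - 1/(2m)$, because $\textrm{OPT}$ then has at most one job of size $u$ and the two cases (test $j'$ versus run it untested) yield the two ratios you already computed, both at least $2 - 1/(2m)$.
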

\begin{proof}
    We present an adversary that makes any algorithm~$A$ be at least $(2 - 1 / (2 m))$-competitive.
    
    The adversary generates an instance that depends on the behavior of~$A$.
    It first releases $2 m (m - 1) + 1$ jobs, each with testing time~$t_j = 1$ and upper limit~$u_j = 2 m$.
    Due to the large amount of jobs, $A$ must assign at least one of them to time $2 (m - 1) + 1$ or later.
    This is because assigning the jobs, each occupies at least time $1$, on $m$ machines requires $\lceil (2 m (m - 1) + 1) / m \rceil = 2 (m - 1) + 1$ time.
    Afterwards, the adversary sets all $p_j = 0$ for all jobs $j$ except job $j'$, which is any job picked by the adversary from the jobs assigned to time $2 (m - 1) + 1$ or later.
    Depending on $A$'s behavior on $j'$, there are two cases:

    Case 1: $A$ tests $j'$.
    In this case, the adversary sets $p_{j'} = 2 m$, i.e., it makes $j'$ have the processing time equal to the upper limit.
    Since jobs must be scheduled non-preemptively, the completion time of $j'$ is at least $2 (m - 1) + 1 + p_{j'} = 4 m - 1$.
    In contrast, $\optimal$ does not test $j'$ and schedules it solely on one of the machines.
    The other $2 m (m - 1)$ jobs are all tested and assigned evenly on the other machines.
    The $\optimal$ schedule has load $2 m$ for all machines.
    Thus, the competitive ratio of $A$ is at least $(4 m - 1) / (2 m) = 2 - 1 / (2 m)$.

    Case 2: $A$ does not test $j'$.
    The completion time of $j'$ is at least $2 (m - 1) + u_{j'} = 4 m - 2$.
    To benefit $\optimal$, the adversary sets $p_{j'} = 0$.
    $\optimal$ tests all the jobs and assigns them evenly.
    The makespan of $\optimal$ is $\lceil (2 m (m - 1) + 1) / m \rceil = 2 m - 1$.
    Thus, the competitive ratio of $A$ is at least $(4 m - 2) / (2 m - 1) = 2$.

    Finally, the lower bound of competitive ratio is obtained by the smaller one of the two cases, which is $2 - 1 / (2 m)$.
    \qed
\end{proof}

The lower bound indicates that \textsc{SEU} problem with makespan minimization is more uncertain than its counterpart in the pure online model, where a $(2 - \frac{1}{m})$-competitive algorithm exists.

\end{document}